\documentclass[letterpaper,twocolumn,10pt]{article}
\usepackage{usenix}

\usepackage{tikz}
\usepackage{amsmath}

\usepackage{filecontents}
\usepackage{todonotes}

\usepackage{amsmath}
\usepackage{amsthm}
\usepackage{pifont}
\usepackage{xspace}
\usepackage{threeparttable}
\usepackage{color}
\usepackage{colortbl}
\usepackage{xurl}
\usepackage{hyperref}
\usepackage{subcaption}
\usepackage{paralist}
\usepackage{wrapfig}
\usepackage{multirow}
\usepackage{listings}
\usepackage{comment}
\usepackage{booktabs}
\usepackage{makecell}
\usepackage{setspace}
\usepackage{boxedminipage}
\usepackage{graphicx}
\usepackage{minibox}
\usepackage{pifont}
\usepackage{color}
\usepackage{xcolor}
\usepackage{subfiles}
\usepackage[utf8]{inputenc}
\usepackage[english]{babel}
\usepackage{ulem}
\usepackage{adjustbox}
\definecolor{blue}{RGB}{10, 10, 200}
\usepackage{amsmath}
\usepackage{titlesec}
\usepackage{enumitem}
\usepackage{amssymb}
\usepackage{diagbox}
\usepackage{balance}
\usepackage{multirow} 
\usepackage{booktabs}

\def\BibTeX{{\rm B\kern-.05em{\sc i\kern-.025em b}\kern-.08em
    T\kern-.1667em\lower.7ex\hbox{E}\kern-.125emX}}

\newcommand{\bheading}[1]{{\vspace{4pt}\noindent{\textbf{#1}}}}
\newcommand{\iheading}[1]{{\vspace{2pt}\noindent{\textit{#1}}}}
\newcolumntype{?}{!{\vrule width 1pt}}

\newcounter{note}[section]
\renewcommand{\thenote}{\thesection.\arabic{note}}

\ifx\submission\undefined
    \newcommand{\Sadoghi}[1]{{\color{cyan}{\textbf{Sadoghi:} #1}}}
    \newcommand{\niu}[1]{\refstepcounter{note}{\bf\textcolor{red}{$\ll$JN~\thenote: {\sf #1}$\gg$}}}
    \newcommand{\hz}[1]{\refstepcounter{note}{\bf\textcolor{purple}{$\ll$HZ~\thenote: {\sf #1}$\gg$}}}
    \newcommand{\shaokang}[1]{\refstepcounter{note}{\bf\textcolor{blue}{$\ll$SK~\thenote: {\sf #1}$\gg$}}}
    \newcommand{\Dakai}[1]
    {{\color{orange}{\textbf{Dakai:} #1}}}
    
    \newcommand{\yyu}[1]{\textcolor{black}{#1}}
    
     \newcommand{\y}[1]{\textcolor{black}{#1}}

\else
    \newcommand{\Sadoghi}[1]{}
    \newcommand{\niu}[1]{}
    \newcommand{\hz}[1]{}
    \newcommand{\shaokang}[1]{}
    \newcommand{\Dakai}[1]{}

\fi

\newcommand{\secref}[1]{\mbox{Sec.~\ref{#1}}\xspace}

\newcommand{\figref}[1]{\mbox{Fig.~\ref{#1}}}

\newcommand{\ignore}[1]{}

\newcommand{\ie}{\textit{i.e.}\xspace}
\newcommand{\eg}{\textit{e.g.}\xspace}

\newcommand{\sysname}{\textsc{Manta}\xspace}

\newcounter{packednmbr}
\newenvironment{packedenumerate}{
\begin{list}{\thepackednmbr.}{\usecounter{packednmbr}
\setlength{\itemsep}{0pt}
\addtolength{\labelwidth}{4pt}
\setlength{\leftmargin}{12pt}
\setlength{\listparindent}{\parindent}
\setlength{\parsep}{3pt}
\setlength{\topsep}{3pt}}}{\end{list}}

\newenvironment{packeditemize}{
\begin{list}{$\bullet$}{
\setlength{\labelwidth}{0pt}
\setlength{\itemsep}{2pt}
\setlength{\leftmargin}{\labelwidth}
\addtolength{\leftmargin}{\labelsep}
\setlength{\parindent}{0pt}
\setlength{\listparindent}{\parindent}
\setlength{\parsep}{1pt}
\setlength{\topsep}{1pt}}}{\end{list}}

\newtheorem{theorem}{Theorem}
\newtheorem{lemma}{Lemma}

\newtheorem{definition}{Definition}

\usepackage{tikz}
\usetikzlibrary{arrows.meta, shapes.geometric, positioning}

\newcommand{\Name}[1]{\textnormal{\textsc{#1}}}

\usepackage{algorithmic}
\usepackage{algorithm}

\newcommand{\Call}[2]{\Name{#1}(#2)}
\newenvironment{myprotocol}{
    \small
    \smallskip
    \algsetup{linenosize=\footnotesize}
    \begin{algorithmic}[1]
        
        \newcommand{\SPACE}{\item[]}
        
        \makeatletter
            \newcommand{\EVENT}[1]{\STATE \textbf{event} ##1 \textbf{do}\begin{ALC@g}}
            \newcommand{\ENDEVENT}{\end{ALC@g}}
        \makeatother
        
        \makeatletter
            \newcommand{\FUNCTION}[2]{\STATE \textbf{function} \Name{##1}(##2) \textbf{do}\begin{ALC@g}}
            \newcommand{\ENDFUNCTION}{\end{ALC@g}}
        \renewcommand{\COMMENT}[1]{\hfill $\triangleright$ ##1}
        \makeatother
    
}{
    \end{algorithmic}%
}

\newcounter{insightcounter}

\begin{document}

\title{\sysname: Unlocking DAG Flexibility in Asynchronous BFT}

\author{
{\rm Mohan Yu}\textsuperscript{1,$\dagger$},
{\rm Chenlin Wu}\textsuperscript{2,$\dagger$},
{\rm Jianyu Niu}\textsuperscript{2,$\ddagger$},
{\rm Cong Wang}\textsuperscript{2},
{\rm Chenyuan Wu}\textsuperscript{2}\\
\textsuperscript{1}Southern University of Science and Technology\\
\textsuperscript{2}City University of Hong Kong
} 
\maketitle
\begingroup
\renewcommand{\thefootnote}{\fnsymbol{footnote}}
\footnotetext[2]{These authors contributed equally.}
\footnotetext[3]{Part of the work was done when Jianyu Niu visited the Research Institute of Trustworthy Autonomous Systems of SUSTech.}
\endgroup

\begin{abstract}
DAG-based BFT protocols marked a shift in consensus design by replacing single-leader proposal paths with concurrent block proposals linked through a shared graph, where references encode consensus evidence. However, under heterogeneous wide-area networks (WANs), existing designs suffer from prolonged rounds and degraded performance due to straggler effects. The root cause is rigid round advancement: replicas move in synchronized rounds, waiting for enough references from the previous round before advancing.

We propose a flexible DAG growth framework for asynchronous BFT consensus. Our key observation is that while reference quorums are necessary for safety and liveness, they need not block every round advancement. Instead, the connectivity required for commitment can be accumulated across rounds. Within the framework, we instantiate \sysname, which combines \textit{flexible DAG growth} to advance by relaxing round synchronization and \textit{flexible coin} to trigger leader election and finality earlier. Experiments on heterogeneous WAN deployments show that \sysname reduces latency \yyu{by approximately 30\%–70\% in our evaluation.}

\end{abstract}

\section{Introduction} \label{sec:introduction}
Byzantine Fault-Tolerant (BFT) consensus underpins modern distributed systems, particularly geo-distributed services such as shared databases~\cite{Basil}, distributed storage~\cite{IPFS}, and blockchains~\cite{algorand,hanke2018dfinity}. It enables mutually untrusted nodes to agree on a common sequence of transactions despite Byzantine failures~\cite{castro1999practical,yin2019hotstuff}. In such geo-distributed settings, asynchronous BFT protocols~\cite{guo2020dumbo,li2025tockowl,duan2018beat,keidar2021all,miller2016honey} have emerged as an attractive choice, as they guarantee liveness without timing assumptions and remain robust under variable message delays, transient congestion, and WAN dynamics~\cite{miller2016honey}.

Recent DAG-based protocols have made asynchronous consensus practical at high throughput by alleviating the bottlenecks of earlier designs~\cite{guo2020dumbo,miller2016honey}. Examples such as DAG-Rider~\cite{keidar2021all} and Tusk~\cite{danezis2022narwhal} organize protocol execution as a Directed Acyclic Graph (DAG), where replicas continuously propose blocks, referred to as \textit{vertices}. 
Each vertex carries transaction payloads and typically references a quorum of at least $2f+1$ previously delivered vertices, forming directed edges that connect vertices. This minimum number of references, called the \textit{reference threshold}, governs round advancement and the connectivity of the growing DAG. Consensus is derived from the graph structure: once sufficient connectivity emerges, randomized leader election (\eg, common coin), combined with local DAG observations, derives a final total order and commits the causal history of selected leaders.

Yet the performance of existing DAG protocols still degrades sharply in heterogeneous WAN deployments, where cross-region links vary widely, transient slowdowns are common, and replicas often differ in bandwidth and responsiveness~\cite{enes2020state}. A few lagging replicas can slow system-wide progress, increasing commit latency and reducing throughput. Fig.~\ref{fig:dissectWAN} shows that under such heterogeneity, latency increases by around 40\%, throughput drops accordingly, and round duration grows by more than 25\%.

\begin{figure}[!t]
    \centering
    \begin{subfigure}{0.48\linewidth}
        \centering
        \includegraphics[width=\linewidth]{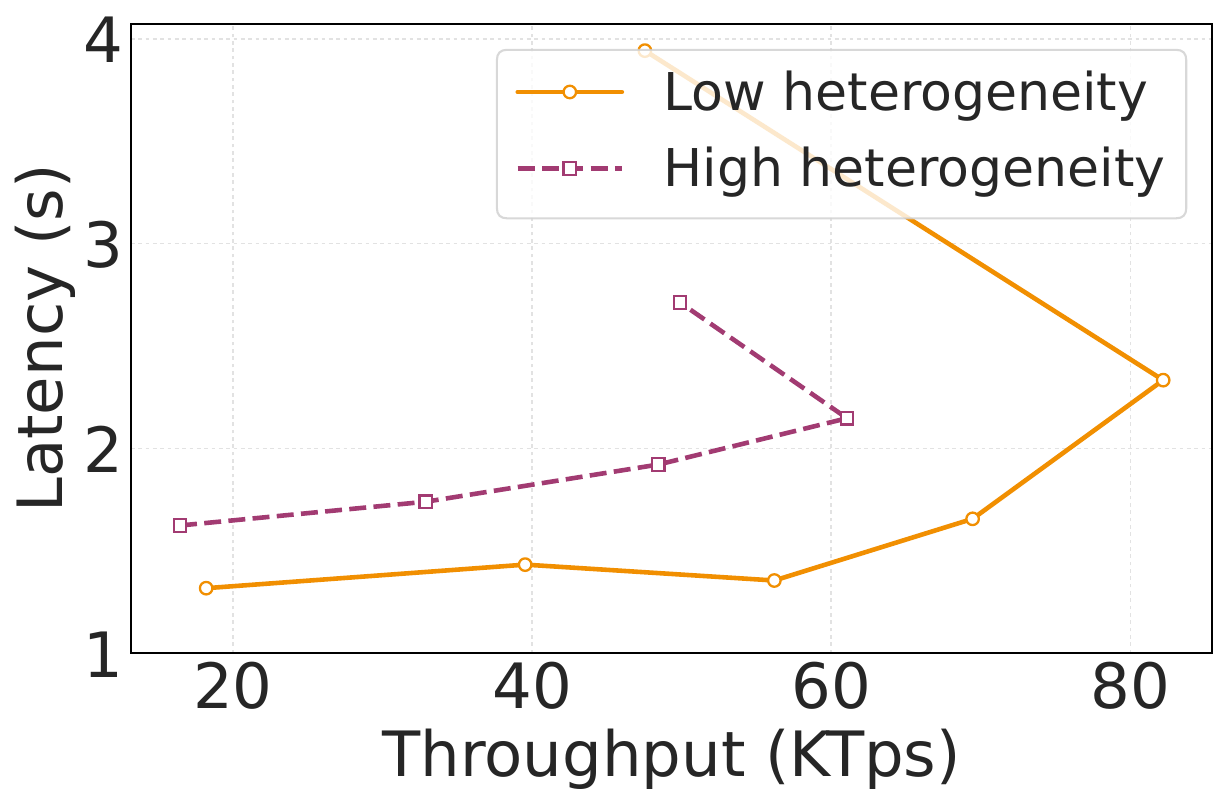}
        \caption{Latency and TPS.}
        \label{fig:intro_1}
    \end{subfigure}
    \hfill
    \begin{subfigure}{0.48\linewidth}
        \centering
        \includegraphics[width=\linewidth]{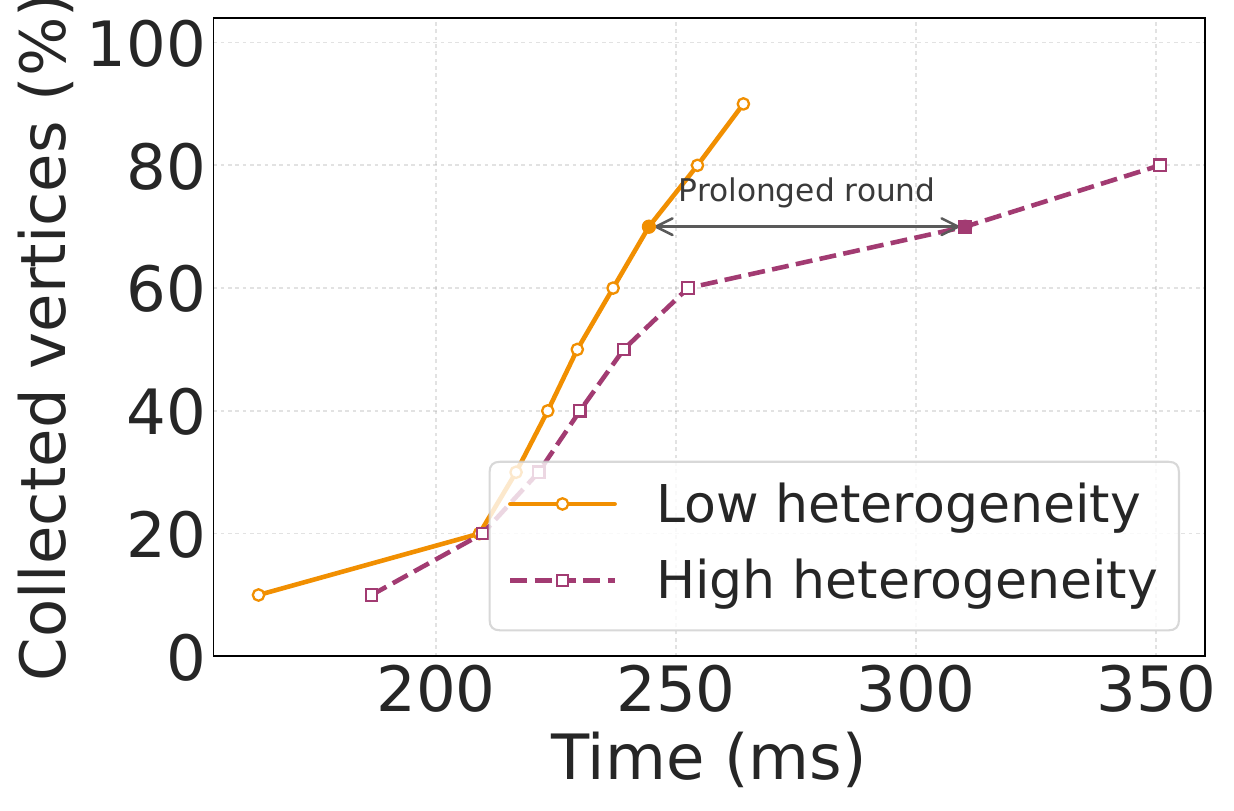}
        \caption{Prolonged round.}
        \label{fig:intro_2}
    \end{subfigure}
    \caption{Performance degradation in heterogeneous WANs.}
    \label{fig:dissectWAN}
\end{figure}

The root cause is a hidden synchronization barrier in round advancement: replicas must gather a quorum of references before moving to the next round~\cite{keidar2021all,jovanovic2025mahi,danezis2022narwhal,huang2025chitu}. This rule strengthens graph connectivity and increases the likelihood that elected leaders can commit even under adversarial schedules. Yet it also serializes progress: proposals are parallelized, but rounds still wait for stragglers.

Simply lowering the synchronization barrier is not enough. Reducing the reference threshold below $2f+1$ may shorten rounds, but in asynchronous adversarial settings, an adversary can selectively delay blocks, causing honest replicas to observe inconsistent structures and fail to guarantee liveness~\cite{keidar2021all}. 
More fundamentally, existing designs conflate two distinct goals: using the same per-round barrier both to establish connectivity and to regulate progress. This raises a key question: \textit{if strong connectivity is needed for safe commitment, must every round wait for it before making progress?}

Our answer is no. The synchronization barrier should serve as a condition for safety and liveness, not a blocking condition for round advancement. A high barrier preserves the connectivity required for stable commits, but enforcing it in every round is unnecessarily restrictive. In heterogeneous WANs, repeatedly waiting for slow previous-round vertices inflates latency and allows a few stragglers to dictate system progress.

Guided by this insight, we develop a flexible DAG growth framework for asynchronous BFT consensus. The framework makes explicit a fundamental tradeoff in DAG protocols: lowering the per-round barrier speeds up progress, yet can weaken the {connectivity} needed for commitment under adversarial schedules. It characterizes how reference threshold and connectivity accumulation affect commit latency, and derives liveness bounds that explain when aggressive designs with low threshold can fail. 
Building on this analysis, the framework introduces a coverage-based abstraction that decouples \textit{how much} connectivity is required from \textit{how} it is formed. 
Concretely, it provides boost edges that allow connectivity to accumulate across rounds, rather than being rebuilt in each round.
Finally, the framework defines a four-dimensional parameter space that abstracts the generic DAG growth process, enabling systematic analysis of existing protocols and the discovery of new ones.

Within the framework, we instantiate \sysname, a concrete asynchronous DAG-based protocol that converts already-formed connectivity into faster progress and earlier commits. \sysname is enabled by two complementary design elements: \textit{flexible growth} and \textit{flexible coin}. Flexible growth allows replicas to advance with lower reference thresholds while preserving worst-case liveness through cross-round coverage. Flexible coin then exploits locally emerged connectivity to trigger the common coin as soon as commitment becomes likely, enabling earlier leader election and finalization instead of waiting for fixed wave boundaries. If early commitment is unavailable, execution safely falls back to the normal path without affecting correctness or performance.

We built a prototype of \sysname and evaluated it against four state-of-the-art asynchronous counterparts: Tusk~\cite{danezis2022narwhal}, Chitu~\cite{huang2025chitu}, Mahi-Mahi~\cite{jovanovic2025mahi}, and DAG-Rider~\cite{keidar2021all}. We conducted experiments under diverse conditions on CloudLab~\cite{duplyakin2019design} and AWS platforms with up to 50 replicas. Results show that \sysname consistently outperforms other baselines \y{under fault-free heterogeneous deployments.}

\bheading{Contributions.} This paper makes four contributions:
\begin{packeditemize}
    \item We identify a fundamental limitation of existing asynchronous DAG-based protocols: rigid per-round synchronization barriers unnecessarily gate progress and inflate commit latency under heterogeneous WANs.

    \item We develop a unified framework for asynchronous DAG growth that formalizes the tradeoff between fast progress and sufficient connectivity for commitment, and explains the liveness limits of aggressive low-threshold designs. 

    \item We design \sysname, a novel asynchronous DAG-based BFT protocol that decouples connectivity from round advancement. Through flexible growth and flexible coin, \sysname achieves faster progress and earlier commitment while preserving worst-case safety and liveness.

    \item We implement \sysname and evaluate it against state-of-the-art protocols under realistic heterogeneous WANs. \sysname reduces latency by up to 70\%, improves throughput by up to 18\% at scale, and remains performant under faults.
\end{packeditemize}

\section{Background and Motivation}
\subsection{DAG-Based BFT Consensus} \label{sec:dag basic}
DAG-based BFT protocols proceed in \textit{rounds}, where consecutive rounds are grouped into \textit{waves}. In each round, every replica proposes a block, referred to as a \textit{vertex}. Each vertex references at least $2f+1$ vertices from the previous round via directed edges (often called \textit{strong edges}), resulting in a growing directed acyclic graph (DAG). 

Vertices are disseminated using reliable broadcast (RBC)~\cite{bracha1987asynchronous}, ensuring that all honest replicas eventually obtain a consistent view of the DAG (details in Appendix~\ref{app:building blocks}). As references accumulate, later-round vertices gain multi-hop reachability to earlier vertices, increasing the shared visibility needed for common core formation\yyu{~\cite{dolev2016some}}. If a subset of vertices from some round becomes reachable in the local DAGs of all honest replicas, these vertices form the \textit{common core} of that wave, meaning they are sufficiently \textit{observed} for subsequent commitment logic (formalized in \secref{subsec:problem_statement}).

At the end of each wave, replicas invoke a common coin~\cite{libert2014born,boneh2001short} to elect a leader vertex from the first round of the wave. If the elected leader belongs to the common core, it commits its causal history, \ie, all vertices reachable from it, according to a deterministic rule.

Existing DAG-based BFT protocols can be broadly categorized into two architectures: \textit{coupled} and \textit{decoupled}. In coupled designs, DAG vertices carry both transaction payloads and ordering information, so DAG growth directly determines transaction latency. In decoupled designs, data dissemination is separated from consensus, and vertices mainly carry metadata or certificates. Representative examples include DAG-Rider~\cite{keidar2021all} and Chitu~\cite{huang2025chitu} for coupled designs, and Tusk~\cite{danezis2022narwhal} and Autobahn~\cite{giridharan2024autobahn} for decoupled designs.
In this paper, we primarily focus on the coupled architecture by following prior asynchronous BFT protocols~\cite{keidar2021all, huang2025chitu}, while noting that our framework and evaluation also extend to decoupled settings.

\subsection{Dissecting DAG-Based BFT Performance} \label{subsec:observation}
\bheading{Heterogeneity in WAN deployments.}
Modern distributed systems like blockchains are typically deployed in wide-area networks, where replicas operate under highly heterogeneous conditions. Differences in networking (\eg, bandwidth), computation (\eg, CPU), and storage capabilities lead to uneven processing and dissemination speeds across replicas. In addition, client requests are often highly skewed~\cite{wei2017replication,liu2019distcache}, leading to substantial load imbalance across replicas~\cite{cooper2010benchmarking,li2020pegasus,weng2022mlaas,gai2023scaling}.

These effects are further amplified by heterogeneous network environments, where replicas are geographically distributed but unevenly placed, forming dense clusters alongside sparsely connected peripheral nodes~\cite{motepalli2023analyzing,motepalli2026gpos}. This results in highly variable network paths and message delays across replicas, as also observed in production blockchain systems such as Aptos~\cite{aptos_validators_explorer}, Solana~\cite{Solana_validators_datacenters}, and Sui~\cite{suivision_validators}.

\bheading{Rigid round progression constraint.}
Existing DAG-based BFT protocols couple round advancement with quorum collection: to enter round $r$, a replica must first gather at least $2f+1$ vertices from round $r\!-\!1$ (see \secref{sec:dag basic}). While this rule helps preserve connectivity and common-core formation, it also creates a per-round synchronization barrier. Under heterogeneous WANs, progress is dictated by the slowest quorum vertices rather than the fastest available ones. As a result, fast replicas remain idle with their resources underutilized, and prolonged rounds delay transaction admission and increase commit latency.

\yyu{Fig.~\ref{fig:dissectWAN} compares DAG-based protocols under low- and high-heterogeneity WAN deployments with the same average RTT ($80$\,ms) and total input workload (20K--120K tx/s).} In the former, all replica pairs have the same RTT and workloads are evenly distributed. In the latter, replicas are distributed across three continents and workloads are unevenly assigned, resulting in heterogeneous network delays and workload distribution. \yyu{Under heterogeneity, delayed cross-region arrivals prolong round duration by over 25\%, with commit latency around 40\% higher.}

\bheading{Key observations.} The above bottleneck stems from the coupling of round advancement with quorum collection. Breaking it reveals new opportunities for improving performance under heterogeneous conditions due to two key observations:

\begin{packeditemize}

\item \textbf{Connectivity should not be constrained by round boundaries.}
As discussed in Sec.~\ref{sec:dag basic}, references accumulated over time progressively increase reachability to earlier vertices. Existing protocols, however, enforce rigid quorum requirements within each round, forcing replicas to wait for all required vertices before proceeding. Under heterogeneous networks, many vertices often arrive early and already provide substantial partial connectivity before a wave fully completes. Since safety and liveness ultimately depend on whether enough vertices become sufficiently observed to form the common core, rather than on the exact timing of round completion, there is room to relax rigid round synchronization. Allowing connectivity to continue accumulating across rounds enables replicas to make progress earlier while preserving the guarantees required for safety and liveness.

\item \textbf{Commit rules should adapt to flexible connectivity.}
Relaxing round-based progression changes the structure of the DAG, as vertices are no longer aligned within strict rounds. As a result, commit rules that rely on consecutive-round connectivity become overly restrictive. To fully exploit flexible connectivity, commit conditions must be adapted to account for connectivity formed across rounds.
\end{packeditemize}

Guided by these two observations, \secref{sec:overview} develops a flexible DAG growth framework that theoretically characterizes and quantifies how rigid round-boundary constraints limit connectivity accumulation. \secref{sec:design} then presents \sysname, a concrete protocol instantiation that exploits the resulting DAG flexibility to accelerate commitment.

\vspace{-6mm}
\section{System Model} \label{sec:model} 
We consider a set of $n = 3f+1$ replicas, denoted by $\Pi=\{p_1,p_2,...,p_n\}$. Among them, up to $f$ replicas may be Byzantine, behaving arbitrarily under the control of a computationally bounded adversary, while the remainder are honest and faithfully follow the protocol. We assume replicas communicate over authenticated, reliable channels as in prior BFT works~\cite{miller2016honey}: messages sent by correct replicas are eventually delivered to all correct receivers without forgery. We assume a public key infrastructure (PKI) for identity verification.

We assume a fully asynchronous network model as prior works~\cite{keidar2021all,guo2020dumbo}, where message delays are unbounded, and delivery order is arbitrary. In particular, message delays may vary significantly across replicas, capturing heterogeneous network conditions in practice. We also assume standard reliable broadcast (RBC) primitives~\cite{bracha1987asynchronous} that ensure agreement, integrity, and validity properties~\cite{guerraoui2019scalable}, and a shared random coin~\cite{libert2014born,boneh2001short} (\eg, instantiated via threshold signatures) to support randomized coordination under asynchrony.

We focus on state machine replication (SMR)~\cite{schneider1990implementing}, where replicas agree on a total order of transactions. The protocol should satisfy two properties: (i) \textit{Safety}: no two correct replicas commit different transactions at the same position in the total order; and (ii) \textit{Liveness}: every transaction issued by a correct client is eventually committed by honest replicas.

\section{Flexible DAG Growth Framework} \label{sec:overview}
To unlock flexible DAG growth, we present a unified analytical framework that formalizes the key design trade-offs: while lowering the per-round synchronization barrier can shorten round duration, it also weakens connectivity and thus affects the chance of successful commitment. 
To this end, we first model commit probability and latency that are jointly determined by protocol design parameters and network conditions (\secref{subsec:problem_statement}), and then analyze worst-case protocol behaviors (\secref{subsec: ref_adv}), showing that DAG structures that perform well in practice may fail to guarantee liveness under adversarial settings. These insights not only reveal the limitations of existing DAG protocols, but also guide the design of more flexible and robust DAG frameworks (\secref{subsec: framework}). 

\subsection{Modeling DAG Performance} \label{subsec:problem_statement}
To analyze the commit latency of DAG-based BFT protocols, we begin by examining how vertices propagate and become visible across replicas. 

\bheading{Common core ($S$).} In DAG-based protocols, the leader vertex can be committed only if it is sufficiently well disseminated so that all honest replicas \textit{observe} it. Specifically, consider a replica $p$ proposing a vertex $v$ in round $r$. We say that $p$ observes a vertex $u$ in round $r'$ (where $r' < r$) if there exists a path from $v$ to $u$.
Furthermore, we have a key abstraction for reasoning about commit decisions. 

\begin{definition}
[Common core~\cite{dolev2016some}] Let $V_i$ denote the set of the first-round vertices that are observed by replica $p_i$ after a wave. A set $S$ is called a common core if $S \subseteq V_i$ for every honest replica $p_i$. 
\end{definition}

\begin{figure}[!t]
    \centering
    \includegraphics[width=0.9\linewidth]{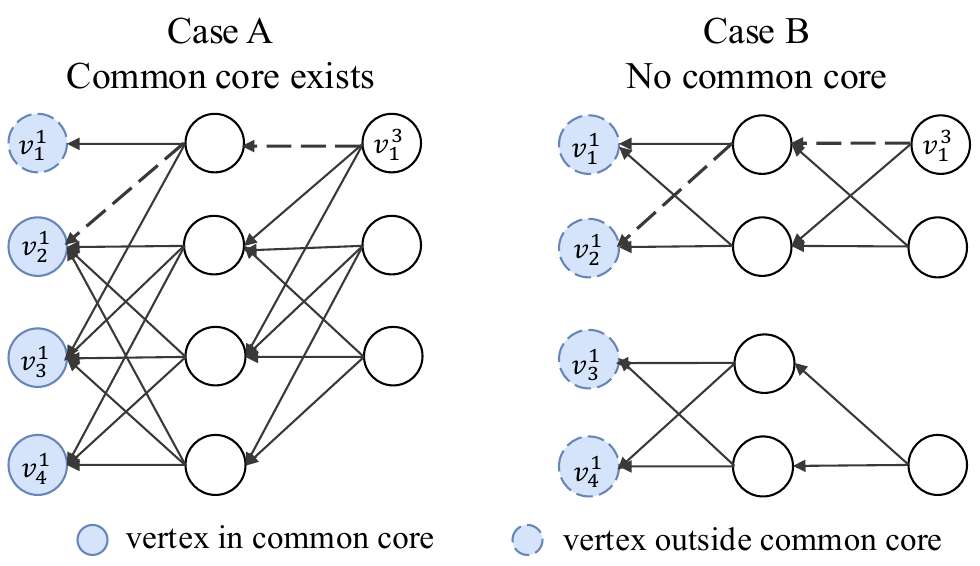}
    \caption{\yyu{Common core example for $k= 3$ rounds in a wave with $4$ replicas ($f=1$). A first-round vertex is observed if there exists a path from a third-round vertex to it. Hence, $v_1^3$ observes $v_2^1$ via the dashed path. \textbf{Left}: With $ref=2f+1$, vertices $v_2^1$, $v_3^1$, and $v_4^1$ are observed by all honest replicas and thus form the common core, whereas $v_1^1$ is observed by only one replica and lies outside the common core. \textbf{Right}: With $ref=f+1$, no first-round vertex is observed by all honest replicas, resulting in an empty common core.}}
    \label{fig:commoncore}
    \vspace{-4mm}
\end{figure}

\iheading{Example.} Consider a wave where each replica observes a slightly different subset of vertices due to heterogeneous delays. A common core is a \yyu{ set of vertices observed by all honest replicas}. \yyu{Fig.~\ref{fig:commoncore} shows how different reference thresholds can change this shared visible set.} \yyu{For a vertex in the common core, its commit status can therefore be evaluated consistently across replicas.} In contrast, a vertex outside the common core is not guaranteed to be observed by some honest replica at the decision point.

We next identify two key protocol parameters that govern how connectivity forms. The reference threshold $ref$ determines how many vertices must be collected before a replica advances to the next round, while the wave length $k$ determines how long connectivity is allowed to accumulate before a commit decision is made. \yyu{We use $\Phi(ref,k)$ to characterize the effective visible set size by the end of round $k$ under reference threshold $ref$; together with Byzantine behavior, it characterizes the size of $S$ (see Appendix~\ref{app: hetero modeling}).}

\bheading{Commit probability ($P$).} \yyu{With the leader selected uniformly at random from the first round, any leader in the common core is guaranteed to be committable. Thus, the commit probability is given by}
\[
P \ge \frac{\Phi(ref, k)}{n}.
\]

\bheading{Wave duration ($W$).} A replica advances after collecting $ref$ references from the previous round. Let $\{d_1, \dots, d_n\}$ denote the propagation delays of vertices from different replicas. The time required to complete a round is determined by the arrival of the $ref$-th fastest vertex, \ie, the $ref$-th order statistic $D = d_{(ref)}$. Over a wave of $k$ rounds, the total time spent in accumulating connectivity is denoted as $W(ref,k)$. 

\bheading{Latency ($L$).} Since a successful commit occurs with probability $P$, the latency estimate until commitment is 
\[
L \approx \frac{W(ref,k)}{P}.
\]

\subsubsection{Optimistic Analysis under Heterogeneity} Using the above formulation, we next analyze the performance of DAG-based BFT protocols under benign conditions without adversarial interference. 
We model $d$ as a log-normal random variable to capture different levels of heterogeneity and then derive the theoretical latency. Due to space constraints, detailed proof is provided in Appendix~\ref{app: hetero modeling}.

\begin{figure}[!t]
    \centering
    \begin{subfigure}{0.48\linewidth}
        \centering
        \includegraphics[width=\linewidth]{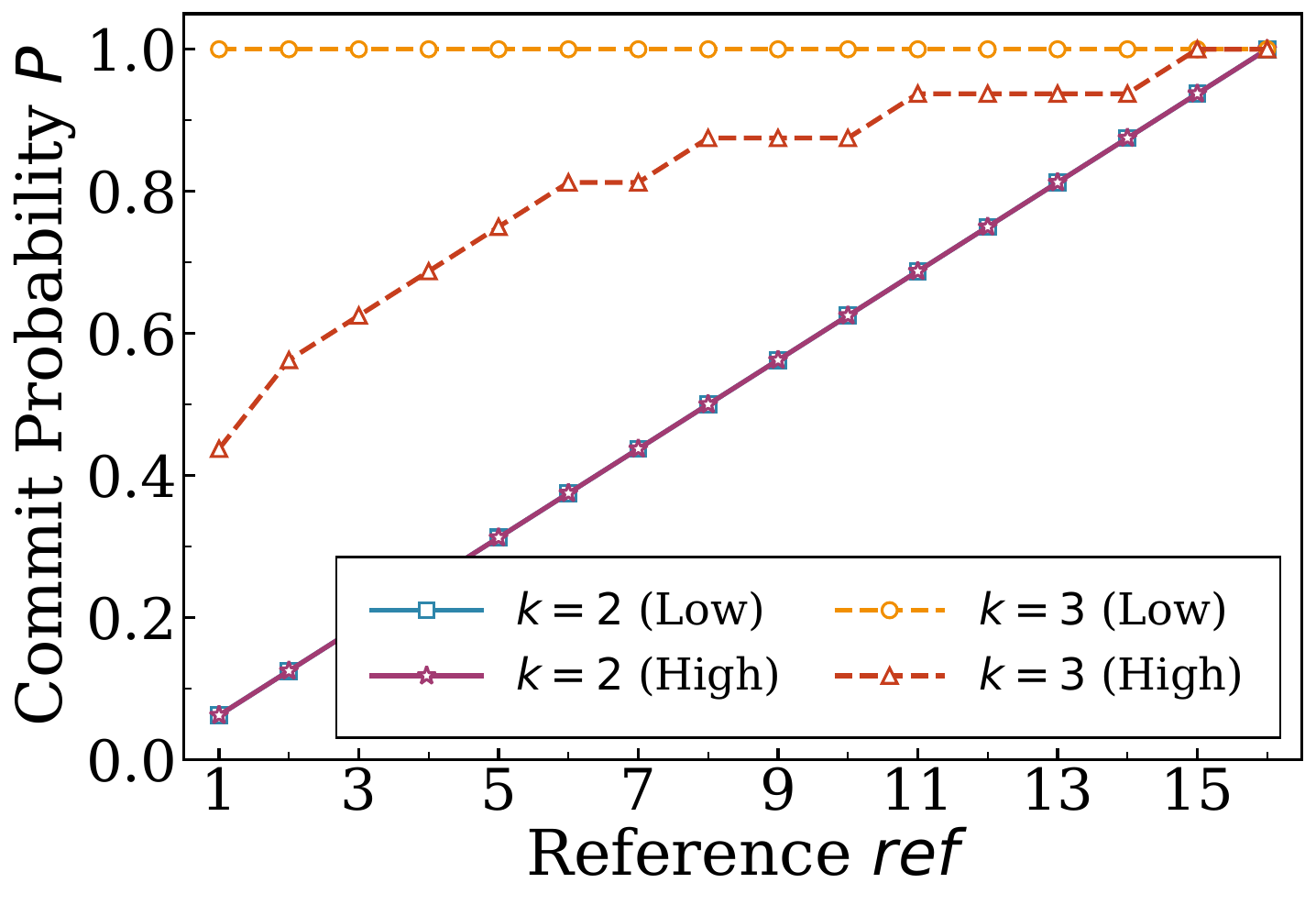}
        \caption{Commit probability.}
        \label{fig:commit P}
    \end{subfigure}
    \begin{subfigure}{0.48\linewidth}
        \centering
        \includegraphics[width=\linewidth]{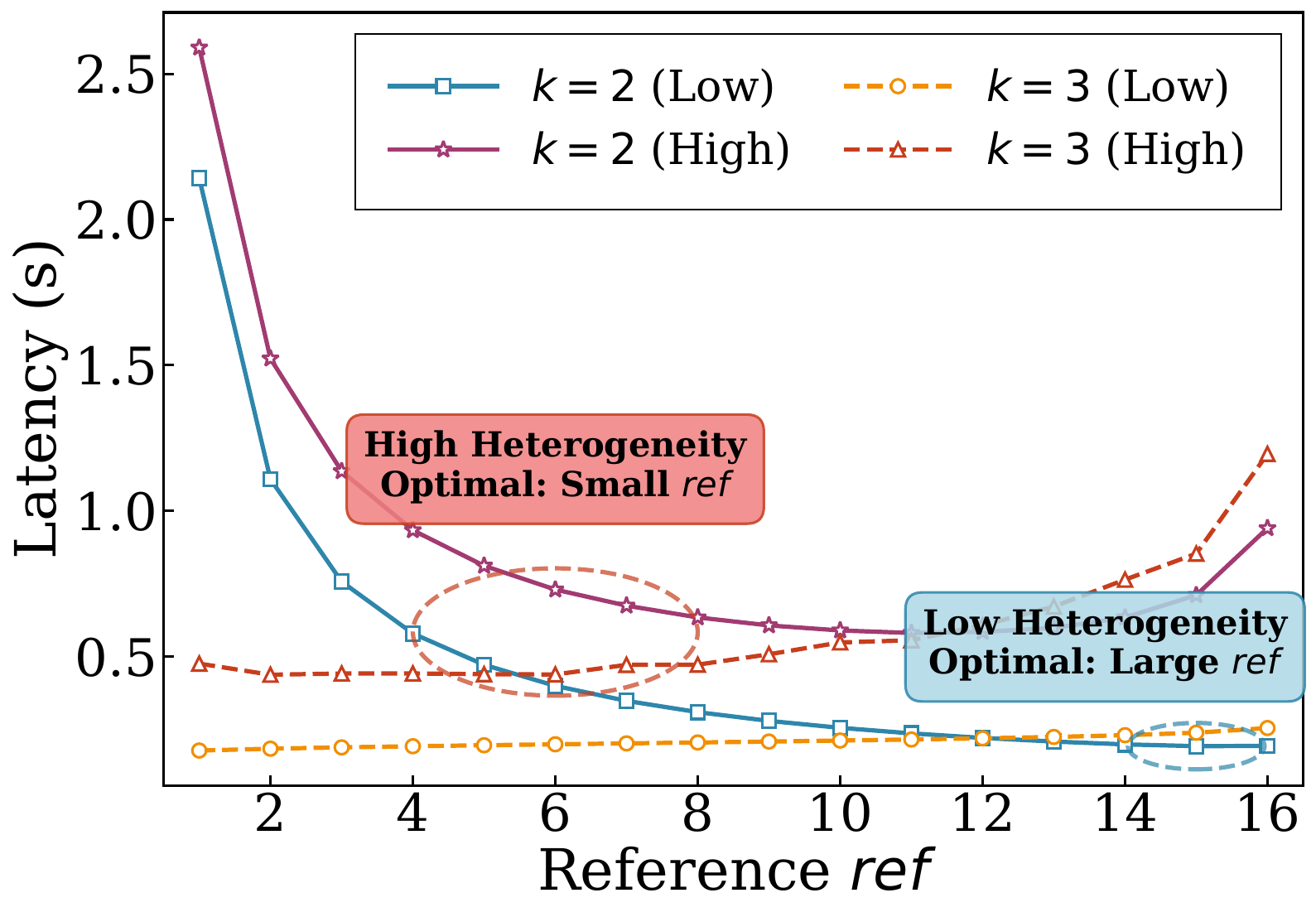}
        \caption{Latency.}
        \label{fig:latency_vs_heterogeneity}
    \end{subfigure}
    \caption{DAG performance under different $ref$ and $k$.}
    \label{fig:analysisResults}
    \vspace{-4mm}
\end{figure}

Fig.~\ref{fig:analysisResults} shows the theoretical commit probability and latency under different choices of $ref$ and $k$. \yyu{We instantiate the distribution of $d$ using representative network-delay distributions with mean/std of 50/5 ms and 100/50 ms for the favorable and unfavorable settings, respectively.} The results reveal a clear trade-off between round latency and connectivity accumulation. When heterogeneity is high, long-tail delays from slow replicas make large $ref$ prohibitively expensive. Although increasing $ref$ improves commit probability, this gain is insufficient to offset the increased round duration. In contrast, a smaller $ref$ allows replicas to advance earlier, and the saved time can be used to accumulate connectivity across additional rounds. As a result, connectivity can emerge through temporal accumulation rather than strict per-round synchronization, leading to lower latency. 

In summary, we observe distinct optimal configurations under different levels of heterogeneity:
\begin{packeditemize}
    \item \textbf{Low heterogeneity.} A larger $ref$ with a smaller $k$ minimizes latency by maximizing per-round connectivity.
    \item \textbf{High heterogeneity.} A smaller $ref$ combined with a larger $k$ achieves better performance by leveraging cross-round accumulation.
\end{packeditemize}

\subsection{Liveness Constraints} \label{subsec: ref_adv}
The above optimistic analysis may motivate adopting aggressive configurations such as $ref = f+1$ to minimize protocol latency across heterogeneous networks. However, such configurations cannot be directly applied as explained below. \yyu{The following analysis examines the liveness issue behind these configurations and motivates the solid vertex mechanism introduced in Sec.~\ref{subsec: framework}.}

\bheading{Why optimistic configurations fail.}
{The limitation arises because reducing $ref$ \yyu{may leave insufficient connectivity} under adversarial timing.
{The right side of Fig.~\ref{fig:commoncore} (Case B) shows a case in asynchronous networks where an adversary can delay message delivery such that no first-round vertex is observed by all honest replicas within a fixed wave length.
In this case, the honest replicas share no common visible first-round vertex, so the common core is empty. As a result, no vertex is guaranteed to be committed, and progress depends on DAG topology and message timing, preventing convergence and threatening liveness.} 

The liveness issue is exacerbated in practical two-round designs. As illustrated in Case A of Fig.~\ref{fig:core_k2}, when $ref=f+1$, an adversary can prevent every first-round vertex from receiving \yyu{at least} $f+1$ supporting edges, which is the weakest possible commit condition. Hence, the wave contains no committable vertex, and liveness fails. \yyu{In contrast, in Case B with $ref=2f+1$, some first-round vertices satisfy the commit rule, preserving liveness.} 

\bheading{Worst-case liveness guarantees.}
To formalize these limitations, we analyze the worst-case behavior of DAG-based BFT protocols. Let $R_1, \ldots, R_k$ denote the sets of vertices generated in each round of a wave. {For a fixed decision round $R_k$, the key question is which vertices in $R_1$ remain committable under any admissible adversarial DAG topology, since these vertices determine the protocol's worst-case liveness guarantee. \yyu{Since short-wave DAG protocols, \eg, those that check commitment after two rounds, may not form a common core even when some first-round vertices remain committable~\cite{shoup2025blue}, we characterize liveness directly through these committable vertices}. \yyu{We therefore generalize the notion of common core and} let $G_k \subseteq R_1$ denote the set of first-round vertices that are guaranteed to satisfy the commit rule (with threshold $f+1$) in the worst case, and let $U_k = R_1 \setminus G_k$ denote the remaining vertices, whose committability is topology-dependent and thus not guaranteed in the worst case. }

\begin{figure}[!t]
    \centering
\includegraphics[width=0.9\linewidth]{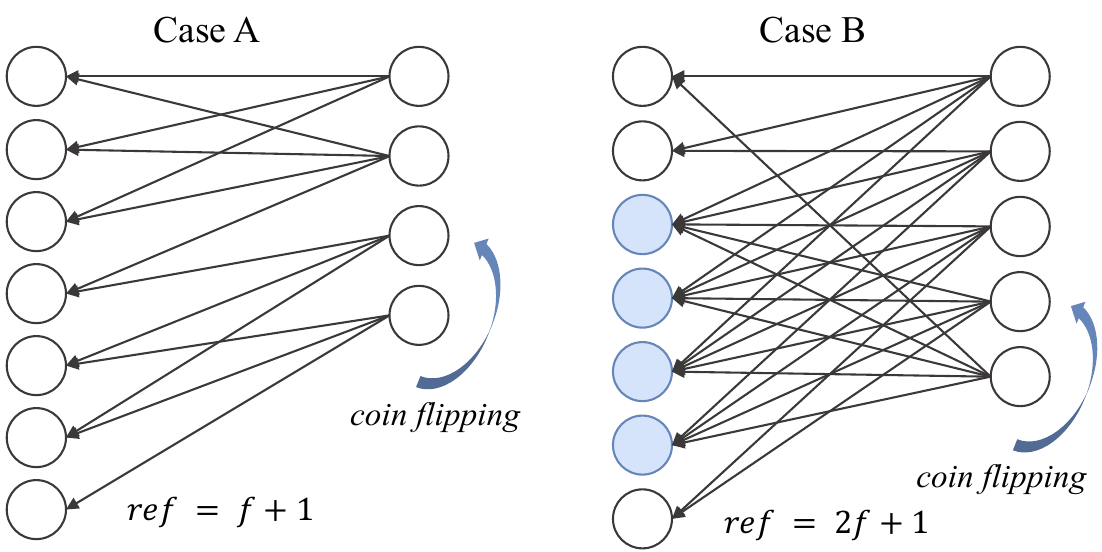}
    \caption{A worst-case two-round example with $f=2$.}
    \label{fig:core_k2}
    \vspace{-2mm}
\end{figure}

\begin{theorem}[Worst-case common-core bound] \label{the: common core}
Consider a DAG-based BFT protocol with $n=3f+1$ replicas, wave length $k$, and reference threshold $ref$. Under adversarial scheduling, the size of $G_k$ is lower-bounded as summarized in Table~\ref{tab:worst_case_s_size}. In particular:

(i) when $ref = f+1$, we have $|G_k| = 0$ for $f>1$; and

(ii) when $ref \ge n-f$, the $G_k$ remains non-empty, with the guaranteed size increasing with $k$.
\end{theorem}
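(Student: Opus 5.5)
We prove the two parts separately. Part (i) is an explicit adversarial construction of a DAG. Part (ii) is a counting argument that extends the classical DAG-Rider common-core argument. Throughout, "committable" means a first-round vertex $u\in R_1$ that has at least $f+1$ supporting vertices in the decision round $R_k$, that is, $f+1$ vertices of $R_k$ with a path to $u$. $G_k$ is the set of $u$ committable under every admissible schedule. Each vertex of round $r$ references exactly $ref$ vertices of round $r-1$ (the adversary never benefits from extra references). The adversary chooses which $ref$ delivered vertices each honest replica sees first, and it also controls the $f$ Byzantine replicas, which may simply never create vertices.

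\textbf{Part (i), $ref=f+1$.} We build a schedule under which every $u\in R_1$ is reached by at most $f$ vertices of $R_k$.

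For $k=2$, as in Case A of Fig.~\ref{fig:core_k2}: there are $n=3f+1$ round-2 vertices, each with $f+1$ out-edges, giving $(3f+1)(f+1)$ edges in total. Distributing them over the $3f+1$ round-1 vertices at most $f$ per target gives capacity $(3f+1)f$, which is too little, so plain counting is not enough. We use the Byzantine replicas instead. They withhold their round-2 vertices, leaving only $2f+1$ round-2 vertices and $(2f+1)(f+1)$ edges. We then spread these edges in a cyclic pattern: the vertex of replica $i$ references round-1 vertices $i,i+1,\dots,i+f \pmod{3f+1}$, restricted to delivered vertices. With the in-degrees balanced this way, each target receives at most $\lceil (2f+1)(f+1)/(3f+1)\rceil$ edges, and this is at most $f$ exactly when $f>1$. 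This explains the exception at $f=1$.

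For general $k$, we partition the replicas into blocks so that each block references only a "window" of the previous round. We then argue by induction on the rounds that the set of $R_k$ vertices reaching any fixed $u$ stays within a window of size at most $f$. The cyclic, shift-invariant structure is what makes the induction go through.

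\textbf{Part (ii), $ref\ge n-f=2f+1$.}
\begin{itemize}
\item Case $k=2$: counting gives at least $(2f+1)^2$ edges from honest round-2 vertices into $R_1$. If fewer than $g$ targets received $f+1$ or more edges, the total would be at most $g(2f+1)+(3f+1-g)f$. Comparing the two bounds yields $g\ge 1$, and more precisely gives the table's value $g\ge \lceil ((2f+1)^2-(3f+1)f)/(f+1)\rceil$.
\item Case $k\ge3$: we reuse the DAG-Rider argument. Round-3 vertices have $2f+1$ parents, and any two sets of $2f+1$ parents intersect in at least $f+1$ vertices. Through this quorum intersection, the $G_2$ vertices (those with $2f+1$ round-2 supporters) are reached by every round-3 vertex.
\item Monotonicity in $k$: adding rounds only adds paths, so $G_k\subseteq G_{k+1}$. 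Each extra round increases the guaranteed size by the same double-counting step, until it saturates at $2f+1$. The entries of Table~\ref{tab:worst_case_s_size} follow from iterating this step.
\end{itemize}

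\textbf{Main obstacle.} The hardest step is the general-$k$ construction in part (i). It requires a multi-round schedule that is consistent with reliable broadcast and under which reachability never concentrates onto a single $u$. Longer waves naturally spread reachability, so the schedule must keep windows from growing. I would first try the cyclic-window construction and check that a window of size $f+1$ composed with itself stays bounded once Byzantine vertices are withheld. If that fails, I would restrict claim (i) to the two-round setting emphasized in the text.
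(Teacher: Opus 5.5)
\textbf{Part (i).} This part does not go through, and the root cause is your commit model, not only the arithmetic.

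The inequality you need, $\lceil (2f+1)(f+1)/(3f+1)\rceil\le f$, is equivalent to $f^2-2f-1\ge 0$, i.e.\ $f\ge 3$. At $f=2$ it gives $\lceil 15/7\rceil=3>f$. Under your definition (support counted over all existing round-$k$ vertices), no schedule can work at $f=2$ at all. The $2f+1=5$ honest round-2 vertices carry at least $15$ edges into at most $7$ round-1 vertices, and $7\cdot 2=14<15$, so some round-1 vertex gets $3=f+1$ supporters. Yet $f=2$ is exactly the case shown in Fig.~\ref{fig:core_k2}.

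The paper measures support inside a decision view of $ref$ round-$k$ vertices. Its $k=2$ count is over ``any set of $ref$ vertices in $R_2$'', and its random model uses $ref$ trials in the decision round. So with $ref=f+1$, a round-1 vertex is committable only if all $f+1$ view vertices reach it. The paper itself only observes that its link count then becomes vacuous: $g_2\ge -2f^2+f+1\le 0$ and $g_3\ge (3f+1)(1-f)/(f+1)\le 0$.

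In this model your general-$k$ obstacle also disappears. Take two disjoint groups of $f+1$ replicas, which is possible since $2(f+1)\le 3f+1$. Let each group's vertices reference only that group's previous-round vertices, and put one vertex of each group in the view. Ancestries never grow, so for every $k$ no round-1 vertex is reached by all $f+1$ view vertices.

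Separately, read literally, your $G_k$ (``the set of $u$ committable under every schedule'') is empty by symmetry even for $ref=2f+1$. What must be bounded is the minimum over schedules of the number of committable vertices.

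\textbf{Part (ii).} This part stops after $k=2$. Your $k=2$ count is the paper's; your ceiling actually evaluates to $f+2$, which implies the table's $f+1$. But ``monotonicity plus iterating the same double-counting step'' does not produce the other rows. Quorum intersection only yields $G_2\subseteq G_3$, i.e.\ $|G_3|\ge f+2$, which is below the table's $1.5f$ once $f\ge 5$.

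The paper changes the threshold at $k=3$. A round-1 vertex with at least $n_2-ref+1$ round-2 supporters is reached by every round-3 vertex. Counting all $n_2\cdot ref$ edges out of $R_2$, with per-target caps $n_2$ versus $n_2-ref$, gives $g_3\ge n_1+n_2-n_1n_2/ref$. This is minimized at $n_1=n_2=3f+1$, giving $(3f+1)(f+1)/(2f+1)$.

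The $k=4$ row is not another counting step either. Apply the $k=3$ bound to rounds $2$--$4$ to obtain a round-2 vertex reached by every round-4 vertex. All of its at least $ref$ parents in $R_1$ are then reached, which is where the entries $2f+1$, $1.8f+1$ and $1.5f+1$ come from.
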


\iheading{Proof Sketch.} {The proof partitions the first-round vertices into the guaranteed set $G_k$ and the uncertain set $U_k$. \yyu{It then lower-bounds $|G_k|$ by counting the support to $R_1$ that is unavoidable under adversarial scheduling.} It first identifies when a vertex in $R_1$ must enter $G_k$: \yyu{for \(k=2\), we first maximize the references that can be absorbed while keeping all \(R_1\) vertices outside \(G_2\). The remaining references must force some vertices into \(G_2\), yielding the lower bound on \(|G_2|\). For \(k=3\), a vertex enters \(G_3\) once it has at least \(n-ref+1\) supporters in \(R_2\), ensuring that any choice of \(ref\) parents in \(R_3\) includes at least one such supporter. We then maximize the references that can be absorbed while keeping vertices outside \(G_3\); any remaining references must force some vertices into \(G_3\), yielding the stated bounds.} The full proof is deferred to Appendix~\ref{subsec: worstcase}.} 

Theorem \ref{the: common core} shows that naively reducing $ref$ to improve latency fundamentally compromises liveness. {For example, if a protocol aims to use a two-round wave, \ie, to start commitment checking at the end of round 2, the reference threshold must remain above the corresponding worst-case bound to ensure that at least one first-round vertex satisfies the commit condition. More generally, if a protocol aims for stronger worst-case liveness, so that more honest replicas can make progress concurrently, it needs a larger common core, which in turn requires a larger $ref$ and $k$ according to Table~\ref{tab:worst_case_s_size}.}

\begin{table}[!t]
\centering
\caption{Common Core Size.}
\label{tab:worst_case_s_size}
\renewcommand{\arraystretch}{1.15}
\setlength{\tabcolsep}{7pt}
\scalebox{0.9}{
\begin{tabular}{@{}ccccc@{}}
\toprule
\diagbox[trim=l,width=3.5em]{$k$}{$ref$} & $\ge n-f$ & $\ge 1.8f+1$ & $\ge 1.5f+1$ & $\ge f+1$ \\
\midrule
2 & $f+1$  & 1        & --        & -- \\
3 & $1.5f$ & $f+1$    & 1         & -- \\
4 & $2f+1$ & $1.8f+1$ & $1.5f+1$  & -- \\
\bottomrule
\end{tabular}
}
\end{table}

\subsection{From Rigidity to Flexibility} \label{subsec: framework} 
The worst-case analysis above suggests that liveness fundamentally depends on whether sufficient connectivity is formed to preserve a non-empty $G_k$, rather than on how such connectivity is physically constructed. In other words, different reference patterns could induce the same effective connectivity, and thus yield identical commit probability and liveness guarantees. For example, \figref{fig:Redundant} shows two different reference patterns \yyu{(the dashed lines in Cases A and B) with the same guaranteed set $G_k=\{v_1^1,v_2^1,v_3^1\}$, whose vertices satisfy the commit rule, and thus the same commit probability $P$.}

Inspired by this observation, \yyu{we seek to preserve the connectivity required for a non-empty $G_k$ without enforcing a fixed reference pattern.} \yyu{To this end}, we introduce a more general notion of coverage for DAG, denoted as $cov$, which represents the number of vertices in a target round that are reachable from a given future vertex, regardless of the specific reference paths. Under this abstraction, the protocol no longer needs to enforce a monolithic reference rule (\eg, every vertex references $2f+1$ in the previous round); instead, connectivity can be flexibly constructed as long as the required coverage is achieved. Specifically, consider a vertex $v$ in round $r_{cur}$, and its goal is to reference at least $cov$ vertices in round $r_{cov}$ (where $r_{cov} \le r_{cur}-1$). To achieve this, $v$ establishes two types of edges:

\bheading{Progress edges.} They reference vertices in the previous round ($r_{cur}-1$). These edges allow $v$ to inherit references to $r_{cov}$ from its parents.

\bheading{Boost edges.} They reference vertices in any historical round $r' < r_{cur} -1 $ within the same wave, {increasing coverage over the target round when the inherited view from parent vertices is insufficient (\eg, directly referencing vertices in $r_{cov}$ that are not covered by the parents).} 

Boost edges facilitate cross-round integration to improve the commit probability. Therefore, the advancement conditions can be safely adjusted through the following mechanism.
\begin{definition}[Solid Vertex]
For a target round $r_{cov}$, let $I(v, r_{cov})$ denote the set of target-round vertices visible to $v$ through its parents, and let $B(v, r_{cov})$ denote the additional target-round vertices reached via boost edges. A vertex $v$ is solid with respect to round $r_{cov}$ if

$$
|I(v, r_{cov}) \cup B(v, r_{cov})| \ge cov.
$$

\end{definition}

\begin{figure}[!t]
    \centering
\includegraphics[width=1\linewidth]{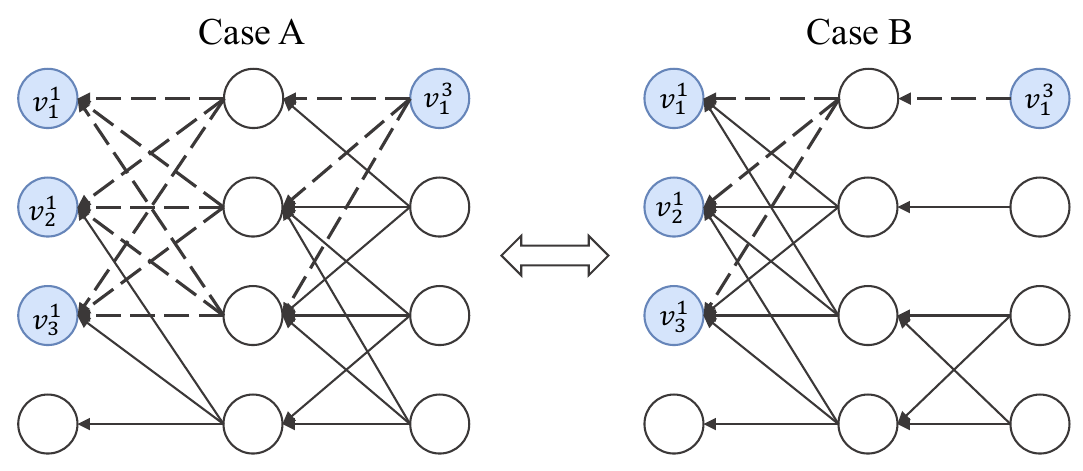}
    \caption{Redundant connectivity in DAG structures. Different reference patterns yield the same commit probability $P$ when the guaranteed set $G_k$ is identical. Dashed lines in Cases A and B illustrate example reference patterns to the same round-1 vertices.} 
    \label{fig:Redundant}
    \vspace{-2mm}
\end{figure}

Rather than constraining the DAG to specific reference patterns, we could guarantee liveness by verifying whether vertices are solid, shifting the focus to coverage ($cov$): the total number of vertices in a target round $r_{cov}$ reached by a vertex in the current round.
{To illustrate, consider $k=2$ and $ref=2f+1$ in Table~\ref{tab:worst_case_s_size}, where at least $f+1$ vertices in the first round are guaranteed to satisfy the commit rule. {After introducing $cov$, preserving this guarantee only requires preserving the same effective support to the target round. Therefore, setting $cov=ref=2f+1$ is sufficient:} even if the support is realized through different topologies, the resulting worst-case liveness bound remains unchanged. \figref{fig:multi-ref} shows that this support can indeed be achieved through multiple topology patterns:
 
\begin{packeditemize}
\item \textbf{Case A:} Round 2 vertices establish $ref = 2f+1$ progress edges to round 1. Consequently, round 3 vertices only require a minimum $ref = 1$ to round 2 to satisfy the $cov$ requirement for round 1.
\item \textbf{Case B:} Round 2 vertices maintain a minimal $ref = 1$ to round 1. Round 3 vertices must utilize progress and boost edges to collectively reference at least $cov$ vertices from round 1.
\item \textbf{Case C:} Round 2 vertices provide partial coverage with $ref = f+1$. Round 3 vertices then supplement this by referencing additional vertices via progress or boost edges until the $cov$ requirement for round 1 is fully met.
\end{packeditemize}

{Although the three cases utilize different edge compositions, they all render the vertices in round 3 solid. Therefore, they all guarantee the same worst-case liveness outcome: by round 3, at least $f+1$ vertices in round 1 satisfy the commit rule. In this way, the protocol can preserve liveness when the direct reference threshold $ref$ is reduced in some rounds for faster round progression, effectively decoupling the required reference pattern from the physical DAG construction.}

\begin{figure}[!t]
    \centering
\includegraphics[width=0.9\linewidth]{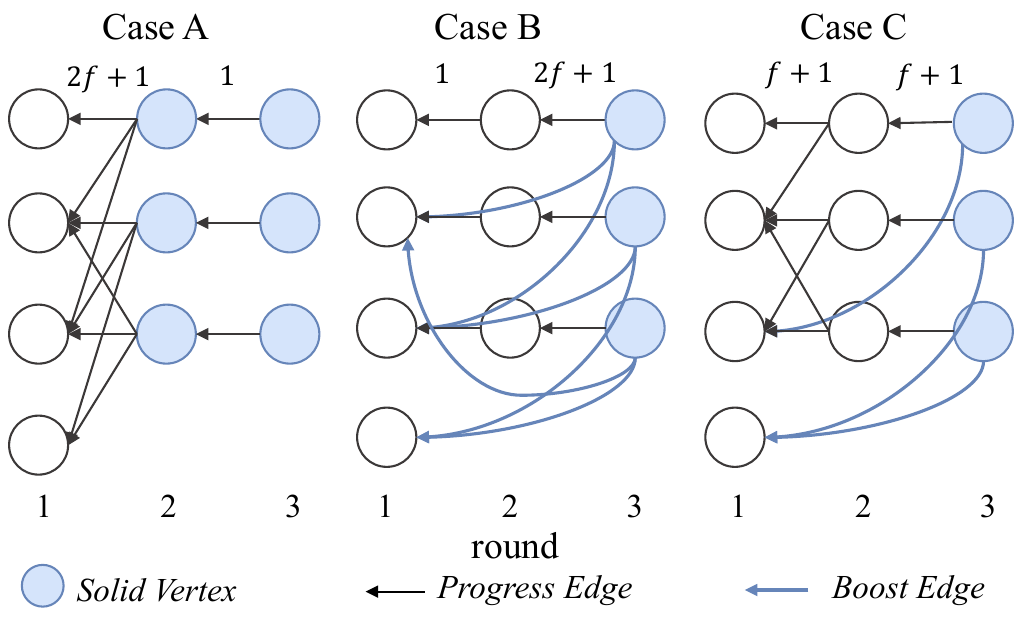}
    \caption{Multiple edge configurations achieving equivalent $cov$ thresholds. For a fixed wave length $k$, different reference patterns can satisfy the same coverage requirement.}
    \label{fig:multi-ref}
\end{figure}

\section{\sysname Design} \label{sec:design}
We present \sysname, a DAG-based BFT protocol instantiated from the previous framework with flexible DAG growth to reduce latency while preserving liveness. 
We first establish a unified parameter space for DAG design and use it to instantiate \sysname (\secref{subsec:overview}), then present the detailed protocol specification (\secref{subsec: Protocol}), provide the correctness analysis (\secref{subsec: Correctness}), and map existing DAG protocols to the design space behind \sysname (\secref{subsec:mapping}).

\subsection{Instantiation of \sysname} \label{subsec:overview}
We instantiate \sysname based on the analysis in \secref{sec:overview}. To better support flexible DAG growth, we first define \textit{solid step} that comprises multiple rounds.

\begin{definition}
    [Solid Step] A sequence of rounds $[r, r+\sigma]$ is termed a \textit{Solid Step} of span $\sigma$ if every honest replica proposes at least one vertex within these rounds that is a \textit{Solid Vertex} regarding round $r$.
\end{definition}

We then define a parameter space $\Omega = (ref, cov, \sigma, \kappa)$ that abstracts the DAG growth process. Specifically, replicas propose a vertex in each round, and a wave consists of $\kappa$ consecutive solid steps; each solid step spans $\sigma$ rounds to achieve the target commit coverage $cov$. \yyu{Subsequent proposals preserve the same coverage by causally extending the solid vertex.}

The choice of $cov$ for each solid step and the number of solid steps $\kappa$ per wave determine the resulting common core size in the worst case, suggesting that different configurations of $\Omega$ correspond to different levels of liveness guarantees. 
Specifically, to optimize for latency under heterogeneous settings, we instantiate \sysname by setting $cov = 2f+1$, $\sigma=2$, and $\kappa=2$, as shown in Fig.~\ref{fig:overview}. Thus, a wave spans five rounds and contains two solid steps. Every honest replica must propose at least one solid vertex per step. 

{The parameters $cov$ and $\kappa$ are chosen to ensure worst-case liveness, as characterized in Table~\ref{tab:worst_case_s_size}. We set $\sigma=2$ to support flexible DAG growth and choose $ref=f+1$, which is identified by the analysis in~\secref{subsec:problem_statement} as the favorable design point for high heterogeneity environments. Moreover, it leaves room for the flexible coin design without making every round trivially solid.}

\begin{figure}[!t]
    \centering
\includegraphics[width=1\linewidth]{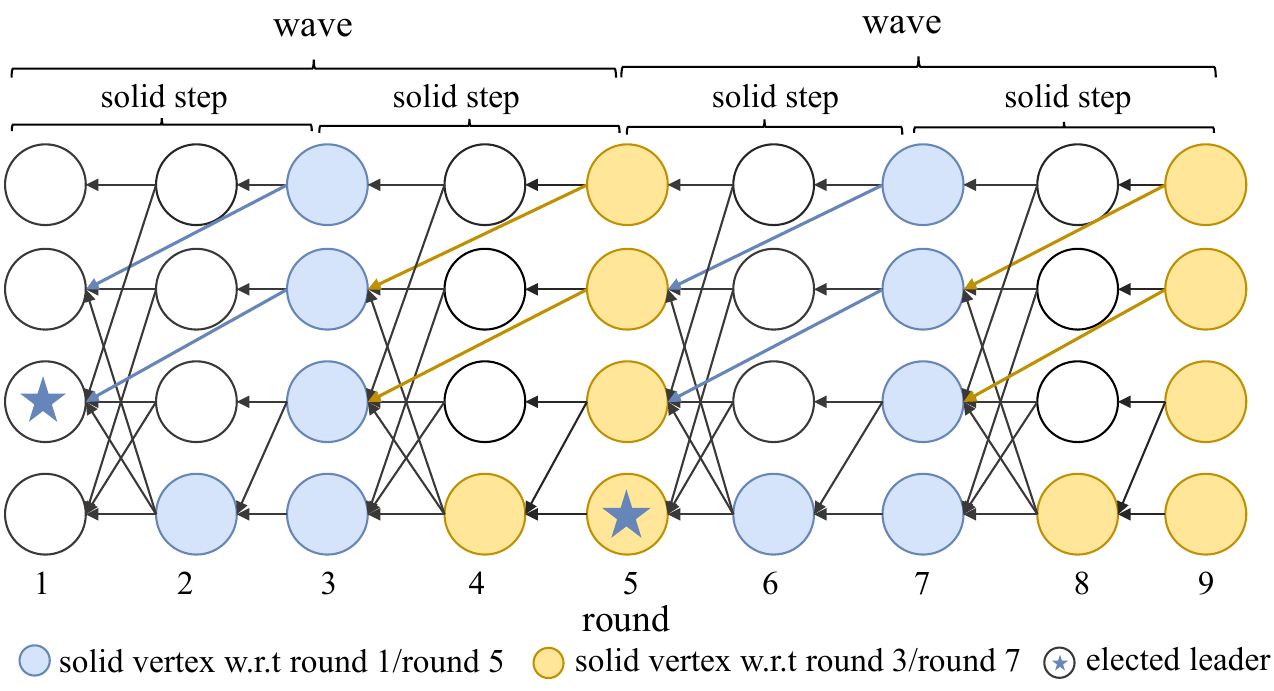}
    \caption{The wave structure in \sysname. A wave consists of $\kappa=2$ solid steps, each spanning $\sigma=2$ rounds. }
    \label{fig:overview}
\end{figure}

\subsection{Protocol in Detail} \label{subsec: Protocol}
Algorithm~\ref{alg:sysname-core} shows the core utilities of \sysname, which leverages the flexible parameter space $\Omega$ to optimize performance in heterogeneous networks. 

\noindent\ding{182}\textit{Vertex construction.} Each vertex carries a batch of transactions together with references to earlier vertices, which are interpreted as progress or boost edges. To advance to the next round, a replica must propose a vertex that either references at least $ref$ vertices from the immediately preceding round or constitutes a \textit{solid vertex} for the current step (Lines~\ref{line: proposeV}-\ref{line: endproposeV}). 

\noindent\ding{183} \textit{DAG construction.} 
As in prior DAG-based BFT protocols, vertices are disseminated via reliable broadcast. Replicas accept at most one vertex from each sender in a given round, vote for a valid proposal, and certify a vertex after collecting $n-f$ votes. Once certified, the vertex may be referenced by subsequent proposals (details in Appendix~\ref{app:dag structure}).

\noindent\ding{184} \textit{Ordering and commitment.} Each replica locally interprets its DAG view to establish a consistent total order. {To achieve consensus, replicas use a common coin at the end of the wave (\ie, in round 5 in \figref{fig:overview}) to elect a leader from the first round of that wave. The commitment decision is then made by checking votes on the leader in round 3: if the elected leader in round 1 is observed by at least $f+1$ vertices in round 3, it is committed together with its causal history (Lines~\ref{line:tryRegular}-\ref{line:endtryRegular}).}
Specifically, there are two commitment cases:

\begin{packeditemize}
    \item \textit{Direct commit.}
    {Let $v_l$ be the leader of wave $w$ determined by the recovered coin. We use $r_{obs}$ to denote the observation round, \ie, the round whose vertices are used to check whether $v_l$ has sufficient support for commitment. An honest replica commits $v_l$ if it is revealed by the recovered coin and is observed by enough vertices in round $r_{obs}$. In the regular path, we set $r_{obs}=3$, which aligns with the default end-of-wave schedule (Lines~\ref{line:tryRegular}-\ref{line:endtryRegular}).}

    \item  \textit{Recursive commit.} Leader $v_l$ is recursively committed if there exists a leader vertex $v_l'$ from a subsequent wave $w' > w$ that is directly committed and has a path to $v_l$.
\end{packeditemize}

\begin{algorithm}[!t]
    \caption{\sysname\ Core Utilities}
    \label{alg:sysname-core}
    \begin{myprotocol}
        \STATE \textbf{Local variables:} $\mathit{DAG}[\ast] \gets \emptyset$, $\mathit{round} \gets 1$
        \STATE \hspace{1.65cm} $\mathit{committed}[\ast] \gets \emptyset$, $\mathit{committedRound} \gets 0$
        \SPACE

        \FUNCTION{proposeVertex}{}      \label{line: proposeV}
            \STATE $P \gets$ \Call{getParents}{$\mathit{round}-1$}
            \IF{$|P| \ge ref$ \textbf{or} \Call{isSolidCandidate}{$\mathit{round}, P$}}
                \STATE $v \gets$ \Call{makeVertex}{$\mathit{round}, P$}
                \STATE \Call{broadcast}{$v$}
                \STATE $\mathit{round} \gets \mathit{round} + 1$
            \ENDIF
        \ENDFUNCTION
        \SPACE \label{line: endproposeV}

        \FUNCTION{nextCommittedVertices}{}
            \STATE $r \gets \mathit{committedRound} + 1$
            \IF{$r+1 \le \mathit{round}$}
                \STATE \Call{tryFastCoin}{$r$}
            \ENDIF
            \IF{$r > \mathit{committedRound}$ \textbf{and} $r+2 \le \mathit{round}$}
                \STATE \Call{tryFlexibleCoin}{$r$}
            \ENDIF
            \IF{$r > \mathit{committedRound}$ \textbf{and} $r+2 \le \mathit{round}$}
                \STATE \Call{tryRegular}{$r$}
            \ENDIF
            \IF{$\mathit{committedRound} \ge r$}
                \STATE \textbf{return} $\mathit{committed}[r]$
            \ENDIF
            \STATE \textbf{return} $\emptyset$
        \ENDFUNCTION
        \SPACE

        \FUNCTION{tryFastCoin}{$r$} \label{line:tryfastcoin}
            \IF{\Call{candidates}{$r, r+1$} $\ge \tau_{\mathit{cand}}$} 
                \STATE \Call{tryCommit}{$r, r+1$}
            \ENDIF 
        \ENDFUNCTION
        \SPACE \label{line:endfastcoin}

        \FUNCTION{tryFlexibleCoin}{$r$} \label{line:tryflexiblecoin}
            \IF{\Call{candidates}{$r, r+2$} $\ge \tau_{\mathit{cand}}$} 
                \STATE \Call{tryCommit}{$r, r+2$}
            \ENDIF
        \ENDFUNCTION
        \SPACE \label{line:endtryflexiblecoin}

        \FUNCTION{tryRegular}{$r$} \label{line:tryRegular}
            \STATE \Call{tryCommit}{$r, r+2$}
        \ENDFUNCTION
        \SPACE \label{line:endtryRegular}

        \FUNCTION{tryCommit}{$r, r_{\mathit{obs}}$}
            \STATE $v_l \gets$ \Call{getLeaderVertex}{$r$} \COMMENT \textit{using the common coin}
            \IF{$v_l \neq \bot$ \textbf{and} $\left| \{ v \in \mathit{DAG}[r_{\mathit{obs}}] : \Call{path}{v_l, v} \} \right| \ge f+1$}
                \STATE \Call{commitCausalHistory}{$v_l, r$}
            \ENDIF
        \ENDFUNCTION
        \SPACE
        
        \FUNCTION{candidates}{$r, r_{\mathit{obs}}$}
            \STATE \textbf{return} $\left| \left\{ u \in \mathit{DAG}[r] : \left| \left\{ v \in \mathit{DAG}[r_{\mathit{obs}}] : \Call{path}{u, v} \right\} \right| \ge f+1 \right\} \right|$
        \ENDFUNCTION
        \SPACE

    \STATE \textbf{upon receiving certified vertex} $v$ \textbf{do}
    \STATE \hspace{0.5cm} $\mathit{DAG}.\Call{addVertex}{v}$
    \end{myprotocol}
\end{algorithm}

\bheading{Flexible coin optimization.}
Since the connectivity required for commitment may emerge at different speeds under different network conditions, revealing common coin shares only at fixed round boundaries can introduce unnecessary delay. To address this, \sysname designs a flexible coin mechanism: beyond the default end-of-wave schedule, a replica may piggyback its threshold-signature share onto its next vertex once its local DAG view indicates sufficient connectivity for commitment. If enough shares are collected early, replicas recover the common coin sooner and elect the first-round leader earlier; otherwise, the shares remain available for later recovery along the regular path. The early-release trigger is controlled by a parameter $\tau_{\mathit{cand}}$, which affects only how aggressively early recovery is attempted and does not affect correctness. The resulting commitment check may either use the regular observation round $r_{obs}=3$ (Lines~\ref{line:tryflexiblecoin}-\ref{line:endtryflexiblecoin}) or, in a more aggressive fast-coin case, further advance the observation round to $r_{obs}=2$ (Lines~\ref{line:tryfastcoin}-\ref{line:endfastcoin}); to preserve safety in this case, \sysname requires an extended solid step from round 2 to round 5 (details in Appendix~\ref{app:correctproof}).

\iheading{Example.} Consider a 4-round wave structure where the leader is elected in the fourth round, as shown in \figref{fig:fastcoin}. In Cases A and B, sufficient connectivity already emerges by the second round, so releasing coin shares at that point allows replicas to recover the common coin and commit the leader earlier, whereas waiting until the default wave boundary would only add latency without improving commit probability.

\begin{figure}[!t]
    \centering
    \includegraphics[width=0.9\linewidth]{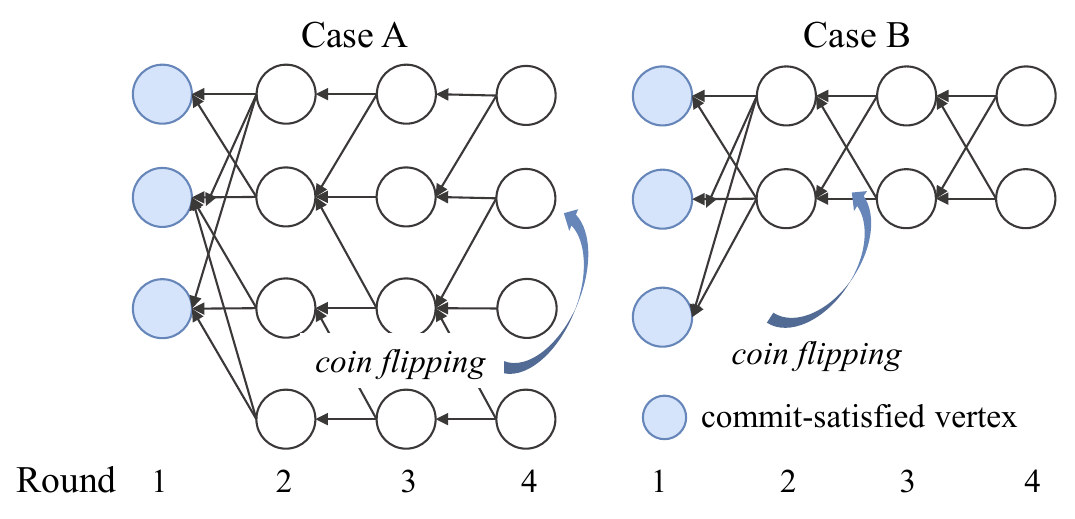}
    \caption{Early leader finalization via flexible coin.}
    \label{fig:fastcoin}
    \vspace{-3mm}
\end{figure}

\subsection{Correctness Analysis} \label{subsec: Correctness}
We sketch the safety and liveness arguments of \sysname here and defer full proofs to Appendix~\ref{app:correctproof}.

\bheading{Safety.}
Along the regular path, a leader vertex $v_l$ of wave $w$ (at round 1) can be committed if it is observed by at least $f+1$ vertices in round 3. By the \textit{solid step} structure, every vertex in round 5 must reference a quorum of at least $2f+1$ round-3 vertices. Since $n=3f+1$, quorum intersection guarantees that these $2f+1$ referenced vertices and the $f+1$ round-3 observers share at least one common vertex. Hence, every round-5 vertex (the first round of wave $w+1$) must observe $v_l$ and include it in its causal history. Therefore, every leader committed in a later wave contains all earlier committed leaders in its causal history. Since replicas linearize each leader's causal history using a deterministic ordering rule, all honest replicas derive the same total order of committed vertices.

\bheading{Liveness.} \sysname uses randomness to elect a leader in round 5. A round-1 leader $v_l$ is committed once it is observed by at least $f+1$ round-3 vertices. Because each round-3 vertex must observe a quorum of $2f+1$ round-1 vertices, quorum intersection ensures that, even under worst-case scheduling, there exists a non-empty set of at least $f+1$ round-1 candidates satisfying the commitment condition. Thus, in each wave, the probability that the random coin selects a committable leader is at least $(f+1)/(3f+1)$. Since this probability is strictly positive and waves repeat indefinitely, a leader is eventually committed with probability 1.

\subsection{Encoding Existing DAG-BFT Protocols} \label{subsec:mapping}

\sysname provides a unified abstraction for DAG-based BFT constructions. Using the four-dimensional parameter space $\Omega=\{ref, cov, \sigma, \kappa\}$, we map five representative DAG-based protocols to discrete design points within a common framework. Although these systems differ in consensus logic and network assumptions, their DAG growth patterns can be interpreted as specific instantiations of our abstraction.

\begin{packeditemize}
    \item \textbf{Tusk~\cite{danezis2022narwhal} and DAG-Rider~\cite{keidar2021all}:}
    Both protocols use $\sigma=1$ with dense references $ref=cov=2f+1$, so each round forms a solid step. In our framework, Tusk corresponds to $\kappa=2$, while DAG-Rider uses $\kappa=3$ to improve liveness at the cost of one additional round.

    \item \textbf{Chitu~\cite{huang2025chitu}:}
    Chitu introduces a leaderless fast path to avoid unnecessary fallback to more expensive consensus procedures. Structurally, it uses $\sigma=1$ and $ref=cov=2f+1$, making each round a solid step. Within our framework, Chitu dynamically switches between two wave configurations: a fast path with $\kappa=1$ when observations are sufficiently aligned, and a normal path with $\kappa=3$ when views diverge, and randomized leader selection is needed.

    \item \textbf{Mahi-Mahi~\cite{jovanovic2025mahi}:}
    Mahi-Mahi operates over an uncertified DAG with $\sigma=1$ and $ref=cov=2f+1$. Depending on the network model and desired commit probability, it can be represented with either $\kappa=3$ or $\kappa=4$ solid steps per wave.

    \item \textbf{Autobahn~\cite{giridharan2024autobahn}:}
    Autobahn adopts a decoupled architecture in which data lanes and consensus progress independently under partial synchrony. We focus on its DAG-style data layer, modeling $\sigma$ as a dynamic variable to capture non-uniform lane advancement. $cov$ denotes the lane coverage required to form a consensus cut, typically $2f+1$.

\end{packeditemize}

\section{Evaluation} \label{sec:evaluation}
We build a prototype of \sysname and compare it against four state-of-the-art  counterparts: Tusk~\cite{danezis2022narwhal}, Chitu~\cite{huang2025chitu}, Mahi-Mahi~\cite{jovanovic2025mahi}, and DAG-Rider~\cite{keidar2021all}, under various conditions. 
Our evaluation aims to answer three questions:
\begin{packeditemize}

    \item \textbf{Q1:} How do protocol parameters in the unified design space trade off with each other, and do the results align with our theoretical insights? (Sec.~\ref{subsec:theory-align})

    \item \textbf{Q2:} How does \sysname compare with existing DAG-based protocols under heterogeneous networks, and how resilient is it to node faults? (Sec.~\ref{subsec:comparison})

    \item \textbf{Q3:} How effective is each component of \sysname under heterogeneous networks, and what resource utilization does it incur? (Sec.~\ref{subsec:breakdown})
\end{packeditemize}

\begin{figure*}[!t]
    \centering
       \resizebox{1\textwidth}{!}{%
    \begin{minipage}{\textwidth}
    \centering
    
    \begin{subfigure}[b]{0.25\linewidth}
        \centering
        \includegraphics[height=3.5cm, keepaspectratio]{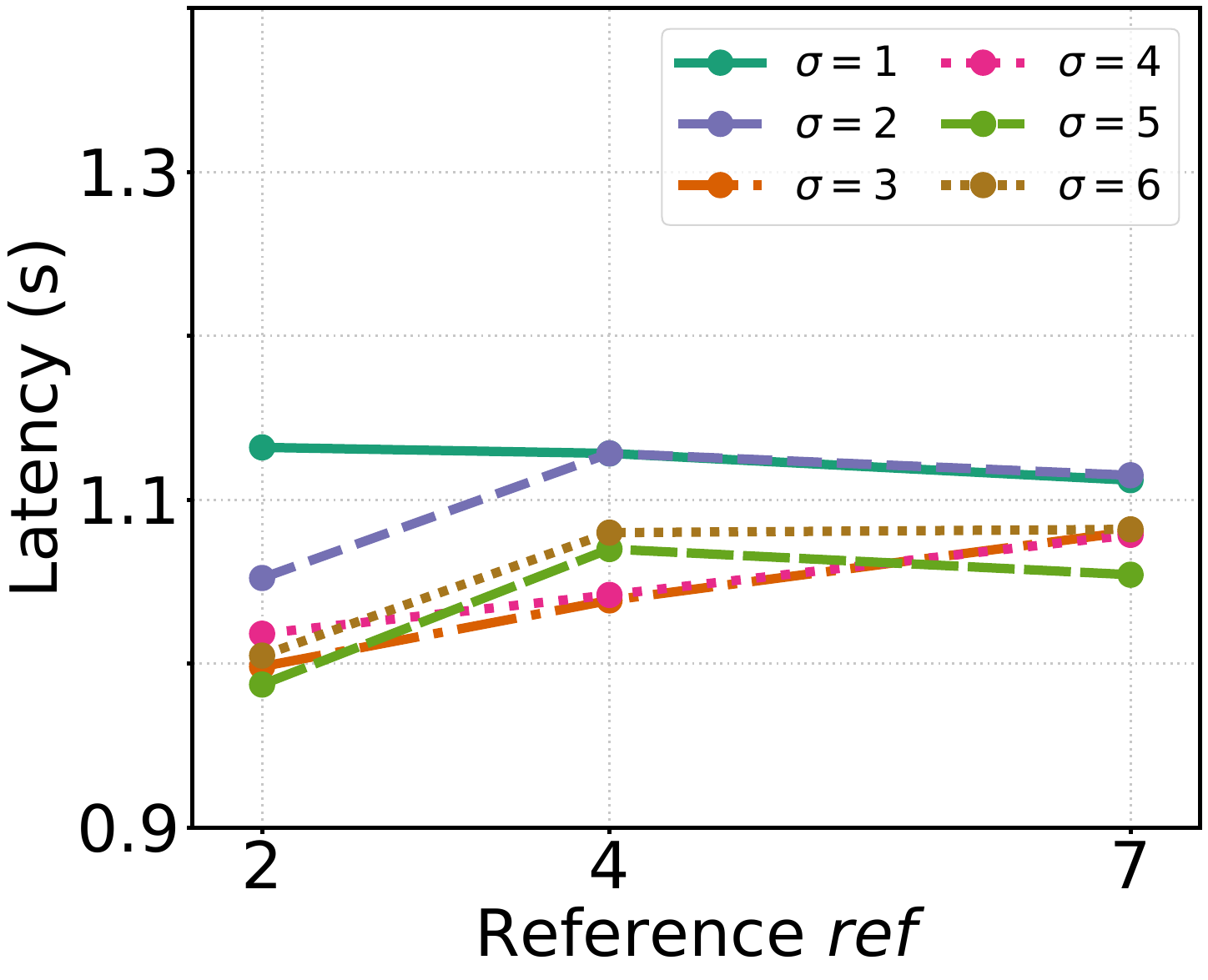}
        \caption{Effect of $\sigma$ and $\mathit{ref}$ under N1.}
        \label{fig:exp2-a}
    \end{subfigure}%
    \begin{subfigure}[b]{0.25\linewidth}
        \centering
        \includegraphics[height=3.5cm, keepaspectratio]{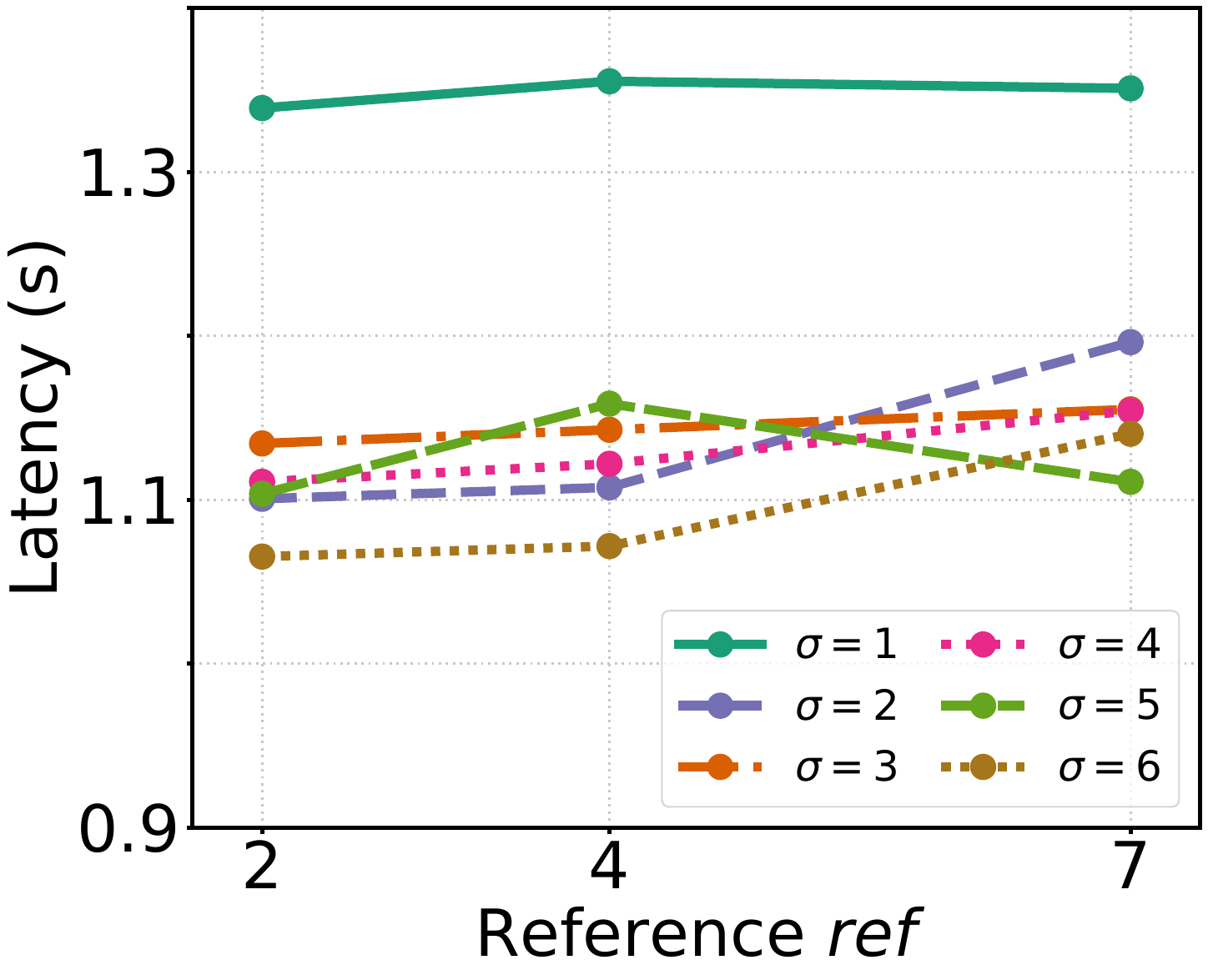}
        \caption{Effect of $\sigma$ and $\mathit{ref}$ under N2.}
        \label{fig:exp2-b}
    \end{subfigure}%
    \begin{subfigure}[b]{0.50\linewidth}
        \centering
        \raisebox{0.1em}{
            \includegraphics[height=3.5cm, keepaspectratio]{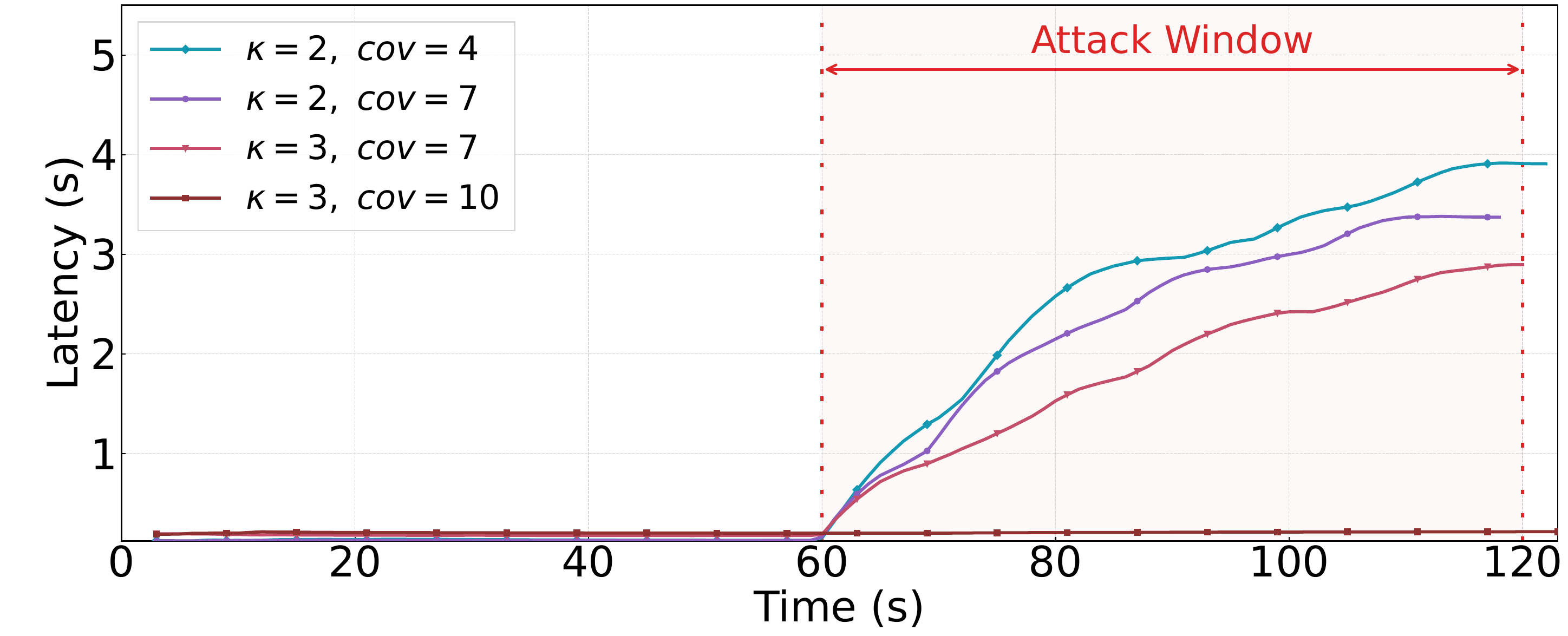}
        }
        \caption{$\kappa,\mathit{cov}$ with adversaries.}
        \label{fig:exp3-kappa-Byzantine}
    \end{subfigure}

    \par\medskip

    \begin{subfigure}[b]{0.25\linewidth}
        \centering
        \includegraphics[height=3.5cm, keepaspectratio]{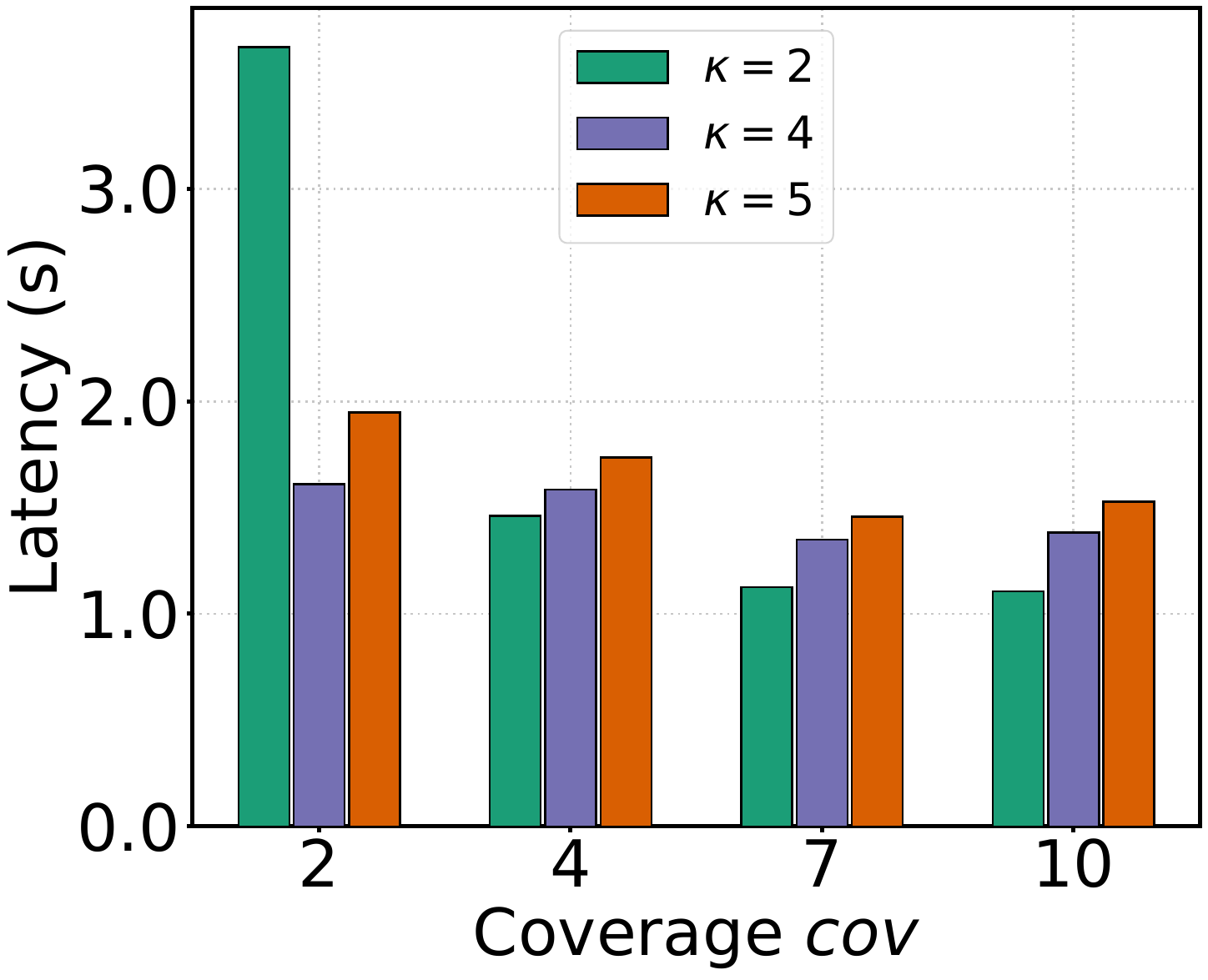}
        \caption{$\kappa$, $\mathit{cov}$ under N1.}
        \label{fig:exp2-d}
    \end{subfigure}\hfill
    \begin{subfigure}[b]{0.25\linewidth}
        \centering
        \includegraphics[height=3.5cm, keepaspectratio]{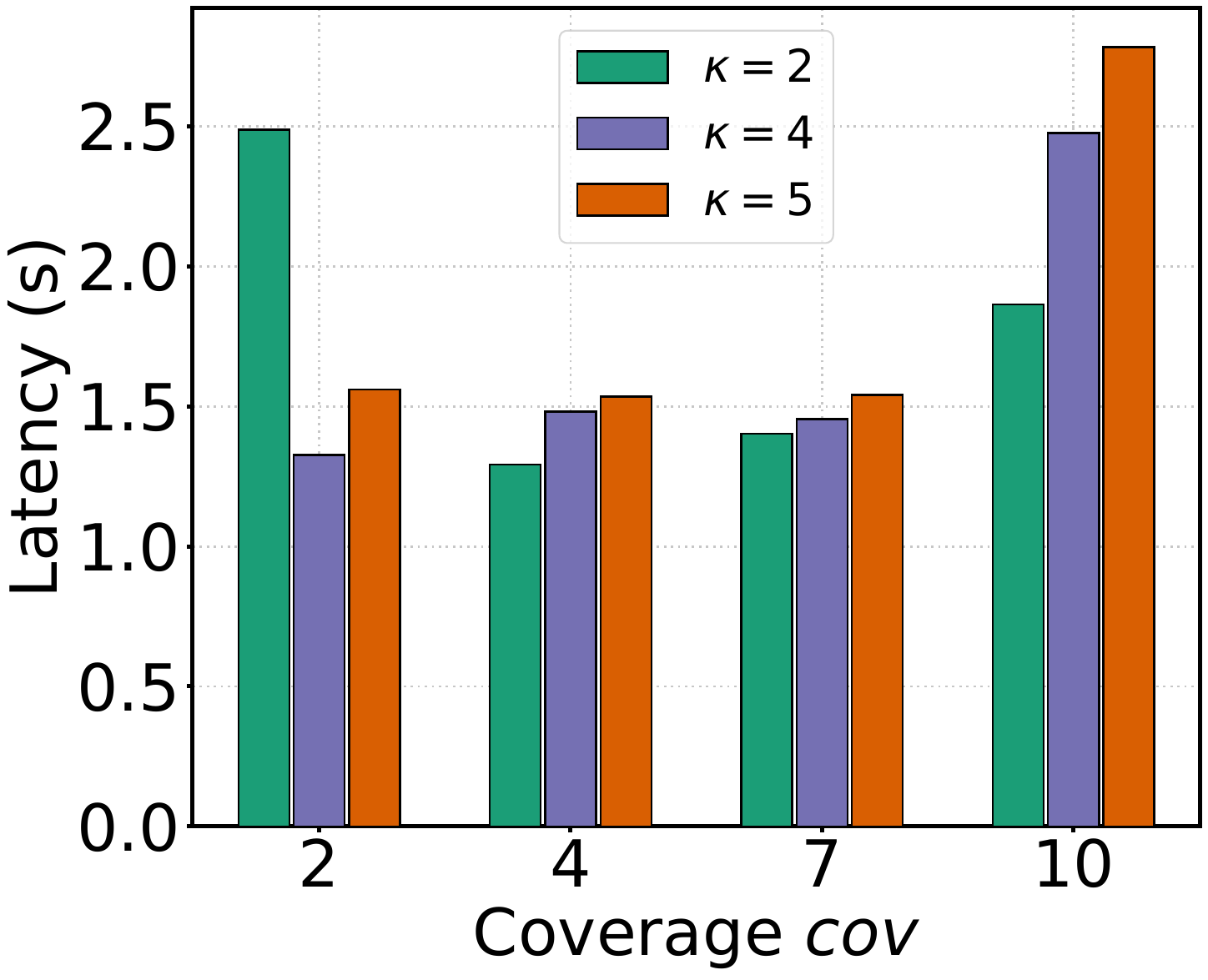}
        \caption{$\kappa$, $\mathit{cov}$ under N2.}
        \label{fig:exp2-e}
    \end{subfigure}\hfill
    \begin{subfigure}[b]{0.25\linewidth}
        \centering
        \includegraphics[height=3.5cm, keepaspectratio]{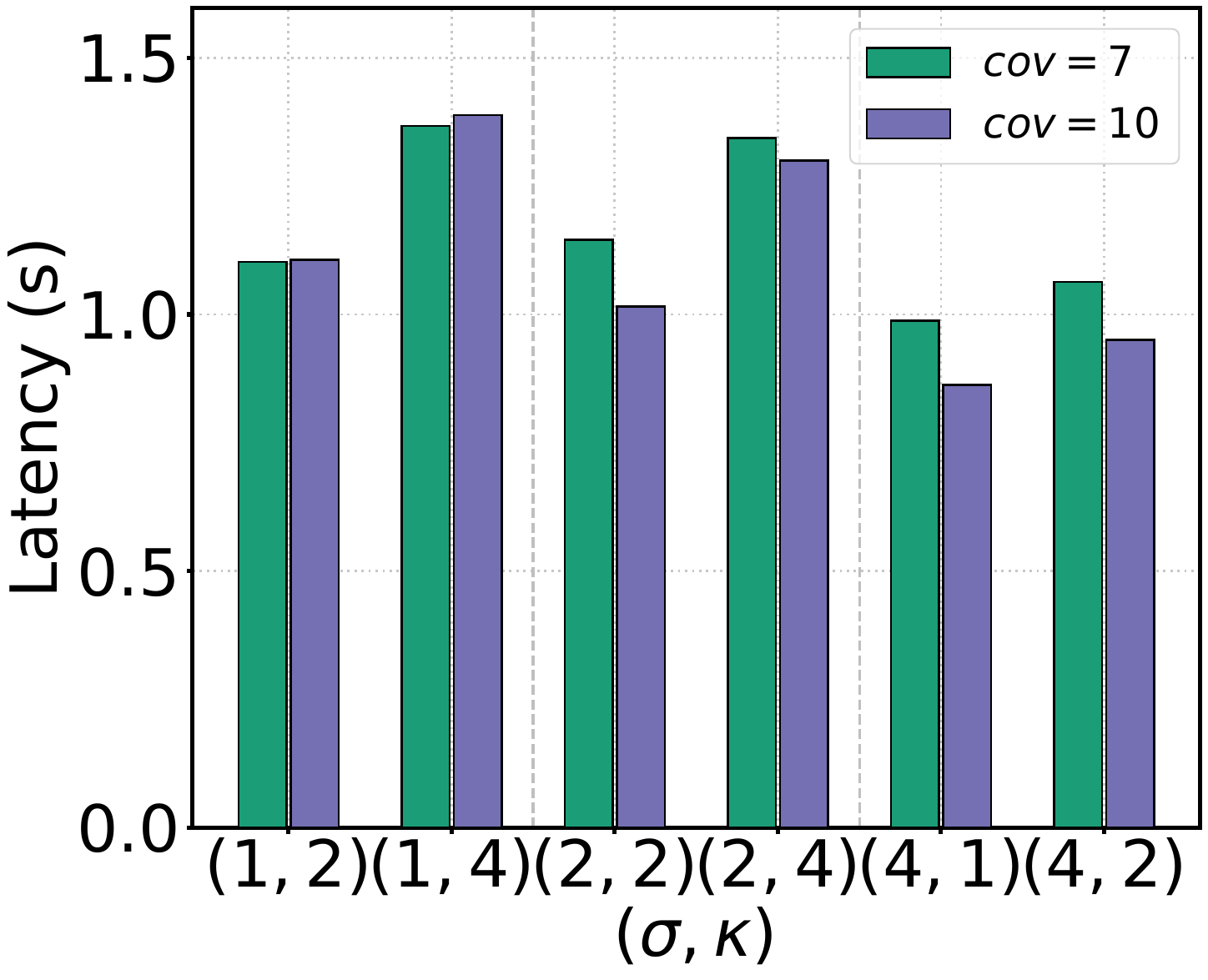}
        \caption{$\sigma$, $\kappa$ for diverse $\mathit{cov}$ under N1.}
        \label{fig:exp2-f}
    \end{subfigure}\hfill
    \begin{subfigure}[b]{0.25\linewidth}
        \centering
        \includegraphics[height=3.5cm, keepaspectratio]{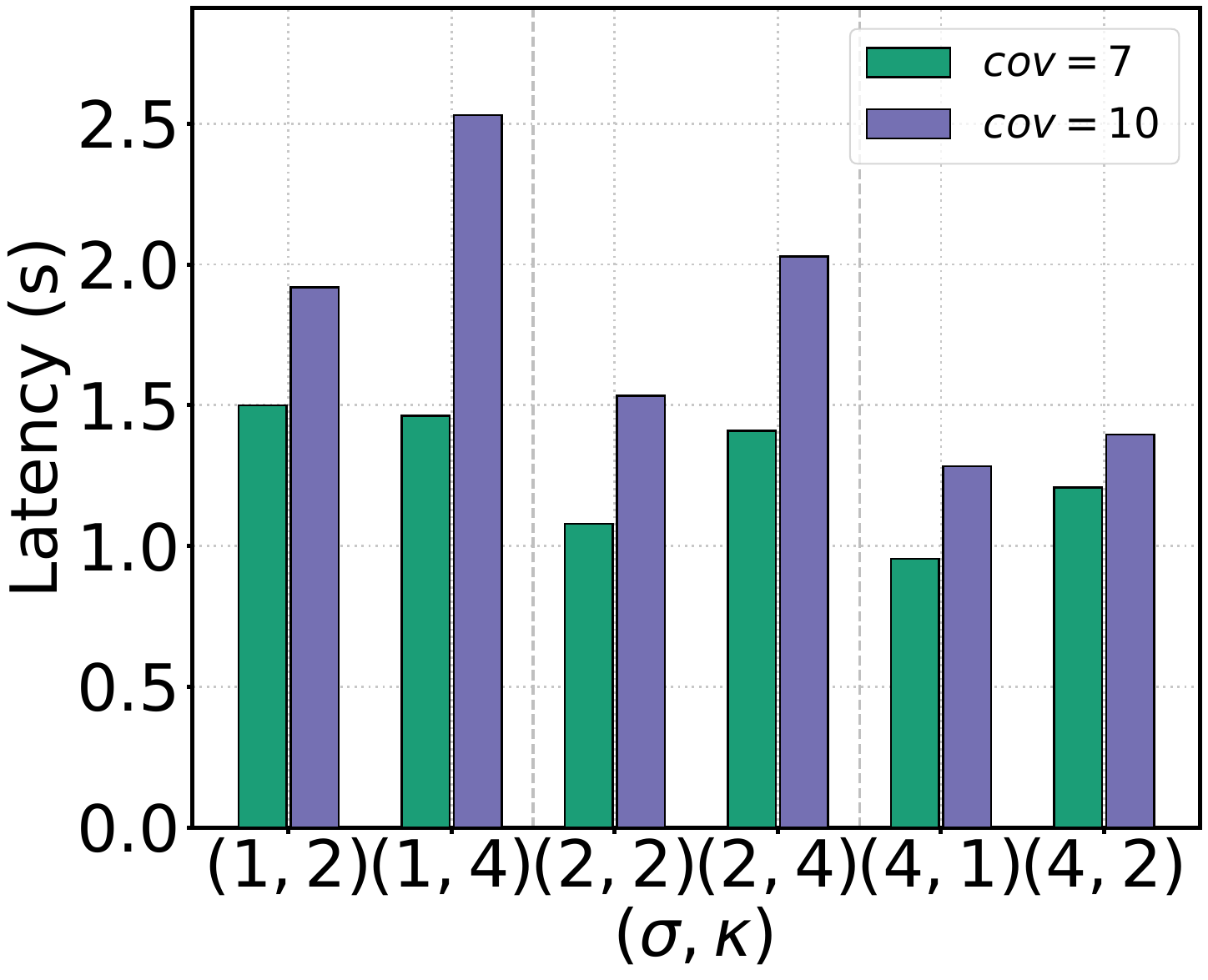}
        \caption{$\sigma$, $\kappa$ for diverse $\mathit{cov}$ under N2.}
        \label{fig:exp2-g}
    \end{subfigure}
    \caption{\yyu{Impact of framework parameters under benign and Byzantine settings. N1 and N2 represent homogeneous and geographically heterogeneous node deployments, respectively.}}
    \label{exp2}
    
    \end{minipage}%
    }
    \vspace{-1em}
\end{figure*}

\subsection{Implementation and Experimental Setup} \label{sec:analysis}

\bheading{Implementation.}
We implement \sysname on top of the public Tusk codebase~\cite{danezis2022narwhal}, which is written in Rust and uses Tokio~\cite{tokio} for asynchronous networking\footnote{The source code is at \url{https://github.com/michaelwcl323/manta-nsdi27}.}. 

The implementation uses \texttt{ed25519-dalek} for signature generation and verification, TCP for reliable peer-to-peer communication, and RocksDB~\cite{rocksdb_website} for persistent storage of consensus data. For fair comparison, we also implement four state-of-the-art asynchronous counterparts on the same platform, following their open-source designs. The key configurations are as follows.

\begin{packeditemize}
    \item \textbf{Tusk and DAG-Rider.} We instantiate these protocols by reconfiguring the common framework parameters. Tusk is obtained by setting $ref=2f+1$, while DAG-Rider is obtained by setting $\kappa=3$ and $ref=2f+1$.

    \item \textbf{Chitu.} Chitu is built on DAG-Rider with fast-path optimizations in the consensus layer. It is configured with $\kappa=3$ and $ref=2f+1$, and further incorporates an adaptive waiting mechanism to ensure that the fast path is triggered under normal conditions.

    \item \textbf{Mahi-Mahi.} Mahi-Mahi is configured with $\kappa=3$ and $ref=2f+1$. We also replace reliable broadcast with best-effort broadcast to construct an uncertified DAG, and re-implement the consensus layer accordingly.
\end{packeditemize}

\bheading{Experimental setup.} 
We deploy \sysname on the CloudLab platform~\cite{duplyakin2019design} using C6220 instances with 8 CPUs, 64\,GB RAM, and Ubuntu 22.04.2 LTS. To approximate realistic deployments, we refer to the validator distributions of Sui~\cite{suivision_validators} and Aptos~\cite{aptos_validators_explorer}, two representative DAG-based public blockchains, and construct a WAN with three clusters: Cluster A containing 60\% of nodes, Cluster B 30\%, and Cluster C 10\%. Inter-cluster latency and jitter are emulated using Linux \texttt{tc-netem}. The one-way delays are configured as follows: A--B: 51.3\,ms (1.0\,ms jitter), B--C: 76.6\,ms (1.5\,ms jitter), and A--C: 117.15\,ms (2.5\,ms jitter). We do not inject additional delay within clusters.

Due to limited availability of C6220 instances on CloudLab, experiments requiring 50 nodes are conducted on AWS EC2~\cite{aws} using \texttt{t3.2xlarge} instances with 8 vCPUs, 30\,GB memory, and Ubuntu 22.04.4 LTS. Nodes and clients are deployed across three regions: Frankfurt (\texttt{eu-central-1}, 60\%), Hong Kong (\texttt{ap-east-1}, 10\%), and Ohio (\texttt{us-east-2}, 30\%), with RTTs between regions ranging from 99\,ms to 203\,ms.

Each node is paired with a client that submits a fixed number of 512\,B transactions every 50\,ms. Nodes batch received transactions into newly generated vertices. Following standard blockchain benchmarking methodology, we define consensus latency as the time from when a transaction is first received by a node until it is successfully ordered. Unless otherwise stated, we use $(\sigma=2, ref=f+1, cov=2f+1, \kappa=2)$ as the default configuration of \sysname.

\begin{figure*}[!t]
    \centering
        \resizebox{1\textwidth}{!}{%
    \begin{minipage}{\textwidth}
    \centering
    
    \includegraphics[width=0.45\linewidth]{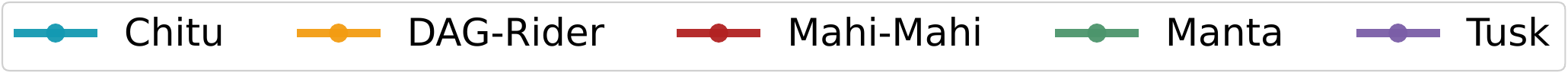}
    
    \begin{subfigure}{0.3\linewidth}
        \centering
        \includegraphics[width=\linewidth]{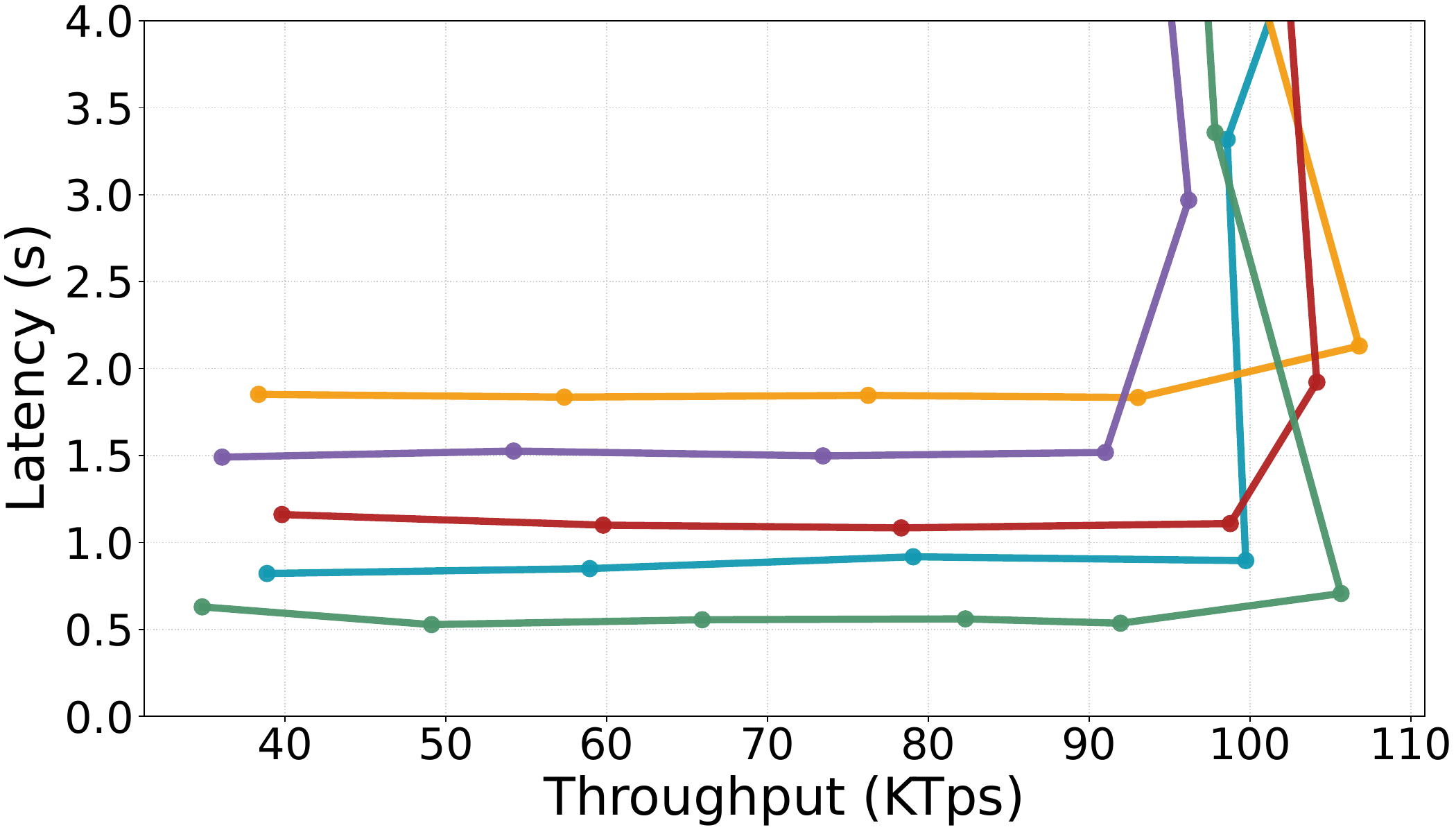}
        \caption{$n$=10, fault-free.}
        \label{fig:exp3n10}
    \end{subfigure}%
    \hspace{1em}
    \begin{subfigure}{0.3\linewidth}
        \centering
        \includegraphics[width=\linewidth]{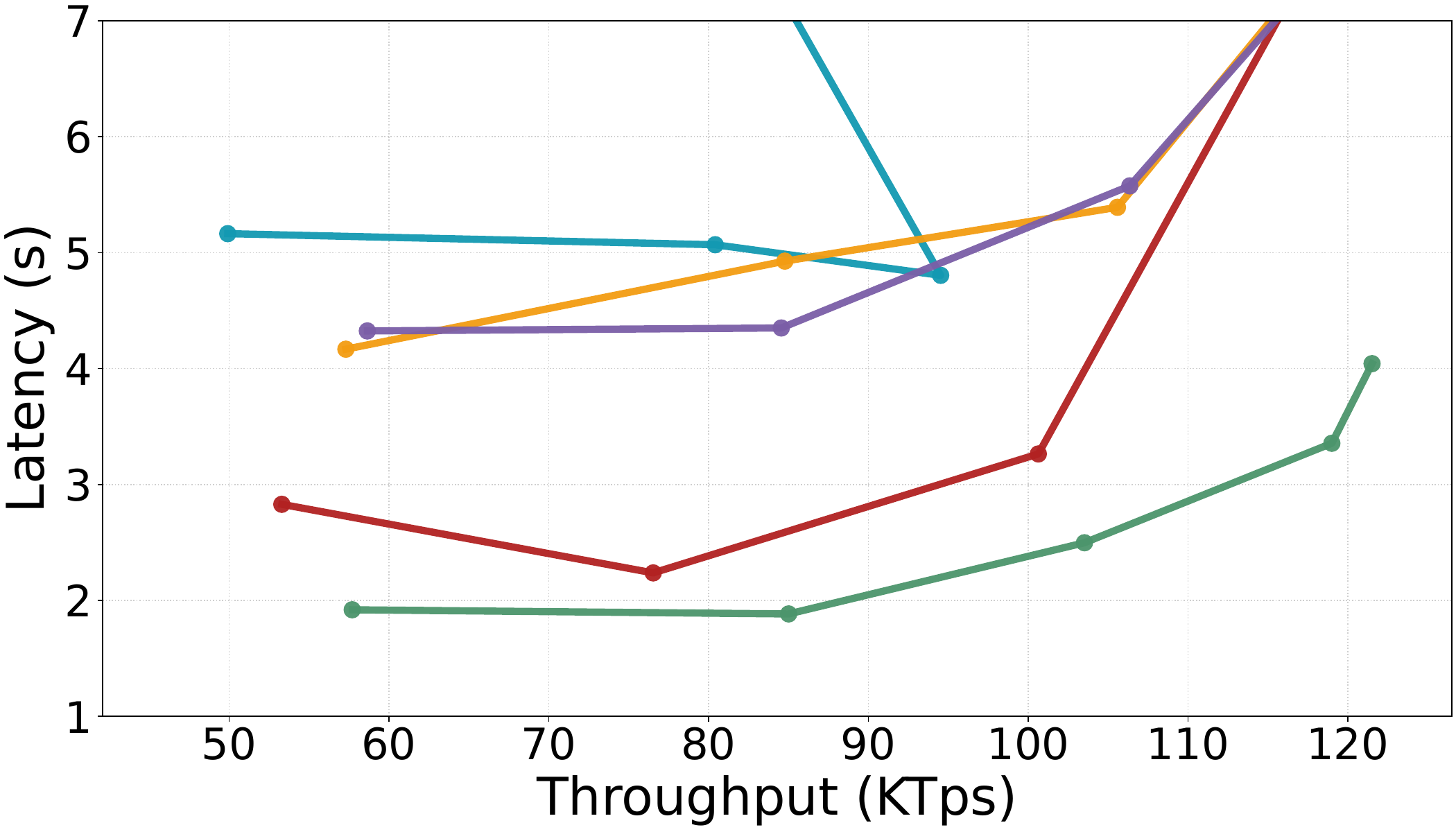}
        \caption{$n$=50, fault-free.}
        \label{fig:exp3n50}
    \end{subfigure}%
    \hspace{1em} 
    \begin{subfigure}{0.3\linewidth}
        \centering
        \includegraphics[width=\linewidth]{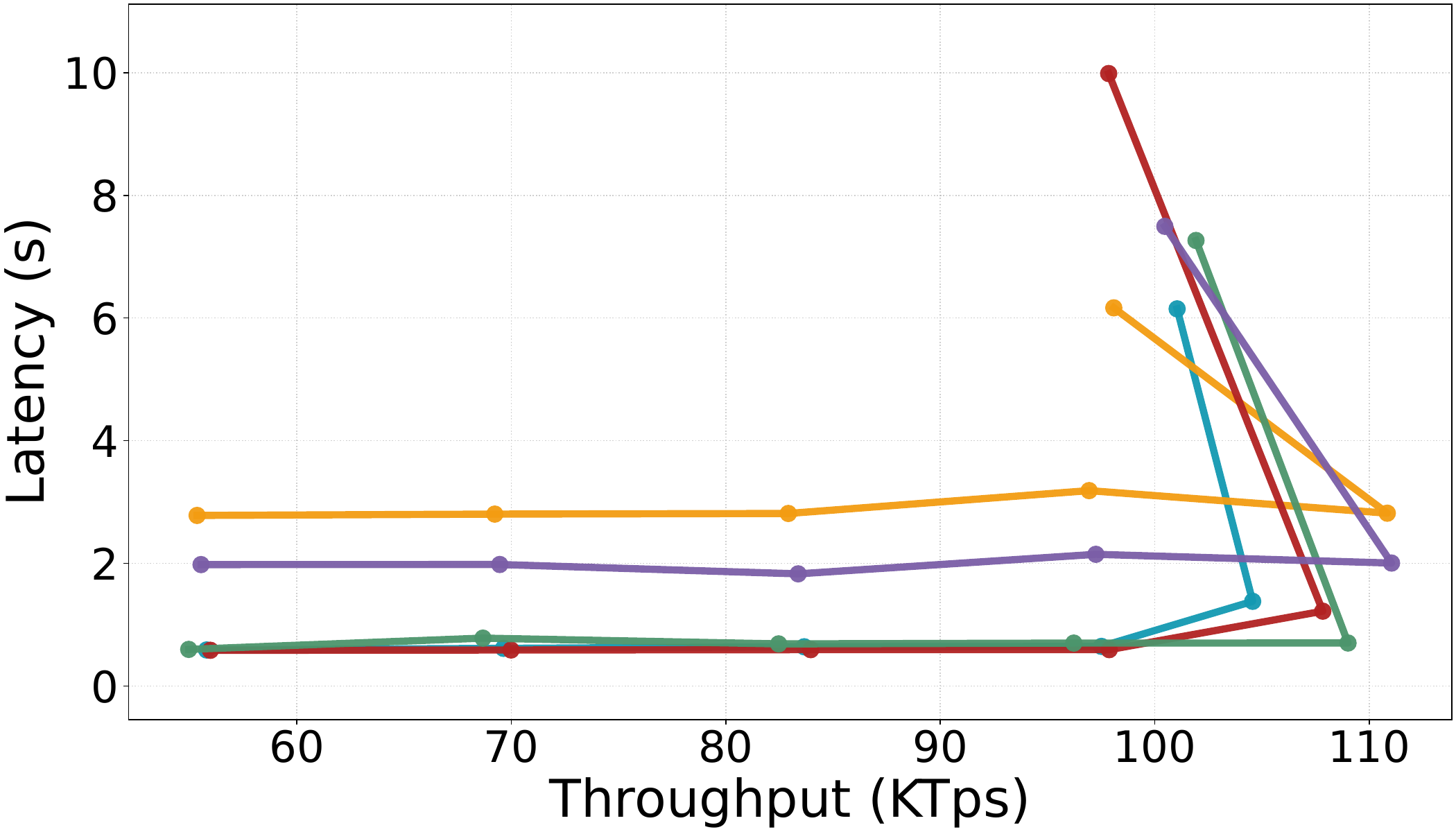}
        \caption{\yyu{$n$=10, 3 faults.}}
        \label{fig:exp3Byzantine}
    \end{subfigure}
    
    \caption{Throughput vs. latency in fault-free and \yyu{fault} settings.}
    \label{fig:exp3_main} 
        \end{minipage}%
    }
    \vspace{-1em}
\end{figure*}

\subsection{Design Parameter Trade-off} \label{subsec:theory-align}
We next evaluate the four key parameters $(\sigma, ref, cov, \kappa)$ in the framework \yyu{under both homogeneous and heterogeneous network conditions} and examine the trade-offs among them. The first two parameters shape DAG growth and progression, and therefore primarily affect latency. The latter two govern the robustness of the commit condition and are introduced to preserve safety and liveness in the worst case.

\bheading{The impact of $\sigma$ and $ref$.}
We fix $cov=2f+1$ and $\kappa=2$, and evaluate configurations with \yyu{$\sigma \in \{1,2,3,4,5,6\}$ and $ref \in \{2,4,7\}$} under $n=10$ and $f=3$. \yyu{Fig.~\ref{fig:exp2-b} shows that under network heterogeneity, $\sigma>1$ achieves lower latency than $\sigma=1$, as a larger span enables more flexible DAG progression and better exploits the flexible coin; a smaller $ref$ further reduces synchronization delay. In the homogeneous setting (Fig.~\ref{fig:exp2-a}), this advantage becomes smaller as inter-replica delay variation decreases. Latency remains similar for $\sigma>2$ because increasing the span does not introduce additional serial coverage dependencies once the solid step is formed.} 

\bheading{The impact of $\kappa$ and $cov$.}
\yyu{The parameters $cov$ and $\kappa$ govern worst-case robustness and the number of solid steps within a wave, respectively. We vary $cov \in \{2,4,7,10\}$ and $\kappa \in \{2,4,5\}$ under both network settings, first fixing $\sigma=1$, where $ref=cov$. Figs.~\ref{fig:exp2-d} and~\ref{fig:exp2-e} show that latency generally increases with $\kappa$, while at low $cov$, additional solid steps can improve accumulated connectivity and reduce latency. In the homogeneous setting, latency varies little with $cov$, whereas under network heterogeneity, a large $cov$ increases latency by requiring more slow replicas.} \yyu{We further vary $\sigma$ and $\kappa$ with $ref=4$ and $cov \in \{7,10\}$. For example, $(\sigma,\kappa)=(1,4),(2,2),(4,1)$ have the same number of rounds per wave but different internal constructions. Figs.~\ref{fig:exp2-f} and~\ref{fig:exp2-g} show that fewer, longer solid steps achieve lower latency than more, shorter solid steps. In the homogeneous setting, increasing $cov$ can further reduce latency by improving connectivity with little additional synchronization cost.}

We also evaluate a Byzantine scenario, where faulty replicas selectively delay messages to reduce shared DAG connectivity. Fig.~\ref{fig:exp3-kappa-Byzantine} shows that when $cov=f+1$, latency increases sharply under attack, \y{while the degradation gradually decreases as $cov$ and $\kappa$ increase and is nearly neutralized at $\kappa = 3$, $cov = 10$.} This is because larger $cov$ and $\kappa$ can yield a larger common core under this adversarial schedule, leaving less room for the attacker to disrupt commitment. 

\yyu{Based on these observations, in heterogeneous networks, the selected configuration of \sysname captures most of the flexibility benefit while satisfying correctness requirements.}

\subsection{Performance of \sysname} \label{subsec:comparison}
\subsubsection{Fault-Free Performance} \label{subsec:fault-free}
Figs.~\ref{fig:exp3n10} and~\ref{fig:exp3n50} compare \sysname against DAG-Rider, Tusk, Chitu, and Mahi-Mahi under fault-free settings (\ie, all nodes remain active throughout execution) at two system scales: $n=10$ and $n=50$ replicas. \sysname achieves the lowest latency at $n=10$, reducing latency by 30\%--70\%. Two mechanisms drive this gain. First, flexible DAG growth allows \sysname to progress with smaller quorums and form useful connectivity earlier, rather than waiting for rigid per-round thresholds. This is particularly beneficial in heterogeneous WANs with uneven cross-region delays. Second, the flexible coin enables commitment as soon as sufficient connectivity emerges, allowing earlier finalization. Although Chitu and Mahi-Mahi also partially mitigate rigid waiting through optimistic paths or lightweight dissemination, uneven delays can still enlarge divergence in local DAG views, making their commit conditions slower or more expensive to satisfy.

Fig.~\ref{fig:exp3n50} shows that the advantage of \sysname persists at a larger scale. It reduces latency by around 70\% compared with Chitu, Tusk, and DAG-Rider. Compared to Mahi-Mahi, the second-best baseline in this experiment, \sysname delivers around 30\% lower latency and 18\% higher throughput, demonstrating robust scalability in larger deployments.

\begin{figure}[!t]
    \centering
    \includegraphics[width=0.4\linewidth]{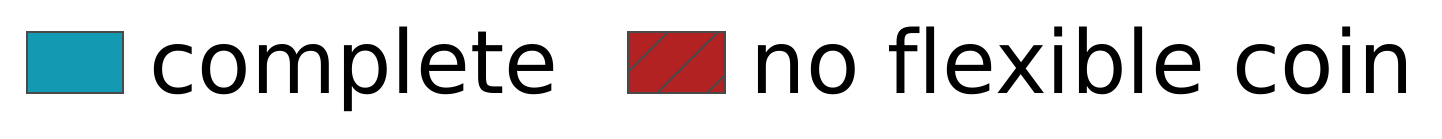} \\[0.1em]
 
    \begin{subfigure}{0.45\linewidth}
        \centering
        \includegraphics[width=\linewidth]{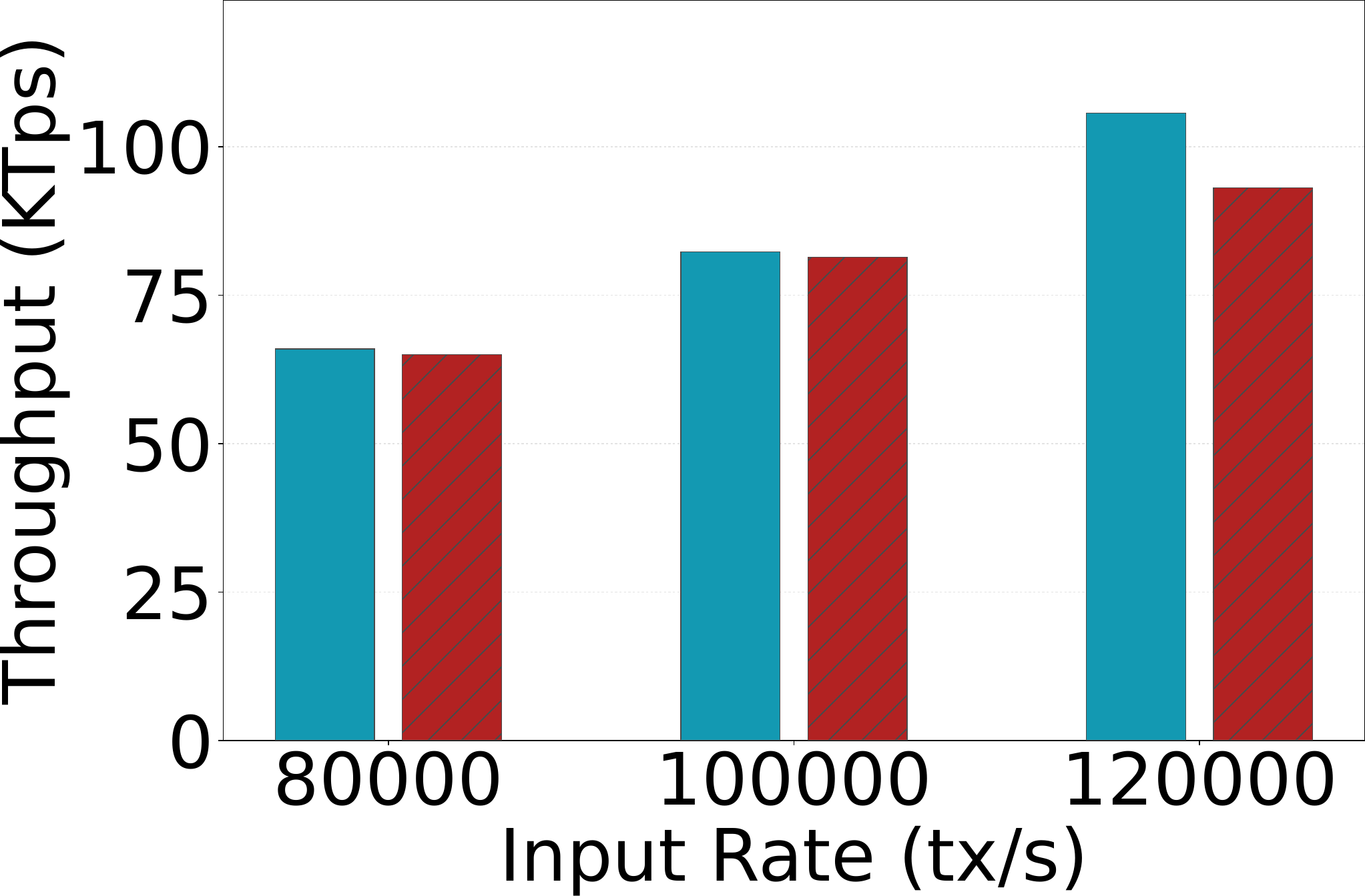}
        \caption{Throughput.}
        \label{fig:breakdowntps}
    \end{subfigure}%
    \hspace{0.5em}
    \begin{subfigure}{0.45\linewidth}
        \centering
        \includegraphics[width=\linewidth]{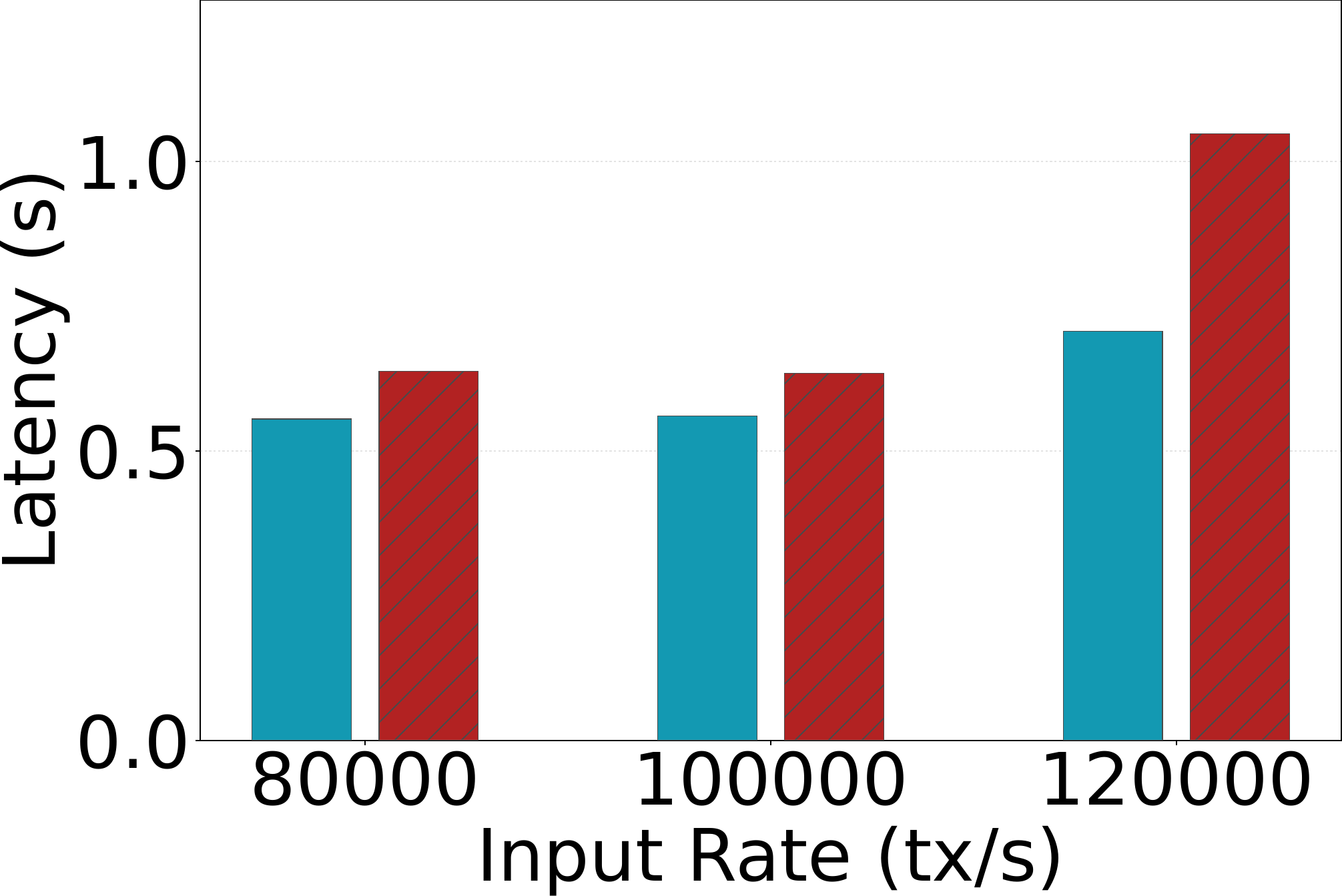}
        \caption{Latency.}
        \label{fig:breakdownlatency}
    \end{subfigure}%
    \caption{Ablation of flexible coin.}
    \label{fig:exp4_main} 
    \vspace{-1em}
\end{figure}

\subsubsection{Performance under Faults} \label{subsec:fault}
Fig.~\ref{fig:exp3Byzantine} evaluates the resiliency of \sysname under silent faults, where up to $f$ nodes are silent (\ie, crash or intentionally stop participating). We focus on silent faults because they directly stress DAG progression and are a common practical source of slowdown in WAN deployments. \yyu{As shown in Fig.~\ref{fig:exp3Byzantine}, \sysname shows performance comparable to the other protocols under faults. The remaining replicas can form more complete connectivity, potentially making it easier for baseline protocols to follow their fast paths and leaving less room for \sysname to benefit from faster progression.}

\vspace{-0.6em}
\subsection{Breakdown and Resource Utilization} 
\label{subsec:breakdown}

\yyu{Fig.~\ref{fig:exp4_main} compares \sysname with a variant that disables the flexible coin.} The results show that the flexible coin consistently reduces latency by enabling replicas to check commitment as soon as sufficient connectivity emerges, rather than waiting until the completion of a wave. Its benefit becomes more pronounced as the input rate increases. Under heavier workloads, faster replicas generate additional vertices while waiting to advance to the next solid step, \yyu{which can be committed by the flexible coin once they become sufficiently connected.}

\yyu{We further measure the effect of flexible DAG growth on round duration. Flexible DAG growth introduces more rounds within a wave but shortens each round from about 139.4 ms to 75.8 ms. The flexible coin complements this progression by allowing replicas to commit before the final round. Together, flexible DAG growth speeds up round progression, and the flexible coin converts this progress into earlier commitment.}

\begin{table}[!t]
\centering
\small
\caption{Resource utilization under system breakdown.}
\label{tab:hardware}
\renewcommand{\arraystretch}{0.95}
\setlength{\tabcolsep}{4pt}
\scalebox{1}{
\begin{tabular*}{\columnwidth}{@{\extracolsep{\fill}}lccc@{}}
\toprule
Variants & Complete & No flexible coin & Tusk \\
\midrule
CPU Avg. & 8.744\% & 8.831\% & 8.272\% \\
BW Avg.  & 45.908\% & 45.955\% & 38.313\% \\
\bottomrule
\end{tabular*}
}
\end{table}

\bheading{Resource Utilization.} Table~\ref{tab:hardware} reports the resource usage of \sysname. The flexible coin incurs negligible additional cost, with CPU utilization and bandwidth consumption remaining nearly unchanged compared with the variant without it. The complete protocol uses slightly more resources than Tusk, mainly due to the metadata required for solid-step tracking. \yyu{In addition, Boost edges introduce additional cross-round references and may incur DAG state management overhead. Overall, the average per-vertex metadata is 1,465.8 B for \sysname, compared with 388.0 B for Tusk.}

\bheading{\y{Limitations and Trade-offs.}}
\yyu{Beyond these results, several limitations and open directions remain. A lower reference threshold may disadvantage slow replicas, while $\tau_{\mathrm{cand}}$ introduces a trade-off between early recovery and worst-case latency under Byzantine behavior. Dynamic adaptation would require coordinated updates, \eg, through observation-driven mechanisms~\cite{de2026barnacle} or BFT reconfiguration~\cite{meng2024rondo}. Overall, these limitations motivate dynamic parameter adaptation and finer-grained coin scheduling as future work.}

\section{Related Work} \label{sec:related}
\bheading{Asynchronous BFT.} 
Asynchronous BFT protocols, such as HoneyBadgerBFT~\cite{miller2016honey} and the Dumbo family~\cite{guo2020dumbo,guo2022speeding}, avoid timing assumptions by leveraging randomization to achieve consensus in asynchronous environments and are therefore robust to WAN uncertainty. More recently, asynchronous DAG-based protocols~\cite{keidar2021all,danezis2022narwhal,huang2025chitu,jovanovic2025mahi} further improve performance through structured parallelism. DAG-Rider~\cite{keidar2021all} is the first to realize asynchronous BFT over a DAG. Tusk~\cite{danezis2022narwhal} extends this direction with a mempool layer that decouples metadata from data dissemination. Chitu~\cite{huang2025chitu} adds a fast path that directly commits under favorable conditions. 
Mahi-Mahi~\cite{jovanovic2025mahi} further improves performance through uncertified DAG construction and lightweight dissemination. However, these systems largely retain fixed DAG progression rules, whereas \sysname focuses on redesigning DAG progression itself to better exploit heterogeneity in WANs.

\bheading{Partially synchronous DAG-based BFT.} Another line of DAG-based protocols~\cite{babel2025mysticeti,shrestha2025sailfish,arun2025shoal++,spiegelman2022bullshark,giridharan2024autobahn,shrestha2025towards} assumes partial synchrony networks and uses designated leaders to accelerate commitment, often through optimistic paths. These protocols perform well when underlying timing assumptions hold, but can suffer performance degradation under highly heterogeneous WAN conditions~\cite{miller2016honey}. By contrast, \sysname operates under a fully asynchronous model without synchrony assumptions.

\bheading{Uncertified DAG-based BFT.} Some recent systems build DAGs using uncertified vertices~\cite{jovanovic2025mahi,babel2025mysticeti} and replace reliable broadcast with best-effort dissemination to reduce overhead. While efficient in favorable environments, such designs are more sensitive to packet loss and delay variation in WANs, where replicas may miss data and must fetch it later. The data recovery can block the critical path and degrade end-to-end performance. Accordingly, we focus on certified DAG constructions as prior designs~\cite{keidar2021all,danezis2022narwhal,huang2025chitu}.

\bheading{Flexibility in DAG-based BFT.} Several works introduce flexibility into DAG-based consensus. Angelfish~\cite{yu2025angelfish} adapts vertex and vote reference rules, while Sparse DAG~\cite{anoprenko2025dags} reduces metadata through sparse references. These designs explore different forms of flexibility, while \sysname focuses on relaxing round progression under asynchrony. Lemonshark~\cite{hu2026lemonshark} adopts a Narwhal-style DAG and uses key-space sharding for early finality, which is orthogonal to \sysname. \yyu{Avalanche~\cite{rocket2019scalable} avoids explicit round boundaries and progresses through probabilistic sub-sampling, with safety characterized through probabilistic convergence. Round-based protocols use explicit rounds and waves, simplifying safety analysis and commit reasoning but coupling progress to round boundaries. \sysname follows this model while relaxing round advancement conditions.}

\section{Conclusion} \label{sec:conclusion}
We revisit a core assumption in asynchronous DAG-based BFT consensus: connectivity must accumulate through rigid round-by-round progression. We show that this rigidity becomes a major bottleneck in heterogeneous WANs. We present a flexible DAG growth framework that decouples commit requirements from round advancement. Based on it, we design \sysname, which combines \textit{flexible growth} for earlier connectivity and a \textit{flexible coin} for earlier commitment. Experiments show that \sysname consistently outperforms prior asynchronous DAG-based BFT protocols \y{under fault-free heterogeneous deployments}, reducing latency by at least 30\% while preserving robustness. 

\clearpage
\normalem
\bibliographystyle{unsrt}
\bibliography{bib}

\appendix
\section{Building Blocks} \label{app:building blocks}

\subsection{Reliable Broadcast (RBC)} 
\label{subsec: RBC}

RBC~\cite{bracha1987asynchronous} is a fundamental primitive in consensus protocol designed to ensure non-equivocation. It guarantees that even under a Byzantine sender, all honest parties eventually deliver the same message or none at all. Formally, each replica $p_k$ can broadcast messages by calling $Bcast(m,r)$, where $m$ is the message, and $r \in \mathbb{N}$ is a round number. By running the protocol, every $p_i$ has an output $Deliver(m,r,p_k)$, where $m$ is the message, $r$ is a round number, and $p_k$ is the replica that invoked the corresponding $Bcast(m,r)$. RBC typically satisfies the following properties:

\bheading{Agreement.} If an honest replica $p_i$ outputs $Deliver(m,r,p_k)$, then every other honest replica $p_j$ eventually outputs $Deliver(m,r,p_k)$.

\bheading{Integrity.} For each round $r$, an honest replica $p_i$ outputs $Deliver(m,r,p_k)$ at most once regardless of $m$.

\bheading{Validity.} If an honest replica $p_k$ calls $Bcast(m,r)$, then every honest replica $p_i$ eventually outputs $Deliver(m,r,p_k)$.

\subsection{Global Perfect Coin} \label{subsec: coin}
The global perfect coin~\cite{libert2014born,boneh2001short,miller2016honey} (also known as common coin) is a cryptographic primitive that allows a set of distributed nodes to reach a consensus on a common, unpredictable random value in an asynchronous network. By providing shared randomness, it serves as a critical mechanism to circumvent the FLP impossibility~\cite{fischer1985impossibility}. For each instance $w \in \mathbb{N}$, any replica $p_i \in \Pi$ can call $ChooseLeader_i(w)$ to identify the leader for that instance. The coin ensures that all honest replicas receive the same leader $p_j \in \Pi$, and this outcome remains unknown to the adversary until the call is made. We denote $X_w$ as the random variable representing the leader selection outcome for instance $w$. The global perfect coin provides the following guarantees:

\bheading{Agreement.} If two honest replicas $p_i, p_j \in \Pi$ invoke $ChooseLeader_i(w)$ and $ChooseLeader_j(w)$ respectively, they obtain the same output, \ie, $p^*$. 

\bheading{Termination.} If at least $f+1$ replicas invoke $ChooseLeader(w)$, then every invocation by an honest replica eventually returns a value.

\bheading{Unpredictability.} As long as fewer than $f+1$ replicas have invoked $ChooseLeader(w)$, the outcome $X_w$ remains indistinguishable from a uniform random distribution to the adversary, except with negligible probability $\epsilon$.

\bheading{Fairness.} The coin is unbiased such that for every instance $w \in \mathbb{N}$, $\forall p_j \in \Pi, \Pr[X_w = p_j] = 1/n$.

\subsection{Common Core Abstraction}
We follow the common core abstraction used in prior works~\cite{abraham2023perfectly,dolev2016some}. Each replica $p_i$ starts with an input $x_i$, and outputs a set of received values. Let $V_i^t$ denote the set held by replica $p_i$ after round $t$, for $t= 1,2,...,k$. The common core protocol proceeds in k rounds, with each round aggregating the sets/inputs received from the preceding round: 

\noindent\textbf{Round $1$.} Each replica broadcasts its input $x_i$ using reliable broadcast. $V_i^1 = \{x_i\}$.

\noindent \textbf{Round $2$.} Upon collecting at least $ref$ inputs $x_j$ from round 1, $p_i$ unions them into a set $V_i^2 = \bigcup V_j^{1}$ and sends it to all replicas.

\noindent \textbf{Round $t$ ($t\geq 3$).} Upon collecting at least $ref$ sets from round $t-1$, $p_i$ unions them into a set $V_i^t = \bigcup V_j^{t-1}$, and sends it to all replicas.

\noindent\textbf{Output.} After completing round $k$, each replica outputs $V_i^k$.

The key property of this abstraction is that there exists a set $S$ such that $S \subseteq V_i^k$ for every honest replica $p_i$. We refer to such a set $S$ as a common core. The size of the common core, \ie, $|S|$, captures how many first-round input values are guaranteed to be commonly visible at the end of the protocol. As we mentioned in Sec. \ref{subsec:problem_statement}, $|S|$ determines the commitment probability of a wave. We model the latency under the following three scenarios (Heterogeneous Network \ref{app: hetero modeling}, Random-delay Network \ref{app: random modeling}, and Adversarial Network \ref{subsec: worstcase}).

\section{DAG Growth Proof} \label{app:modeling} 
\subsection{Heterogeneous Network Model} \label{app: hetero modeling}

For analytical tractability, we assume a stable aggregate arrival trend across replicas. Vertices can be viewed as arriving in tiers, where faster tiers reach the reference quorum and become commonly visible earlier than slower tiers. \y{This model abstracts DAG evolution into a shared arrival process, using independent quorum delays without tracking replica-specific views or cross-round causal dependencies. Despite these simplifications, the model captures the key trends in latency and connectivity as the network delay distribution and DAG parameters vary.}

\paragraph{Assumption 1 (Network delay).} Let $d$ denote the point-to-point network delay between two replicas. 
We assume that $d$ follows a Log-normal distribution:

$$ \ln d \sim \mathcal{N}(\mu,std^2)$$ 

Let $f(x)$ be the probability density function (PDF) and $F(x) = P(d \le x)$ be the cumulative distribution function (CDF), representing the probability that a vertex from a random replica $i$ reaches the local replica within time $x$.

\begin{definition} [Round Duration] Consider a system with $n$ replicas. A round can advance only after collecting vertices from at least $ref$ distinct replicas. Let $d_{(ref)}$ denote the $ref$-th order statistic of $n$ i.i.d. copies of $d$. Then $d_{(ref)}$ represents the time required to complete one logical round.
\end{definition}

\begin{lemma}[Distribution of Round Duration]
The CDF of the round duration $d_{(ref)}$ is
$$F_{d_{(ref)}}(x) = \sum_{j=ref}^{n} \binom{n}{j} [F(x)]^j [1-F(x)]^{n-j}$$
\end{lemma}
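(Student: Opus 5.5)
The statement is the standard order-statistic identity, so I would argue through an event equivalence followed by a binomial count. Fix $x \ge 0$ and consider the $n$ i.i.d.\ delays $d_1,\dots,d_n$, each with CDF $F$ as in Assumption~1. By definition of the $ref$-th order statistic, $d_{(ref)} \le x$ holds exactly when at least $ref$ of the $d_i$ satisfy $d_i \le x$. In the language of the Round Duration definition, the round has completed by time $x$ iff at least $ref$ distinct vertices have arrived by $x$. So the first step is to record
\[
\{d_{(ref)} \le x\} = \Big\{ \textstyle\sum_{i=1}^n \mathbf{1}[d_i \le x] \ge ref \Big\}.
\]
The only care needed is in both directions of this equivalence. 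If $d_{(ref)}\le x$, then $d_{(1)},\dots,d_{(ref)}$ are all at most $x$. Conversely, if $ref$ of the values are at most $x$, the $ref$-th smallest is at most $x$. Ties are handled by the standard convention for order statistics, and they have probability zero anyway since the log-normal distribution is continuous.

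Next I would set $N_x = \sum_i \mathbf{1}[d_i \le x]$. The indicators are independent Bernoulli variables with success probability $F(x)$, by independence and identical distribution of the $d_i$. Hence $N_x \sim \mathrm{Bin}(n, F(x))$. Concretely, for each $j$, the event $N_x = j$ is the disjoint union over the $\binom{n}{j}$ subsets $A$ of size $j$ of the events $\{d_i \le x \ \forall i\in A,\ d_i > x\ \forall i\notin A\}$, each of probability $F(x)^j(1-F(x))^{n-j}$.

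Finally, summing over the disjoint events $N_x=j$ for $j=ref,\dots,n$ gives
\[
F_{d_{(ref)}}(x)=\Pr[N_x\ge ref]=\sum_{j=ref}^{n}\binom{n}{j}[F(x)]^j[1-F(x)]^{n-j},
\]
which is the claim. I expect no real obstacle. The one point worth flagging is the modeling assumption that the $n$ arrival delays are i.i.d.\ draws of $d$, including the local replica's own vertex, as the paper's simplified model stipulates. Without this assumption the binomial form would fail.
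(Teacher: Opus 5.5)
Your proof is correct and follows the same route as the paper: first the event equivalence $\{d_{(ref)}\le x\}$ iff at least $ref$ of the $n$ delays are at most $x$, then the binomial count, which you spell out in more detail through the independent indicators and the disjoint decomposition over subsets. Your flag about the i.i.d.\ assumption is apt, though the paper builds it into the definition of $d_{(ref)}$ as the order statistic of $n$ i.i.d.\ copies of $d$.
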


\begin{proof}
The event $d_{(ref)} \le x$ is equivalent to saying that at least $ref$ among the $n$ delays are no larger than $x$. Since each delay is at most $x$ with probability $F(x)$, the result follows from the Binomial distribution.
\end{proof}

\begin{definition} [Single-round window]
    For a target confidence level $c \in (0,1)$, define
$$W_{\mathrm{step}}(c)=\inf \{t : F_{d_{(ref)}}(t) \ge c\}.$$
This is the minimum physical time within which one round completes with confidence at least $c$.
\end{definition}

\begin{definition}[Wave window]
For a wave consisting of $k$ rounds, let
$$
H_{k-1}=\sum_{i=1}^{k-1} d_{(ref)}^{(i)},
$$
where $d_{(ref)}^{(1)}, \dots, d_{(ref)}^{(k-1)}$ are independent copies of $d_{(ref)}$. Intuitively, a vertex that is observed at the decision point in round $k$ has undergone propagation for $k-1$ rounds. Accordingly, the distribution of $H_{k-1}$ is determined by the $(k-1)$-fold convolution of the distribution of $d_{(ref)}$. For a target confidence level $c \in (0,1)$, we define the wave window as
$$
W(ref,k-1)=\inf\{t: F_{H_{k-1}}(t)\ge c\},
$$
where $F_{H_{k-1}}$ denotes the cumulative distribution function of $H_{k-1}$.
\end{definition}

\begin{definition}[Effective visible set size]
    For any physical time $t$, define
$$\text{Count}(t) = \max \{ j \in \{1, \dots, n\} : F_{d_{(j)}}(t) \ge c \}$$

which represents the number of replicas whose vertices are visible within time $t$ with confidence at least $c$. At the end of round $k$, define the effective visible set size as
$$\Phi(ref, k) = \max \left( ref, \text{Count}(W(ref,k-1)) \right)$$
\end{definition}

\begin{theorem}[Commit probability under heterogeneous delay] \label{thm:commit-prob}
Under the above model, the number of effectively visible first-round vertices after a wave is lower bounded by $\Phi(ref, k)$. If the leader is selected uniformly at random from the $n$ first-round vertices, then the commit probability admits the lower bound
$$P\ge \frac{\Phi(ref, k)}{n}.$$

\end{theorem}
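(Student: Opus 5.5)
We establish the two claims in order: first the lower bound $\Phi(ref,k)$ on the effectively visible first-round set, then the commit probability.

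\textbf{Step 1: the visible set has at least $ref$ vertices.} A replica reaches the decision round $k$ only by collecting at least $ref$ references in each preceding round. Along the causal chain back to round $1$, its round-$2$ ancestor references at least $ref$ distinct first-round vertices. Reachability is transitive, so every round-$k$ vertex observes at least $ref$ first-round vertices.

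\textbf{Step 2: the visible set has at least $\text{Count}(W(ref,k-1))$ vertices.}
\begin{itemize}
\item Under the shared arrival-process abstraction of Assumption~1, the decision point is reached after $k-1$ rounds of propagation, i.e., after time $H_{k-1}$. By the definition of the wave window, $H_{k-1}\le W(ref,k-1)$ with confidence $c$.
\item Write $t=W(ref,k-1)$ and $j^\ast=\text{Count}(t)$. By definition of $\text{Count}$, $F_{d_{(j^\ast)}}(t)\ge c$, so with confidence $c$ at least $j^\ast$ first-round vertices have arrived by time $t$.
\item The CDF of the order statistic is the binomial tail given in the Lemma on the distribution of round duration (with $ref$ replaced by $j$). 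This tail is monotone decreasing in $j$, so $\text{Count}(t)$ is well defined as a maximum.
\item The tier model assumes that the fast tiers are commonly visible. Hence these $j^\ast$ vertices lie in the view of every honest replica, i.e., in the common core.
\end{itemize}
Combining Steps 1 and 2, the effective visible set has size at least $\max(ref,\text{Count}(W(ref,k-1)))=\Phi(ref,k)$.

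\textbf{Step 3: from visible vertices to commit probability.} Each vertex in the common core $S$ is observed by every honest replica at the decision point. It therefore satisfies the commit rule consistently and is committable. The leader is drawn uniformly from $n$ vertices, and the coin's Fairness and Unpredictability properties mean the adversary cannot adapt $S$ to the coin's outcome. Hence
\[
P\ge \Pr[v_l\in S]=\frac{|S|}{n}\ge \frac{\Phi(ref,k)}{n},
\]
with probability interpreted at confidence level $c$.

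\textbf{Main obstacle.} The hardest step is justifying, in Step 2, that "arrived by time $t$ at the local replica" implies "observed by all honest replicas." The model uses independent quorum delays and ignores cross-replica view differences, so this cannot be derived from the network model alone. My approach is to state it explicitly as the common-tier property of the stable aggregate arrival assumption and invoke it there.

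A secondary issue is that the confidence-$c$ events for $H_{k-1}$ and for $d_{(j^\ast)}$ must hold jointly. I would sidestep this by interpreting the bound at confidence level $c$ throughout, in keeping with the modeling abstraction.
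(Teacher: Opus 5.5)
Your Steps 2 and 3 follow the paper's argument, which is essentially definitional. By construction of $W(ref,k-1)$ and $\text{Count}$, at least $\Phi(ref,k)$ first-round vertices are visible with confidence $c$ at the end of the wave window. The paper then adopts, as a conservative modeling assumption, that these visible vertices are the committable ones, and uniform leader selection gives $P\ge\Phi(ref,k)/n$.

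The gap is in how your Step 1 feeds your Step 3. Step 1 is a per-replica bound: each honest replica's round-$k$ vertex reaches at least $ref$ first-round vertices, so $|V_i|\ge ref$ for every $i$. Step 3 instead needs $|S|\ge\Phi(ref,k)$ for the common core $S\subseteq\bigcap_i V_i$. But $|V_i|\ge ref$ for all $i$ says nothing about the size of the intersection. The right panel of Fig.~\ref{fig:commoncore} shows this: the DAG obeys the $ref=f+1$ rule, every replica observes at least $ref$ first-round vertices, and yet the common core is empty. So whenever the maximum in $\Phi(ref,k)$ is attained by $ref$ rather than by $\text{Count}(W(ref,k-1))$, the causal-chain argument does not justify your chain $P\ge|S|/n\ge\Phi(ref,k)/n$.

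There are two ways to repair this. First, you can apply the shared-arrival (common-tier) abstraction you already invoke in Step 2 to the $ref$ branch as well. Under a single aggregate arrival process, the $ref$ fastest first-round vertices that each replica waits for are the same vertices, hence commonly visible. Second, you can follow the paper and drop the derivation of committability from common-core membership, taking ``visible within the window implies committable'' directly as the modeling assumption.

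In both cases the $ref$ term, like the $\text{Count}$ term, follows from the model's abstraction rather than from the DAG reference rule. This is the point of your own ``main obstacle'' remark; it just has to be applied to both branches of the maximum, not only to the second.
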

\begin{proof}
By construction, after the wave window $W(ref,k-1)$, at least $\Phi(ref, k)$ first-round vertices are visible with confidence $c$. Under the conservative assumption that only these visible vertices are safely committable, and the leader is chosen uniformly from all $n$ candidates, the probability that the chosen leader belongs to this visible set is at least $\Phi(ref, k)/n$.
\end{proof}

\begin{theorem}[Latency under heterogeneous delay]
Let $W(ref,k)$ denote the $c$-quantile of the \textit{wave duration} covering $k$ rounds. Under the above model, the protocol latency is approximated by
$$\mathrm{L}
\approx
\frac{W(ref,k)}{P}.$$
Using the lower bound on $P$ from Theorem~\ref{thm:commit-prob}, this yields a latency estimate in terms of $W(ref,k)$ and $\Phi(ref, k)$.
\end{theorem}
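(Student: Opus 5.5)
The plan is to unpack the statement into the two ingredients it combines, the commit-probability bound of Theorem~\ref{thm:commit-prob} and the wave-window quantile $W(ref,k)$, and then treat the commit process across waves as a sequence of independent Bernoulli trials. For a single wave, with confidence $c$ the connectivity-accumulation phase finishes within $W(ref,k)$: this is the $c$-quantile of the sum of independent copies of $d_{(ref)}$, using the lemma on the distribution of round duration and the convolution in the wave-window definition. At the end of that window the coin elects a leader, which lies in the effectively visible set with probability at least $\Phi(ref,k)/n$ by Theorem~\ref{thm:commit-prob}. We take this as the per-wave success probability $P$.

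Next we model successive waves as i.i.d. trials. This is consistent with the simplification stated at the start of the heterogeneous model, which uses independent quorum delays and no cross-round causal dependencies. The coin's fairness and unpredictability properties make the leader choice in each wave independent of the DAG topology already formed. The index $N$ of the first successful wave is then geometric with parameter $P$, so $\mathbb{E}[N]=1/P$. The time to commit is $T=\sum_{i=1}^{N} W_i$, where $W_i$ is the duration of wave $i$. Each $W_i$ is approximated by its $c$-quantile $W(ref,k)$, the same "with confidence $c$" convention used throughout the model. Because $N$ is a stopping time with respect to the wave sequence, Wald's identity gives $\mathbb{E}[T]=\mathbb{E}[N]\cdot W(ref,k)=W(ref,k)/P$. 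We ignore the partial-wave offset between a transaction's arrival and the next wave start, and also the recursive-commit savings. Both are lower-order terms and are absorbed into the "$\approx$".

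Substituting the bound $P\ge \Phi(ref,k)/n$ then yields the stated estimate $L\lesssim n\,W(ref,k)/\Phi(ref,k)$, expressed in terms of $W(ref,k)$ and $\Phi(ref,k)$.

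The main obstacle is justifying the replacement of the random wave durations $W_i$ by a deterministic quantile. Strictly, Wald's identity involves $\mathbb{E}[W_i]$, not the $c$-quantile. The first thing to try is to state the result with $\mathbb{E}[H_k]$ in place of the quantile and note that, for log-normal sums at the confidence levels used, the two differ by a bounded factor. Failing that, we fall back on the paper's convention that every time quantity is read at confidence $c$, so the "$\approx$" is explicitly modulo this quantile approximation.
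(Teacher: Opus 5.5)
Your proposal is correct and follows the paper's own justification, which is heuristic: each wave costs $W(ref,k)$ and commits with probability $P$, so the expected number of waves until a commit is $1/P$, giving $L \approx W(ref,k)/P$. Substituting $P \ge \Phi(ref,k)/n$ from Theorem~\ref{thm:commit-prob} then gives $L \lesssim n\,W(ref,k)/\Phi(ref,k)$; the paper offers nothing beyond this one-line argument, so your geometric/Wald formalization and your quantile-versus-mean remark are refinements the paper simply absorbs into the ``$\approx$''.
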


\subsection{Random-Delay Network Model (Stochastic case).} \label{app: random modeling}
In the random network model, we assume that in each round $r$, an honest replica references a uniformly random subset of $ref$ replicas from the preceding round $r-1$. 

Let $x_r \in \{0, \dots, n\}$ be the number of vertices in round $r$ who possess a causal path to the leader $v$. The growth of leader visibility depends on the probability $q_r$ that a vertex in round $r$ hits at least one successful vertex (\ie, those with a path to $v$) from the preceding round. We define $q_r$ as a piecewise function:

$$q_r = \begin{cases} \mathit{ref}/n, & r=2 \\ 1 - \binom{n-x_{r-1}}{\mathit{ref}} \big/ \binom{n}{\mathit{ref}}, & r > 2 \end{cases}$$

To formalize the recursive transition, we define the Propagation Operator $\mathcal{B}_r$ as the binomial probability mass function:
$$\mathcal{B}_r = \text{Bin}(x_r; N_r, q_r)$$ where the trial size $N_r$ is $n$ for intermediate propagation rounds ($r < k$) and $\mathit{ref}$ for the final decision round ($r = k$).

A leader is committed if its support in the final decision round $k$ satisfies the threshold $x_k \ge f+1$. Using the law of total probability, the commit probability $P$ is derived by marginalizing over all possible intermediate states:

1)  Two-round wave ($k=2$): In a 2-round wave, round 2 is the decision round. Each vertex in round 2 has $ref$ parents, and there is exactly 1 leader in round 1. The probability that a round-2 vertex connects to the leader is $q_2= ref/n$: 
$$P(2) = \sum_{x_2=f+1}^{\mathit{ref}} \mathcal{B}_2$$

2) Three-round ($k=3$): In a 3-round wave, round 2 acts as a propagation layer, and the decision is made in round 3 based on the accumulated visibility $x_2$:

$$P(3) = \sum_{x_2=0}^{n} \left[\mathcal{B}_2 \sum_{x_3=f+1}^{\mathit{ref}} \mathcal{B}_3 \right]$$

3) Generalized Multi-round Model ($k \ge 4$): For a wave of length $k$, the leader propagates through $k-2$ intermediate rounds before the final quorum check:
$$P(k)
=\sum_{x_2=0}^{n} \mathcal{B}_2 \sum_{x_3=0}^{n} \mathcal{B}_3\cdots\sum_{x_{k-1}=0}^{n} \mathcal{B}_{k-1}\sum_{x_k=f+1}^{\mathit{ref}} \mathcal{B}_k
$$

In this random delay model, we quantify the expected latency by the number of rounds required for a vertex to be committed, expressed as:
$$
E(rounds) = \frac{k}{P(k)}
$$

We evaluate the commit probability $P(k)$ and the expected rounds $E[rounds]$ required across various $f$ under two typical configurations, as summarized in Table \ref{tab:P_commit} and Table \ref{tab:E_rounds}.

\begin{table}[h]
\centering
\caption{$P(k)$ for different $ref$ configurations.}
\label{tab:P_commit}
\begin{tabular}{ccccc}
\toprule
$ref$ & $f$ & $k=2$ & $k=3$ & $k=4$ \\
\midrule
\multirow{3}{*}{$2f+1$}
 & 1  & 0.8437 & 0.9887 & 0.9955 \\
 & 3  & 0.8739 & 0.9999 & 0.9999 \\
 & 10 & 0.9555 & 0.9999 & 0.9999 \\
\midrule
\multirow{3}{*}{$f+1$}
 & 1  & 0.2500 & 0.6354 & 0.8284 \\
 & 3  & 0.0256 & 0.6614 & 0.9744 \\
 & 10 & $\approx$ 0      & 0.9279 & 0.9999 \\
\bottomrule
\end{tabular}
\end{table}

\begin{table}[h]
\centering
\caption{$E(\textit{rounds})$ for different $ref$ configurations.}
\label{tab:E_rounds}
\begin{tabular}{ccccc}
\toprule
$ref$ & $f$ & $k=2$ & $k=3$ & $k=4$ \\
\midrule
\multirow{3}{*}{$2f+1$}
 & 1  & 2.37   & 3.03 & 4.01 \\
 & 3  & 2.28   & 3.00 & 4.00 \\
 & 10 & 2.09   & 3.00 & 4.00 \\
\midrule
\multirow{3}{*}{$f+1$}
 & 1  & 8.00         & 4.72 & 4.82 \\
 & 3  & 78.12     & 4.53 & 4.10 \\
 & 10 & --     & 3.23 & 4.00 \\
\bottomrule
\end{tabular}
\end{table}

\subsection{Adversarial Network Model (Worst case)}\label{subsec: worstcase}
In the adversarial network model, an asynchronous scheduler may manipulate the DAG topology to minimize the visibility of vertices. We define $n = 3f+1$ as the total number of replicas, and $ref$ as the number of references per vertex; the number of vertices in round $i$ is defined as $n_i$. A vertex $v \in R_1$ is committable if it is reachable from at least $f+1$ distinct vertices in the decision round $R_k$. 

To evaluate the liveness of the protocol under adversarial scheduling, we categorize the vertices in the first round ($R_1$) into two disjoint sets based on their guaranteed visibility from the decision round $R_k$:

\begin{packeditemize}
    \item Guaranteed set ($G_k$): the set of vertices in $R_1$ that are guaranteed to satisfy the commit rule under any admissible adversarial DAG topology; let $g_k = |G_k|$ denote its size.
    \item Uncertain set ($U_k$): the set of remaining vertices in $R_1$, whose committability depends on the realized topology and is therefore not guaranteed in the worst case. For the purpose of establishing a lower bound, we conservatively treat these vertices as non-committable.
\end{packeditemize}

\begin{lemma}
    Under the adversarial model with $n=3f+1$ and a wave length of $k=2$, if each vertex references $ref \ge n-f$ parents, then there are at least $f+1$ vertices in $R_1$ guaranteed to satisfy the commit rule.
\end{lemma}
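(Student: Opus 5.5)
We argue by double counting the edges between $R_2$ and $R_1$, followed by an averaging step. In the worst case the adversary controls which parents each round-2 vertex picks, so we assume only what the model guarantees. There are $n_2 \ge n-f = 2f+1$ round-2 vertices (we take the honest ones; at least $2f+1$ are eventually delivered). Each of them references at least $ref \ge 2f+1$ distinct vertices of $R_1$. A vertex $u\in R_1$ satisfies the commit rule at $k=2$ exactly when its in-degree $s(u)$ from $R_2$ is at least $f+1$. The goal is to show that any admissible bipartite topology forces at least $f+1$ vertices of $R_1$ to have $s(u)\ge f+1$.

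\textbf{Step 1 (total support).} Counting edges from the round-2 side gives
\[
\sum_{u\in R_1} s(u) \;\ge\; n_2\cdot ref \;\ge\; (2f+1)^2 .
\]

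\textbf{Step 2 (adversarial absorption bound).} Suppose, for contradiction, that at most $f$ vertices of $R_1$ have $s(u)\ge f+1$. Those vertices absorb at most $n_2 \le n$ edges each, giving at most $f\,n_2$ edges in total. The remaining $n-f=2f+1$ vertices of $R_1$ have $s(u)\le f$ each, so they absorb at most $(2f+1)f$ edges. The total number of edges is therefore at most $f\,n_2 + f(2f+1)$.

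\textbf{Step 3 (contradiction).} Setting $n_2=2f+1$ (the binding case, since only $2f+1$ round-2 vertices are needed to make the decision), Step 2 gives at most $2f(2f+1)$ edges. Step 1 requires at least $(2f+1)^2 = 2f(2f+1)+(2f+1)$ edges. The shortfall of $2f+1>0$ contradicts the assumption, so at least $f+1$ vertices of $R_1$ lie in $G_2$.

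\textbf{Main obstacle.} The delicate part is fixing $n_2$ correctly. If all $n$ round-2 vertices are counted, the heavy vertices can each absorb up to $n$ edges, and the inequality becomes $(3f+1)(2f+1)$ versus $f(3f+1)+f(2f+1)$. This still leaves a gap of $(3f+1)(f+1)-f(2f+1)>0$, so the argument works for general $n_2$ as well. The bound should be checked with $n_2\cdot ref$ against $f\,n_2 + (n-f)f$, which reduces to $n_2(ref-f) > (2f+1)f$. This holds whenever $n_2\ge 2f+1$ and $ref\ge 2f+1$, because then $n_2(ref-f)\ge(2f+1)(f+1)$.

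Finally, we note that the guarantee is topology-independent: no admissible adversarial choice of parents can escape it. This is exactly the definition of $G_2$, and it yields the $f+1$ entry of Table~\ref{tab:worst_case_s_size}.
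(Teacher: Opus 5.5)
Your proof is correct and follows the paper's argument: the paper also double-counts links from round 2 into $R_1$, caps each committable vertex by the number of round-2 vertices considered and each non-committable one by $f$, and takes $n_1=3f+1$ as the worst case. The differences are cosmetic: the paper fixes a set of exactly $ref$ round-2 vertices, so its balance reads $g_k\cdot ref+(n_1-g_k)f\ge ref^2$, and it solves for $g_k$ directly rather than arguing by contradiction with a general $n_2\ge 2f+1$.
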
\label{app: lmk2}

\begin{proof}
Given that each vertex has a minimum reference count $ref$ and $n_1$ ranges within $[2f+1, 3f+1]$. Consider any set of $ref$ vertices in $R_2$, the total number of links from the set to the first round vertices is at least $ref \cdot ref$. If a vertex is in $G_k$, it receives $\ge f+1$ links, while a vertex in $U_k$ receives $\le f$ links. The maximum number of links from vertices in the set to each vertex in the first round is $ref$; therefore, a vertex in $G_k$ receives at most $ref$ links, while a vertex in $U_k$ receives at most $f$ links. Therefore, the total link balance is expressed as $g_k \cdot ref + (n_1 - g_k) \cdot f \ge ref \cdot ref$. To find the minimum bound, we set $n_1 = 3f+1$. Substituting $ref \ge n-f$:

$$g_k \ge \frac{f^2 + 3f + 1}{f+1} \ge f+1$$

This completes the proof. Similarly, we observe how $g_k$ varies with different values of $ref$ under $k=2$:

1) If $ref \ge f+1$, then $g_k \ge -2f^2+f+1 \le 0$ for $f \ge 1$, providing no liveness guarantee.

2) If $ref \ge 1.8f + 1$, then $g_k \ge 1$, which is the minimum requirement for liveness.

\end{proof}

\begin{lemma}
    Under the adversarial model with $n=3f+1$ and a wave length of $k=3$, if each vertex references $ref \ge n-f$ parents, then there are at least $1.5f$ vertices in $R_1$ guaranteed to satisfy the commit rule. \label{app: lmk3}
\end{lemma}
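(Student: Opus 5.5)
The plan is to reduce the three-round question to a two-round counting argument, in the same style as the proof of Lemma~\ref{app: lmk2}. The reduction is the supporter criterion from the proof sketch of Theorem~\ref{the: common core}. Call $w\in R_2$ a \emph{supporter} of $u\in R_1$ if $w$ has a path to $u$. Suppose $u$ has at least $n-ref+1$ supporters in $R_2$. Every vertex of $R_3$ references at least $ref$ vertices of $R_2$. Since $|R_2|\le n$, these $ref$ parents cannot all avoid a set of size $n-ref+1$. Hence every vertex of $R_3$ reaches $u$. With $ref\ge n-f=2f+1$ the supporter threshold becomes $f+1$. Also $|R_3|\ge 2f+1\ge f+1$, so such a $u$ is reachable from at least $f+1$ vertices of $R_3$ and satisfies the commit rule. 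It therefore suffices to lower-bound the number $g$ of round-1 vertices that have at least $f+1$ supporters in $R_2$; these form a subset of $G_3$. Vertices with at most $f$ supporters go into $U_3$ and are conservatively treated as non-committable.

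The second step is a link-balance count between $R_1$ and $R_2$. Let $n_2=|R_2|$. Each vertex of $R_2$ directly references at least $ref$ distinct round-1 vertices, so the number of direct edges from $R_2$ to $R_1$ is at least $n_2\cdot ref$. A vertex counted in $g$ receives at most $n_2$ such edges, one from each round-2 vertex. A vertex outside receives at most $f$. This gives
\[
g\cdot n_2+(n_1-g)\, f\ \ge\ n_2\cdot ref .
\]
I then minimize over the adversary's choices. Taking $n_1=3f+1$ is the worst case, as in Lemma~\ref{app: lmk2}. I take $n_2=3f+1$, on the grounds that the guarantee is evaluated once round 2 is fully formed. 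Then $g(2f+1)\ge(3f+1)(2f+1)-(3f+1)f=(3f+1)(f+1)$. Hence $g\ge \frac{(3f+1)(f+1)}{2f+1}\ge 1.5f$, which is the claimed bound, and increasing $ref$ only strengthens it.

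The main obstacle is justifying the choice $n_2=n$. With only $n_2=2f+1$ round-2 vertices, the same count yields only $g\ge (f^2+3f+1)/(f+1)\ge f+1$, which is weaker than $1.5f$. I would handle this in one of two ways. The first is to argue that all honest replicas eventually place vertices in $R_2$, so the worst case for committability at the decision point is evaluated with a complete round 2, consistent with the convention $n_1=3f+1$ already used for Lemma~\ref{app: lmk2}. If that is not enough, the second is to sharpen the criterion. A round-1 vertex with $s\le f$ supporters can still be reached by $f+1$ round-3 vertices, and the adversary must place the $R_3$ references to avoid this. Bounding how many references $R_3$ can absorb while keeping such vertices below the $f+1$ reach threshold would recover the extra $\approx 0.5f$ without assuming $n_2=n$.
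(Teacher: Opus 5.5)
Your approach matches the paper's: a pigeonhole supporter criterion, then a link-balance count between $R_1$ and $R_2$. At $n_1=n_2=3f+1$ your inequality coincides with the paper's and gives the same value $(3f+1)(f+1)/(2f+1)\ge 1.5f$.

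\textbf{The gap: choosing $n_2=3f+1$.} With your fixed threshold, $n_2=3f+1$ is not the adversary's worst case. Solving $g\,n_2+(n_1-g)f\ge n_2\,ref$ gives $g\ge (n_2\,ref-n_1 f)/(n_2-f)$. The derivative of this in $n_2$ has numerator $f(n_1-ref)\ge 0$, so the bound is increasing in $n_2$. The adversary therefore prefers $n_2=2f+1$, where, as you note, you get only about $f+1$, which falls below $1.5f$ once $f\ge 4$.

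Your first fix does not work. Only the $2f+1$ honest replicas are guaranteed to produce round-2 vertices; the $f$ Byzantine ones may never do so, so $n_2=2f+1$ is admissible. The analogy with $n_1=3f+1$ in the $k=2$ lemma also fails. There, $n_1=3f+1$ is used because that bound is decreasing in $n_1$, so it is the genuine minimum, not a convention that rounds are complete. Your second fix is only a sketch.

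\textbf{The missing step.} In the pigeonhole step you weakened $|R_2|=n_2$ to $|R_2|\le n$; keep $n_2$ instead. (The main-text sketch states the threshold as $n-ref+1$, which is where this slack comes from.) A round-1 vertex with at least $n_2-ref+1$ supporters is hit by the $ref$ parents of every $R_3$ vertex. Hence every non-guaranteed vertex has at most $n_2-ref$ supporters, and the balance becomes
\[
g\,n_2+(n_1-g)(n_2-ref)\ge n_2\,ref \quad\Longrightarrow\quad g\ge n_1+n_2-\frac{n_1 n_2}{ref}.
\]
Since $n_1,n_2\ge ref$ (each vertex has $ref$ distinct parents in the previous round), this bound is nonincreasing in $n_1$ and in $n_2$, and increasing in $ref$. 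So the worst case really is $n_1=n_2=3f+1$ with $ref=2f+1$, giving $(3f+1)(f+1)/(2f+1)\ge 1.5f$.

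For small $n_2$ the sharper threshold is what rescues the count. For example, at $n_2=2f+1=ref$ a single supporter suffices and $g\ge 2f+1$. This is exactly the paper's proof.
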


\begin{proof}
    Given that each vertex has a minimum reference count $ref$ and the round sizes $n_1, n_2$ range within $[2f+1, 3f+1]$, the total number of links from $R_2$ to $R_1$ is $n_2 \cdot ref$. In a wave of length $k=3$, a vertex $v \in R_1$ belongs to the set $G_k$ if it receives $\ge n_2 - ref + 1$ links from $R_2$. Since each vertex in $R_3$ references at least $ref$ parents from $R_2$, the principle of quorum intersection dictates that any selection of $ref$ vertices in $R_2$ must include at least one supporter of $v$. This ensures the existence of a causal path from every vertex in $R_3$ to $v$. Conversely, if a vertex receives $\le n_2 - ref$ links, it is classified into the set $U_k$. A vertex in $G_k$ receives at most $n_2$ links, while a vertex in $U_k$ receives at most $n_2 - ref$ links. Therefore, the total link balance is expressed as $g_k \cdot n_2 + (n_1 - g_k) \cdot (n_2 - ref) \ge n_2 \cdot ref$. Solving for $g_k$ yields the general bound: 
    $$g_k \ge n_1 + n_2 - \frac{n_1 \cdot n_2}{ref}$$ 
    
    To find the minimum bound, we set $n_1 = n_2 = 3f+1$. Substituting $ref \ge n-f$:
    $$
    g_k\ge \frac{(3f+1)(f+1)}{2f+1} \ge 1.5f
    $$
This completes the proof. Similarly, we observe how $g_k$ varies with different values of $ref$ under $k=3$:

1) If $ref \ge f+1$, then $g_k \ge \frac{(3f+1)(1-f)}{f+1} \le 0$ for $f \ge 1$, providing no liveness guarantee.

2) If $ref \ge 1.5f + 1$, then $g_k \ge 1$, which is the minimum requirement for liveness.
    
3) If $ref \ge 1.8f + 1$, then $g_k \ge f+1$.
\end{proof}

\begin{lemma}
    Under the adversarial model with $n=3f+1$ and a wave length of $k=4$, if each vertex references $ref \ge n-f$ parents, then there are at least $2f+1$ vertices in $R_1$ guaranteed to satisfy the commit rule.
\end{lemma}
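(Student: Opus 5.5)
The plan follows the counting template of Lemma~\ref{app: lmk3}, with one extra propagation round. A vertex $v\in R_1$ enters $G_4$ as soon as every vertex of $R_4$ has a path to it; $f+1$ such vertices of $R_4$ then certainly exist. Since each vertex of $R_4$ references at least $ref$ vertices of $R_3$, quorum intersection gives a sufficient condition: the set $A_3(v)\subseteq R_3$ of round-3 vertices reaching $v$ has $|A_3(v)|\ge n_3-ref+1$. This is exactly the argument used for $k=3$, moved up one round.

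The first step is to translate this condition down to round 2. Let $A_2(v)\subseteq R_2$ be the round-2 supporters of $v$ and $a=|A_2(v)|$. Every round-3 vertex whose $ref$ parents meet $A_2(v)$ reaches $v$. If $a\ge n_2-ref+1$, every round-3 vertex reaches $v$, so $v$ is in $G_4$ already, and even in $G_3$. The second step is to run the link-balance argument of Lemma~\ref{app: lmk3} on the edges from $R_2$ to $R_1$. That lemma's inequality says that with $n_1=n_2=3f+1$ and $ref=2f+1$, at least $\lceil (3f+1)(f+1)/(2f+1)\rceil\ge 1.5f$ vertices of $R_1$ are reached by at least $f+1$ vertices of $R_2$. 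These vertices are "full" already at round 3.

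The third step, and the main obstacle, is to gain the remaining roughly $0.5f$ vertices from the extra round. I would use a two-layer balance. Call $v$ \emph{heavy} at round $t$ if $|A_t(v)|\ge n_t-ref+1=f+1$. The round-3 balance (edges from $R_3$ to $R_2$) shows that at least $1.5f$ vertices $u\in R_2$ are heavy at round 3, so they are reached by every vertex of $R_4$. Any $v\in R_1$ that is a parent of some such heavy $u$ is then reached by all of $R_4$ and lies in $G_4$. So $G_4$ contains the union of the parent sets of the heavy round-2 vertices.

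Each round-2 vertex has at least $ref=2f+1$ parents, so this union already has size at least $2f+1$. That gives $g_4\ge 2f+1$ with $n_1=3f+1$. The point to check carefully is that the adversary cannot make the heavy set empty; this is exactly what the round-3 balance rules out, because $1.5f\ge1$ for $f\ge1$.

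If this parent-union route works, it is cleaner than the bare counting and yields the table entry directly. I would then redo the same two-layer argument for $ref\ge 1.8f+1$ and $ref\ge 1.5f+1$, using the corresponding $k=3$ bounds as the nonempty-heavy guarantee, to fill the remaining column entries of Table~\ref{tab:worst_case_s_size}.
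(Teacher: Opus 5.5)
Your proposal is correct and follows essentially the paper's route. Both arguments apply the $k=3$ bound to rounds $R_2$--$R_4$ to get a non-empty set of round-2 vertices that every vertex of $R_4$ reaches, then take the parents of one such vertex, which gives at least $ref=2f+1$ vertices of $R_1$ reached by all of $R_4$; your remark that only non-emptiness of the heavy set matters (not its size $1.5f$) is exactly what the paper's ``maximal overlap'' step amounts to.
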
\label{app: lmk4}

\begin{proof}
We treat vertices in $R_1$ as the input to $R_2$, and the subsequent process from $R_2$ to $R_4$ as a wave of length $k=3$. According to Lemma \ref{app: lmk3}, when $ref \ge n-f$, there are at least $g_k \ge 1.5f$ vertices in $R_2$ that are guaranteed to be part of the common core. Let this set of vertices be $S \subset R_2$, where $|S| \ge 1.5f$. Each vertex $v \in S$ references at least $ref$ parents from $R_1$. To find the minimum number of vertices in $R_1$ that are included in the common core, we consider the worst-case scenario where the reference sets of all vertices in $S$ overlap maximally. Even if all $1.5f$ vertices in $S$ reference the exact same set of parents in $R_1$ to minimize the total count, they collectively guarantee the availability of at least $ref$ vertices. Given $ref = 2f+1$, this ensures that at least $2f+1$ vertices in $R_1$ are causally connected to the common core of the $k=4$ wave. This completes the proof.
\end{proof}

\section{\sysname Correctness Analysis} \label{app:correctproof}

\subsection{Safety Proof}
\begin{lemma}\label{lm: safety1}
    If an honest replica commits a leader vertex $v$ in a wave $i$, then any leader vertex $v'$ committed by any honest replica $p$ in a future wave has a path to $v$.
\end{lemma}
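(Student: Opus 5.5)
We prove Lemma~\ref{lm: safety1} in two stages. First we show that \emph{every} vertex in the first round of wave $i+1$ has a path to $v$. Then we show that anything later inherits this reachability. Fix the wave structure of \sysname: wave $i$ occupies rounds $1$--$5$ (relative), with round $5$ being the first round of wave $i+1$. Suppose an honest replica $p_0$ directly commits the round-1 leader $v$ of wave $i$ along the regular path. Then in $p_0$'s local DAG there is a set $O$ of at least $f+1$ round-3 vertices, each with a path to $v$. Certified vertices are delivered via RBC, so by Agreement and Integrity every honest replica that holds a round-3 vertex of a given sender holds the same vertex with the same edges. Hence $O$ and the paths from $O$ to $v$ are the same objects in every honest view.

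\textbf{Step 1 (round-5 vertices see $v$).} We invoke the solid-step structure: the second solid step $[3,5]$ requires that every vertex in round $5$ be solid with respect to round $3$. That is, via progress edges and boost edges it reaches a set $Q$ of at least $cov=2f+1$ distinct round-3 vertices. We have $|O|+|Q|\ge 3f+2>n$, and there is at most one round-3 vertex per sender. So by quorum intersection $O\cap Q\neq\emptyset$, and composing paths gives a path from every round-5 vertex to $v$.

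The main obstacle is that the proposal rule (Lines~\ref{line: proposeV}--\ref{line: endproposeV}) lets a vertex advance with only $ref=f+1$ parents. So we must argue carefully that the coverage of $2f+1$ round-3 vertices is not merely achieved by \emph{some} vertex of each honest replica in $[3,5]$, but is preserved by \emph{every} vertex from round 5 onward. We will rely on the stated invariant that subsequent proposals causally extend the replica's solid vertex. A Byzantine round-5 vertex that is not solid would not be certified, since honest voters only vote for valid proposals, so it never enters any honest DAG. If the fast-coin path was used ($r_{obs}=2$), we repeat the argument with the extended solid step $[2,5]$ and the round-2 observers.

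\textbf{Step 2 (induction to later waves).} Let $v'$ be a leader of a wave $j>i$ committed by some honest $p$. Its round is at least the first round of wave $i+1$. If it lies in that round, Step 1 applies directly. Otherwise, any vertex in a later round has, via progress edges, a chain of parents descending round by round. Each vertex references at least one vertex in the previous round, whether through $ref\ge 1$ parents or through the extension of its solid vertex. So it has a path down to some vertex in round $5$ of wave $i$, which by Step 1 has a path to $v$. We formalize this as an induction on the round of $v'$, with Step 1 as the base case. This yields a path from $v'$ to $v$, and the lemma follows. Recursive commits are handled the same way, since they are triggered only by directly committed later leaders.
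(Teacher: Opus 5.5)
Your proof is correct and follows the paper's argument. Like the paper, you intersect the $f+1$ round-3 observers with the $2f+1$ round-3 vertices covered by every round-5 vertex (from the second solid step, or the extended $[2,5]$ step with round-2 observers on the fast-coin path), then extend to later waves by a simple induction over rounds. You also spell out points the paper only asserts: coverage holds for \emph{every} round-5 vertex, including Byzantine ones, via the causal-extension invariant and validity checks, and RBC non-equivocation is what makes the intersection count valid.
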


\begin{proof}
 Let $r_j$ denote the j-th round relative to the considered wave. In the standard process, a leader $v_l$ of wave $i$ is eligible for commitment if it is observed by at least $f+1$ vertices at $r_3$. Under the flexible coin mechanism, replicas may observe and commit $v_l$ from $r_2$, $r_3$, or $r_4$.
    
    \noindent\textbf{Case 1: Observation at $r_3$ or $r_4$.} We treat both as observation at $r_3$. Since the $r_3$ to $r_5$ interval constitutes a solid step, every vertex at $r_5$ (the first round of wave $i+1$) must have at least $2f+1$ links to $r_3$. By quorum intersection ($n=3f+1$), these $2f+1$ referenced vertices and the $f+1$ observers at $r_3$ share at least one common vertex. Thus, every vertex in the first round of wave $i+1$ has a path to $v_l$. 
    
    \noindent\textbf{Case 2: Observation at $r_2$.} When the $r_2$-$r_5$ solid step is active, the protocol allows observation at $r_2$. Following the same logic, every vertex at $r_5$ has at least $2f+1$ links to $r_2$. Quorum intersection ensures that every vertex in the first round of wave $w+1$ has a path to $v_l$.
    
    By a simple inductive argument, every vertex in every round of any wave higher than $w$ must contain a path to $v_l$.
\end{proof}

\begin{theorem} \label{lm: safety2}
    Any two honest replicas commit the same sequence of vertex leaders.
\end{theorem}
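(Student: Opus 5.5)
The plan follows the DAG-Rider-style safety argument: build on Lemma~\ref{lm: safety1} and show, by induction over waves, that the leader sequences of any two honest replicas are prefixes of one another. I take for granted the consistency guarantees of RBC and certification: for each (round, sender) pair there is at most one certified vertex, and any vertex in two honest local DAGs carries the same references in both. It follows that the causal history of a vertex, and the result of the deterministic ordering applied to it, are the same at every honest replica that holds the vertex. The common coin's agreement property likewise gives every honest replica the same leader identity $v_l$ for each wave, possibly $\bot$ if that vertex is never delivered.

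\textbf{Step 1: recursive commits are deterministic.} Suppose an honest replica $p$ directly commits leader $v$ of wave $w$. It then walks backward through the waves $w' < w$ down to its last committed wave and commits each leader $v_{w'}$ that is reachable from the most recently committed leader. Because this walk depends only on the causal history of $v$ and on the agreed coin outputs, every honest replica that eventually commits $v$, directly or recursively, recovers the same ordered list of leaders in waves below $w$, starting from any common committed wave.

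\textbf{Step 2: the two sequences agree.} Let $p_1$ and $p_2$ be honest, and let $w_1$ and $w_2$ be waves in which they directly commit leaders $v_1$ and $v_2$, with $w_1 \le w_2$. If $w_1 = w_2$, coin agreement and the one-certified-vertex-per-slot property give $v_1 = v_2$. If $w_1 < w_2$, Lemma~\ref{lm: safety1} shows that $v_2$ has a path to $v_1$. So when $p_2$ performs its recursive walk from $v_2$, it reaches wave $w_1$ and includes $v_1$. I then argue by induction on the index of directly committed leaders that both replicas' sequences up to wave $w_1$ coincide: that prefix is fully determined by the causal history of $v_1$, and by Step 1 it is reproduced identically inside $v_2$'s history. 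Consequently, one replica's committed leader sequence is always a prefix of the other's, which is the claim.

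\textbf{Main obstacle.} The delicate point is the recursive walk itself. In a wave $w'$ strictly between consecutive directly committed waves, $p_2$ may skip a leader that $p_1$ never directly committed, and we must show this cannot cause a disagreement. Lemma~\ref{lm: safety1} resolves this: if any honest replica had directly committed the leader of $w'$, every later committed leader would reach it, so $p_2$'s walk would not skip it. The flexible-coin observation rounds ($r_2$, $r_3$, $r_4$) also deserve care. I would simply rely on Lemma~\ref{lm: safety1} already covering all three cases, so that the reachability invariant holds no matter which trigger produced the direct commit.
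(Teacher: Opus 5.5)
Your proposal is correct and follows the same route as the paper. Both arguments rest on Lemma~\ref{lm: safety1}: any leader committed in a later wave has a path to an earlier committed leader, so no honest replica can skip a leader that another honest replica has committed. The paper's proof is only this one-step contradiction. Yours also makes explicit what it leaves implicit: uniqueness of the leader within a wave (coin agreement plus one certified vertex per slot), determinism of the recursive walk, and the induction that gives the prefix property. Your determinism step assumes the DAG-Rider-style chained walk, where each reachability check is made from the most recently committed leader. That is a sharper rule than the paper's informal description of recursive commit, and the sharper version is the one your argument needs.
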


\begin{proof}
    Suppose an honest replica $p_i$ commits leader vertex $v$ of wave $w$. Another honest replica $p_j$ commits leader vertex $v'$ of wave $w'$, where $w < w'$, but has not yet committed $v$. According to Lemma \ref{lm: safety1}, there must exist a path from $v'$ to $v$. This implies that $v$ is already in the causal history of $v'$ and must be committed by $p_j$ indirectly. This contradicts the assumption that $p_j$ could commit $v'$ while skipping $v$. Therefore, if a leader vertex is committed by one honest replica, no other replica can skip it, ensuring all honest replicas commit leader vertices in the same sequence.
\end{proof}

\subsection{Liveness Proof}
\begin{lemma} \label{lm: liveness1}
    For every wave $w$, there are at least f + 1 vertices in the first round of $w$ that satisfy the commit rule.
\end{lemma}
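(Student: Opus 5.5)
We prove Lemma~\ref{lm: liveness1} with a double-counting argument, following the proof of Lemma~\ref{app: lmk3}, but adapted to the solid-step structure of \sysname. Fix a wave $w$ and write $r_1,\dots,r_5$ for its rounds relative to the wave. The commit rule on the regular path is evaluated at $r_{obs}=r_3$. It requires the leader in $r_1$ to be reachable from at least $f+1$ vertices of $r_3$. The structural fact we use is that $[r_1,r_3]$ is a solid step with $cov=2f+1$. Hence every honest replica proposes, by round $r_3$, a solid vertex whose visible set $I(v,r_1)\cup B(v,r_1)$ contains at least $2f+1$ round-$r_1$ vertices. Subsequent proposals causally extend that solid vertex and so inherit this coverage.

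\textbf{Restricting to a clean bipartite structure.} We consider the set $T$ of round-$r_3$ vertices that are solid with respect to $r_1$. Certification and RBC Agreement/Validity give $|T|\ge 2f+1$ eventually, since the $2f+1$ honest replicas all produce one. For each $t\in T$ let $C(t)\subseteq R_1$ be the set of round-1 vertices it covers, so $|C(t)|\ge 2f+1$. The number of incidences is then
\[
\sum_{t\in T}|C(t)| \;\ge\; |T|(2f+1).
\]
Call $u\in R_1$ \emph{good} if it is reachable from at least $f+1$ members of $T$, and let $g$ be the number of good vertices. A good vertex absorbs at most $|T|$ incidences and a non-good one at most $f$. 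With $|R_1|\le 3f+1$ this gives
\[
g\,|T| + (3f+1-g)\,f \;\ge\; |T|(2f+1).
\]

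\textbf{Solving the inequality.} Taking $|T|=2f+1$ and rearranging yields $g(f+1)\ge (2f+1)^2-f(3f+1)=f^2+3f+1$. This gives $g\ge (f^2+3f+1)/(f+1) > f$, so $g\ge f+1$. This is the same computation as in the $k=2$ bound of Lemma~\ref{app: lmk2}, with $ref$ replaced by $cov$. We must check that larger $|T|$ only helps. Rewriting, $g(|T|-f)\ge |T|(f+1)-f(3f+1)$, and the resulting bound is nondecreasing in $|T|$ for $|T|\ge 2f+1$. So the minimum is at $|T|=2f+1$.

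\textbf{Main obstacle.} The delicate step is justifying that reachability counted via $C(t)$ coincides with the predicate $\Call{path}{v_l,v}$ for $v\in\mathit{DAG}[r_3]$ that is used in $\Call{tryCommit}{}$. It must also hold identically in every honest replica's local view. For this we rely on certification and RBC: a vertex's references, including boost edges, are fixed once it is certified, and any honest replica holding it holds its full causal history. Hence the $f+1$ observers of a good vertex are seen by every honest replica that has those vertices. We must also handle a subtlety: with $\sigma=2$ a replica's solid vertex might appear at $r_2$ rather than $r_3$. We handle it by noting that the replica's $r_3$ vertex extends the solid vertex and so inherits its coverage, which still places that replica's $r_3$ vertex in $T$.
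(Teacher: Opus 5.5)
Your proposal is correct and follows essentially the same route as the paper. Both double count links from $2f+1$ round-3 vertices, each covering at least $2f+1$ round-1 vertices, with committable round-1 vertices absorbing at most $2f+1$ links and the others at most $f$, which gives $h(f+1)\ge f^2+3f+1$ and hence $h\ge f+1$. You also restrict to solid round-3 vertices (including coverage inherited from a round-2 solid vertex) and check monotonicity in $|T|$; these slightly tighten the paper's version, which simply takes an arbitrary set of $2f+1$ round-3 vertices as covering $2f+1$ round-1 vertices.
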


\begin{proof}
    In the regular process, consider an arbitrary set of $2f+1$ vertices at round 3 in wave $w$. According to the solid step rule, each vertex must reference at least $2f+1$ vertices from round 1, resulting in a total number of links from round 3 to round 1 of at least $(2f+1)(2f+1)$. There are at most $n = 3f+1$ vertices in round 1. Let $h$ be the number of vertices that satisfy the commitment condition (receiving $\ge f+1$ links) and $l$ be the number of vertices that do not. To find the minimum $h$, we construct a worst-case scenario where each non-committable vertex receives the maximum possible $f$ links, while each committable vertex receives up to $2f+1$ links. The total number of links satisfies $h(2f+1) + l \cdot f \ge (2f+1)^2$, where $l = n - h$. Therefore, we have $h \ge f+1$, proving that at least $f+1$ vertices at round 1 satisfy the commitment condition.
\end{proof}

\begin{theorem}\label{lm: liveness2}
    Every transaction issued by a correct client is eventually committed.
\end{theorem}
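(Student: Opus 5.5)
The argument will have three parts: (a) every honest replica keeps advancing through rounds, so the DAG grows without bound; (b) each wave directly commits a leader with probability at least $(f+1)/n$, independently across waves; (c) any transaction from a correct client lands in some honest vertex, and that vertex ends up in the causal history of a committed leader.

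\textbf{Step (a): round progress.} Fix an honest replica in round $r$. By reliable broadcast validity, the certified round-$(r-1)$ vertices of all $2f+1$ honest replicas eventually reach it. In particular it eventually holds at least $f+1 = ref$ parents, or it can form a solid candidate by adding boost edges to any of the $2f+1$ honest target-round vertices it already sees. Either way, the guard in \Name{proposeVertex} is eventually satisfied. Within a solid step the replica can always reach $cov=2f+1$ coverage of round $r_{cov}$, since at least $2f+1$ honest target-round vertices are eventually delivered. Hence every honest replica proposes a solid vertex in each step, and the step conditions used by Lemma~\ref{lm: liveness1} hold. Induction on rounds then shows that every honest replica reaches every round and every wave.

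\textbf{Step (b): per-wave commit probability.} Lemma~\ref{lm: liveness1} gives at least $f+1$ first-round vertices satisfying the commit rule.

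The main obstacle is that this set must be fixed before the coin is revealed. I would use coin unpredictability: shares are released only when the regular or flexible trigger fires, which happens after the relevant round-3 (or round-2) vertices are in the DAG. Since RBC agreement makes the support of $v_l$ in $\mathit{DAG}[r_{obs}]$ identical across honest views once those vertices are delivered, the adversary cannot adapt the topology to the leader choice.

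A related subtlety comes from the flexible coin, where shares may be released early. I would argue that early release only re-times recovery. Regardless of when the coin is recovered, \Name{tryRegular} still evaluates the same round-3 rule, so the $(f+1)/n$ bound is preserved.

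By fairness, each wave therefore commits directly with probability at least $(f+1)/(3f+1)$, independently of earlier waves. Over infinitely many waves, infinitely many leaders are committed with probability $1$.

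\textbf{Step (c): inclusion of transactions.} A correct client's transaction reaches an honest replica, which batches it into some vertex $u$ in round $r$. By (a), $u$ is delivered to all honest replicas. Every honest vertex in any round after the step following $r$ covers $2f+1$ vertices of the target round. I would aim to show that $u$ is eventually reachable from all later honest vertices. If $u$ is not directly covered, the honest replica re-proposes the transaction in a later vertex until it is covered; I would treat this as standard re-proposal, as in DAG-Rider.

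Taking a committed leader in a sufficiently later wave, its round-1 vertex has a path to $u$. Lemma~\ref{lm: safety1} and Theorem~\ref{lm: safety2} then ensure that $u$ is ordered identically by all honest replicas. Hence the transaction is eventually committed with probability $1$.
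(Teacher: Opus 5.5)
Your three-part outline covers more ground than the paper's proof, but Step (b) is wrong at the point where the paper's proof does its work: the flexible coin.

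\textbf{The flexible coin.} You claim that early share release ``only re-times recovery'' and keeps the $(f+1)/n$ bound. That is false. The coin becomes reconstructable as soon as one honest replica releases its share, since the $f$ Byzantine shares supply the rest. Under the fast or flexible trigger, that replica may hold as few as $f+1$ round-2 or round-3 vertices when it does so. Every round-3 vertex that does not yet exist at that moment is then built with the leader already known to the adversary; under the fast trigger this is nearly all of them.

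Because $ref=f+1$, and $cov=2f+1$ can be met from the other $3f$ round-1 vertices, the adversary can keep a known leader that was not already a candidate below $f+1$ round-3 observers. New round-2 and round-3 vertices simply take parents that do not observe it and boost-reference other round-1 vertices. The $f+1$ committable vertices of Lemma~\ref{lm: liveness1} are then fixed after the coin and need not include $v_l$. So \Name{tryRegular} still checking ``the same round-3 rule'' guarantees nothing.

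RBC agreement does not rescue this. It fixes vertices already broadcast, not vertices still to be created.

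\textbf{The missing idea.} The paper's argument is that the early trigger fires only when the releasing replica already sees at least $\tau_{\mathit{cand}}$ first-round vertices meeting the commit rule. Take the first honest share release. That candidate set is fixed before the coin can be known, and support counts only grow. RBC agreement eventually puts the same observers into every honest view. Hence each wave directly commits with probability at least $\tau_{\mathit{cand}}/n>0$, which is why the paper says any $\tau_{\mathit{cand}}\ge 1$ suffices.

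The $(f+1)/n$ figure holds only on the regular path, where the first honest share goes out after the releaser's DAG already contains $2f+1$ round-3 vertices. A positive per-wave bound is all your final ``infinitely many waves'' step needs. So your conclusion survives once (b) is argued this way.

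\textbf{Step (c).} This step does not close as written either, although the paper's proof never attempts it; it stops at leader commitment. \sysname has no DAG-Rider-style weak edges: boost edges stay within the current wave, and no rule forces an honest replica to reference every delivered vertex.

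This allows the following schedule. The adversary delays one honest replica $p$'s vertices until everyone else has left the wave. Meanwhile the Byzantine replicas participate normally, so $ref$ and $cov$ are always met from the other $3f$ replicas. Then no other replica ever references $p$'s vertices, and $p$'s round-1 vertices never obtain $f+1$ round-3 observers. Re-proposing the transaction in $p$'s own later vertices therefore changes nothing.

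Routing the transaction through a single honest replica is not enough; you need an extra ingredient that gets it carried by at least $f+1$ honest proposers. Two options work: the client submits to $2f+1$ replicas, or honest replicas include any authenticated pending $T_{block}$, as the appendix's PVA mechanism permits. Quorum intersection with the $2f+1$ round-3 vertices that a round-5 vertex covers then places the transaction in the causal history of every later committed leader.
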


\begin{proof}
    Lemma \ref{lm: liveness1} ensures that at least $f+1$ vertices satisfy the commitment condition. When the flexible coin is employed, its prerequisite is that a replica has already locally observed at least $\tau_{\mathit{cand}}$ vertices from the first round that meet the commitment criteria; thus, committable vertices are guaranteed to exist. While $\tau_{\mathit{cand}}$ can be configured based on the system's optimism, any $\tau_{\mathit{cand}} \ge 1$ is sufficient to guarantee liveness. This completes the proof.
\end{proof}

\section{Optimizations} \label{app:dag structure}
\sysname introduces a flexible DAG growth mechanism to provide significant latency advantages in heterogeneous networks. Beyond this, however, high network heterogeneity further exacerbates other inherent inefficiencies in traditional DAG-based BFT designs. Specifically, inconsistent replica speeds often cause slower replicas to block faster ones, leading to substantial bandwidth underutilization. Furthermore, as illustrated in \figref{fig:data1}, transactions not promptly included in block 2 must wait for the subsequent proposal block 3, which increases queuing delays. While the flexible growth mechanism alleviates these issues by relaxing round advancement, a more fundamental solution is required to eliminate these barriers. To this end, we propose a flexible data structure specifically tailored for heterogeneous environments, which further decouples transaction dissemination from consensus metadata. The data structure is categorized into two types:

\bheading{$T_{block}$}: Transaction Blocks, represented as squares in Fig. \ref{fig:data2}, contain the raw transaction data and are disseminated via best-effort broadcast.

\bheading{Vertex}: Represented as circles in Fig. \ref{fig:data2}, these contain the digest of a $T_{block}$ and its Proof of Verifiable Availability (PVA), disseminated via Byzantine Reliable Broadcast (RBC).

\begin{figure}[!t]
    \centering
    \begin{subfigure}{0.48\linewidth}
        \centering
        \includegraphics[width=\linewidth]{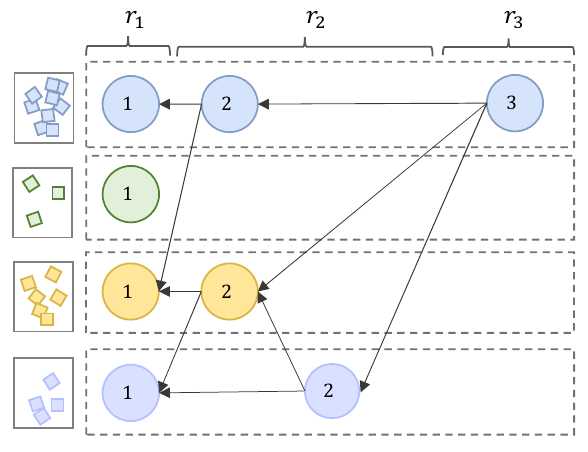}
        \caption{Rigid data integration.}
        \label{fig:data1}
    \end{subfigure}
    \hfill
    \begin{subfigure}{0.48\linewidth}
        \centering
        \includegraphics[width=\linewidth]{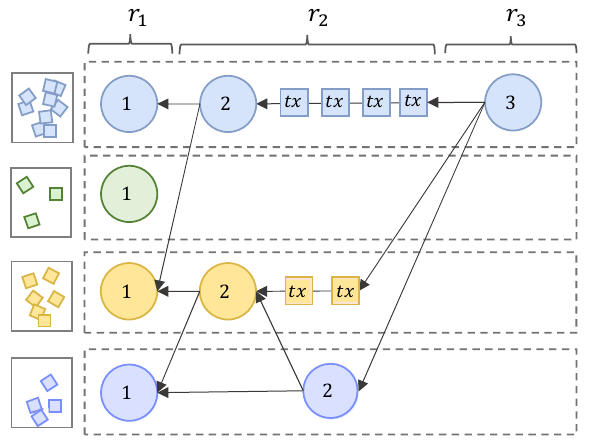}
        \caption{Flexible data allocation.}
        \label{fig:data2}
    \end{subfigure}
    \caption{Flexible data structure in $\sysname$.}
    \label{fig:data_structure}
\end{figure}

This design improves bandwidth utilization across replicas. A $T_{block}$ can be proposed immediately after broadcast, without waiting for the next vertex window or a designated proposer. To illustrate, consider a transaction arriving at Replica 3 (highlighted in orange in \figref{fig:data_structure}). Replica 3 packages it into a $T_{block}$ and broadcasts it. After collecting $f+1$ acknowledgments (ACKs), it aggregates them into a PVA and re-disseminates it. Once other replicas receive both the $T_{block}$ and its PVA, they can include the $T_{block}$ digest in their own vertices. For example, vertex Blue 3 incorporates two orange $T_{block}$s into the DAG.

Without this decoupled design, these $T_{block}$s would have to wait until Replica 3 issues its designated vertex (Orange 3) before entering the DAG. In WAN settings, if Orange 3 is delayed by transient network fluctuations, the $T_{block}$s incur substantial queuing delay. With our flexible construction, they join the DAG as soon as they are authenticated, reducing the chance of excessive delays caused by straggling replicas or unstable networks.

\bheading{Proof of Verifiable Availability (PVA).}
The replica broadcasts $T_{block}$ to all other replicas, and replicas vote to acknowledge delivery (ACKs). $f + 1$ ACKs constitute a PVA that guarantees that $T_{block}$ has been delivered by at least one honest replica. If a replica receives $T_{block}$ and its PVA, the replica can include the $T_{block}$ in its vertex (by including the digest and PVA of the $T_{block}$). By using PVA, if a vertex contains valid PVAs for each referenced $T_{block}$, it can be voted for in the RBC phase without waiting for $T_{block}$ contents. All missing $T_{block}$ can be fetched in the background.

\bheading{Reliable Broadcast (RBC)}. In \sysname, vertices reference the certified vertex from the preceding round. Each replica proposes and certifies at most one candidate per round. To certify a vertex, a replica follows a simplified RBC:
\begin{packedenumerate}
    \item To initiate the broadcast, a replica constructs and signs a vertex $v$ as its proposal. This vertex incorporates two key components: (1) a collection of $T_{block}$ digests along with their corresponding PVA harvested from the local mempool, and (2) several vertex certificates from preceding rounds that serve as boost or progress edges.
    \item Upon receiving a vertex from another peer, a replica first performs an equivocation check to ensure it is the first proposal received from that specific sender in the current round; if it is not, the vertex is discarded. If valid, the replica further verifies the included PVAs. Once the vertex passes all checks, the replica broadcasts a signed vote.
    \item The proposer of vertex $v$ waits until it collects $n-f$ votes for $v$, aggregates them into a vertex certificate, and then broadcasts this certificate to all replicas. Finally, a replica formally incorporates $v$ into its local DAG once it receives either the certificate for $v$ or at least $2f+1$ votes for $v$, provided that all certificates referenced by $v$ are already present in its local storage.
\end{packedenumerate}

\begin{figure}[!t]
    \centering
    \includegraphics[width=0.65\linewidth]{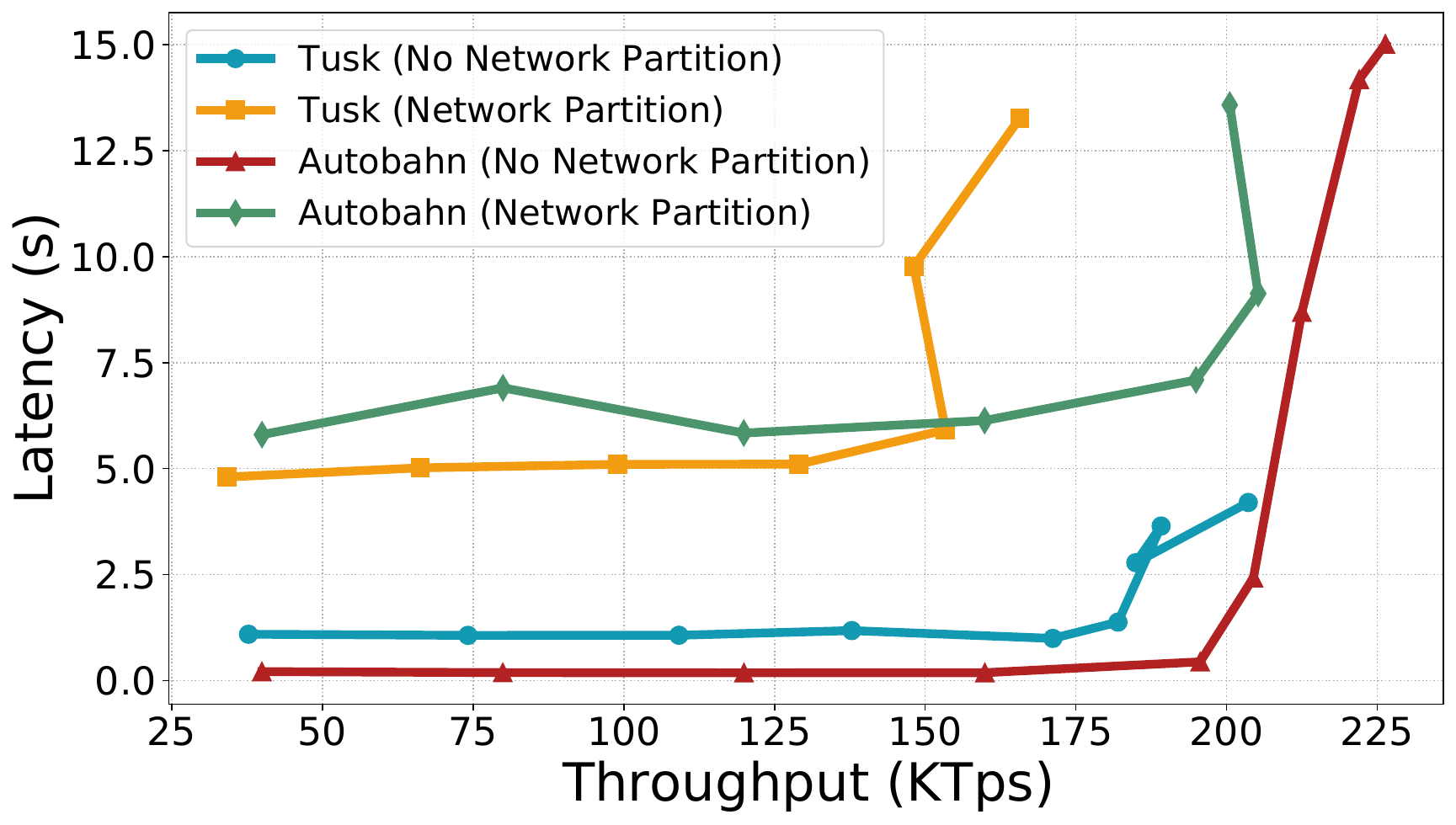}
    \caption{Asynchronous vs. partially synchronous DAG-based BFT under partition WAN conditions.}
    \label{fig:network_partition}
\end{figure}

\balance
\section{Performance under Heterogeneous Network} \label{app: Asyn_network}
Modern distributed systems, especially blockchains, operate over highly heterogeneous wide-area networks (WANs), where geographically distributed nodes experience diverse latencies, bandwidth asymmetries, and transient workload fluctuations. Prior studies~\cite{miller2016honey,guo2020dumbo} show that partially synchronous BFT protocols can suffer severe latency inflation or stalled progress when network conditions deviate from timely delivery assumptions. This challenge has motivated a line of asynchronous DAG-based BFT protocols~\cite{keidar2021all,huang2025chitu}, which seek stronger robustness while retaining high performance.

To further evaluate this robustness, we compare a representative asynchronous protocol (Tusk) with a partially synchronous protocol (Autobahn) under highly heterogeneous conditions. We inject large communication delays between selected node clusters to emulate partition-like WAN environments. Fig.~\ref{fig:network_partition} shows that Tusk exhibits a moderate throughput drop of about 13\% and a 4.5$\times$ increase in average latency. By contrast, Autobahn maintains similar throughput, but its latency rises far more sharply, by over 30$\times$ on average. These results highlight the stronger robustness of asynchronous designs under highly heterogeneous WAN conditions.

\end{document}